\declaretheorem[parent=section]{theorem}
\declaretheorem[numberlike=theorem, name=Proposition]{proposition}
\declaretheorem[numberlike=theorem, name=Example]{example}
\declaretheorem[numberlike=theorem, name=Lemma]{lemma}
\declaretheorem[numberlike=theorem, name=Definition]{definition}
\declaretheorem[numberlike=theorem, name=Corollary]{corollary}
\declaretheorem[numberlike=theorem, name=Question]{question}
\def\ket#1{|#1\rangle}
\title{\large On approximate quantum error correction for symmetric noise}
\author{Gereon Koßmann$^{1,*}$}
\address{$^1$ Institute for Quantum Information, RWTH Aachen University, Aachen, Germany}
\author{Julius A. Zeiss$^{1}$}
\author{Omar Fawzi$^{2}$}
\address{$^2$ Inria, ENS de Lyon, UCBL, LIP, 69342, Lyon Cedex 07, France}
\author{Mario Berta$^{1,3}$}
\address{$^3$ Department of Computing, Imperial College London, London, UK}
\address{$^*$ Corresponding author: kossmann@physik.rwth-aachen.de}
\begin{document}

\begin{abstract}
We revisit the extendability-based semi-definite programming hierarchy introduced by Berta {\it et al.}~[Mathematical Programming, 1-49 (2021)], which provides converging outer bounds on the optimal fidelity of approximate quantum error correction (AQEC). As our first contribution, we introduce a measurement-based rounding scheme that extracts inner sequences of certifiably good encoder-decoder pairs from this outer hierarchy. To address the computational complexity of evaluating fixed levels of the hierarchy, we investigate the use of symmetry-based dimension reduction. In particular, we combine noise symmetries\,---\,such as those present in multiple copies of the qubit depolarizing channel\,---\,with the permutational symmetry arising from the extendability of the optimization variable. This framework is illustrated through basic, but already challenging numerical examples that showcase its practical effectiveness. Our results contribute to narrowing the gap between theoretical developments in quantum information theory and their practical applications in the analysis of small-scale quantum error-correcting codes.
\end{abstract}

\maketitle


\section{Introduction}

\subsection{Motivation}

Quantum computing is increasingly advancing from theoretical exploration to practical implementation, with actual computations on quantum devices becoming more and more feasible. A core requirement for achieving reliable results is addressing the fundamental challenge of quantum error correction. Generally, the error correction problem seeks an encoder-decoder pair, denoted by $(\mathcal{E},\mathcal{D})$, that enables the faithful recovery of quantum information under a noisy channel $\mathcal{N}$. Recent years have seen initial experimental demonstrations and implementations of quantum error-correcting codes (see, e.g., \cite{Krinner_2022,acharya2024quantumerrorcorrectionsurface} and references therein).

Conceptually, we distinguish between zero-error codes as usually considered for quantum fault-tolerance and approximate quantum error correction (AQEC), which tolerates a small amount of error \cite{schumacher2001approximatequantumerrorcorrection, B_ny_2010}. AQEC is motivated by practical considerations in physical hardware, such as the cost of qubits and the varying frequencies of different error types. It can also be viewed from the perspective of quantum Shannon theory, where it is understood that allowing a small amount of error is conceptually strikingly different from zero-error information theory. For example, finite errors become a powerful tool for addressing finite-size rate considerations for the quantum channel capacity \cite{Tomamichel_2016}.

AQEC approaches treat noisy channels by solving an optimization problem, finding the threshold between exact error correction and the best achievable approximate value within a given model, as defined by a selected metric. Reimpell and Werner \cite{Reimpell_2005} pioneered a widely recognized method for channel fidelity which addresses this optimization using see-saw algorithms over encoder-decoder pairs, providing inner bounds (see also \cite{PhysRevA.75.012338,Kosut:2009aa,5429112,johnson2017qvectoralgorithmdevicetailoredquantum}). However, there is the challenge of local minima in see-saw approaches for non-convex problems, often causing the algorithm to converge to a flat region. In particular, there is no certificate for optimality. Nevertheless, the see-saw approach is particularly effective for a first educated guess for an encoder-decoder pair, because it iteratively optimizes over product states\,---\,the Choi matrices of the encoder and decoder\,---\,in a simple stepwise manner.

Mathematically, the AQEC problem corresponds to a bilinear optimization in non-commutative variables of fixed dimension and with the semi-definite Choi constraints from the encoder-decoder pair. Consequently, AQEC can be understood as an instance of a constrained quantum separability (cSEP) problem \cite{berta2021semidefinite}, with the term cSEP borrowed from \cite{ohst2024characterisingmemoryquantumchannel}. As a well-known tool, the set of separable quantum states (SEP) can be approximated from the outside in terms of $n$-extendable states, i.e., states with a permutation-invariant $n$-extension. This leads to the Doherty-Parrilo-Spedalieri (DPS) semi-definite programming (SDP) hierarchy \cite{doherty2004complete}. So, even though determining membership in the set of separable quantum states is NP-hard \cite{doherty2004complete,gurvits2003classical}, this framework then allows optimization over a fixed number of copies in the symmetric subspace, thereby yielding provable outer bounds that can be computed efficiently for every level of the hierarchy. Adapting the DPS framework to AQEC, however, introduces additional challenges due to the linear constraints mandated by Choi's theorem. In fact, since the DPS hierarchy relies on quantum de-Finetti theorems \cite{caves2002unknown} without linear constraints, it cannot address the requirements of AQEC or general cSEP problems \cite[Exam.~4.6]{berta2021semidefinite}. To overcome this limitation, novel quantum de-Finetti theorems were developed in \cite{berta2021semidefinite,brandao_harrow}, allowing the construction of SDP hierarchies incorporating linear constraints (also see our concurrent work~\cite{zeiss25}).


\subsection{Overview of results}

This work addresses several challenges associated with the SDP hierarchy from \cite{berta2021semidefinite}. First, proposes the development of a rounding process to bridge the gap between outer approximations and the inner see-saw methods from \cite{Reimpell_2005}. Second, it explores how to leverage symmetries of the noisy channel of the error model for dimension reduction. More specifically, there are the following contributions:\\
 
\begin{itemize}
    \item \textbf{Rounding outer solutions.} One natural way to combine the inner bounds from \cite{Reimpell_2005} with the outer bounds from \cite{berta2021semidefinite} is through their objective values. Namely, if the value produced by the see-saw method is close to that of the outer bound, this can serve as a form of optimality certification. However, the bilinear structure of the underlying optimization problem raises the question of whether outer solutions can be rounded to yield feasible inner points, in analogy to the Goemans-Williamson algorithm \cite{10.1145/227683.227684}, which rounds solutions from outer approximations of quadratic unconstrained binary optimization problems. We develop a quantum-inspired rounding technique based on measurements, which yields a sequence of AQEC codes with rigorous optimality guarantees (\autoref{thm:inner_bounds}). These AQEC codes may then, for example, serve as effective warm starts for see-saw optimization.\\
    
    \item \textbf{Combined symmetry reduction for AQEC.} The \emph{extendibility symmetry} of $n$-extendable states can be leveraged using representation-theoretic dimension reduction techniques from semi-definite programming \cite{Klerk2007ReductionOS}. For the problem of AQEC featuring the Choi constraints, this was carefully worked out by Chee {\it et al.}~\cite{chee2024efficient}. However, in practical scenarios one is often interested in the problem of encoding a single logical qubit into multiple identical and independently distributed (iid) noisy physical qubits, for instance, subject to single-qubit depolarizing noise. These noisy qubits naturally exhibit permutation invariance from the iid structure, which we refer to as \emph{iid-symmetry}. Additionally, individual qubits may possess covariance under the action of the single-qubit unitary group\,---\,as is the case with depolarizing channels. In evaluating fixed levels of the corresponding SDP hierarchies, it is then desirable to simultaneously exploit these different symmetries. We introduce a framework based on \emph{commuting group actions} to handle symmetries arising in the context of AQEC. By leveraging the representation theory of the symmetric group \cite{Sagan2001}, we enhance the feasibility of efficiently computing outer bounds for AQEC (\autoref{thm:small_SDPs}). In particular, we provide explicit examples involving the depolarizing channel to illustrate how the framework enables the practical computation of non-trivial outer bounds (\autoref{subsec:commutative_relax_examples})\,---\,specifically, those requiring at least one extension and a decent number of physical qubit. We note that such cases were previously intractable due to the overwhelming size of the associated optimization problems (\autoref{sec:numerical_results}).\\

    \item \textbf{Combining symmetries of optimization variable and objective function.} The AQEC problem opens the door for a general discussion of combined symmetry reduction in semi-definite programs. In \autoref{sec:framework}, we are concerned with the general question of having a semi-definite program where the objective function and the constraints are invariant with respect to one symmetry group, while the optimization variable is invariant under a second, generally unrelated group. We extend the symmetry reduction framework introduced in \cite{Klerk2007ReductionOS} to the more general situation of combined symmetry reduction and provide necessary and sufficient criteria for solving the SDP with respect to a new group built from the groups corresponding to the objective function and the optimization variable. We use the AQEC problem to demonstrate that it is indeed crucial to carefully adjust the action of the symmetry groups under consideration so that they satisfy the condition provided in this section. To illustrate this, we provide a concrete example in the language of AQEC where a misleading choice of group actions leads to an intractable problem for joint symmetry reduction. As a byproduct, we revisit our finding from \autoref{sec:commutative_relaxation} as the generic case of commuting group actions and thus combined symmetry reduction.\\
\end{itemize}

The remainder of this manuscript is organized as follows. In \autoref{sec:approx_quantum_error}, we revisit the AQEC framework, recall the reformulated bilinear optimization problem, and present a method for simplifying the convex analysis of AQEC. In \autoref{sec:inner_bounds}, we construct a rounding algorithm that transforms outer solutions into feasible inner points, with a corresponding convergence analysis. In \autoref{sec:symmetry_reduction}, we introduce the symmetry reduction framework and discuss the general use of positive, but not completely positive, maps as separability criteria. We also motivate why combining symmetries can much simplify actual computations at higher levels of the SDP hierarchy from \cite{berta2021semidefinite}. \autoref{sec:commutative_relaxation} introduces a functional approach to symmetry reduction for AQEC, yielding convergent outer bounds with reduced dimensionality. In \autoref{sec:numerical_results}, we discuss the practical implementation of our methods and present proof-of-principle numerical results. Finally, in \autoref{sec:framework} we generalize the arguments for combining symmetries in SDPs to arbitrary compact subgroups of the unitary group $\mathcal{U}(\mathcal{H})$. We emphasize that \autoref{sec:inner_bounds} can similarly be found for fixed-size, two player free non-local games in our concurrent work~\cite{zeiss25}.


\section{Approximate quantum error correction (AQEC)}
\label{sec:approx_quantum_error}

\subsection{Framework}

This section introduces the AQEC framework as in \cite{berta2021semidefinite}. We consider a general noisy channel $\mathcal{N}$ as the source of errors and seek optimal encoder-decoder pairs $(\mathcal{E},\mathcal{D})$ that approximately preserve a logical subspace, rendering it resilient to noise. Our goal is to optimize over encoding and decoding operations that mitigate the effects of the noise introduced by the channel, ensuring reliable information transmission. This approach allows us to frame error correction as an optimization problem, where the objective is to find the best possible encoding and decoding strategy to maximize error resilience. Thus, in abstract terms, the error correction process can be described by the following sequence of channel operations:
\begin{equation}\label{eq:setting_error_correction}
\begin{aligned}
    \mathcal{S}(\mathcal{H}_L) \stackrel{\mathcal{E}}{\longrightarrow} \mathcal{S}(\mathcal{H}_P) \stackrel{\mathcal{N}}{\longrightarrow} \mathcal{S}(\mathcal{H}_P) \stackrel{\mathcal{D}}{\longrightarrow} \mathcal{S}(\mathcal{H}_L).
\end{aligned}
\end{equation}
In \eqref{eq:setting_error_correction}, we consider a noisy channel $\mathcal{N}$ acting on a physical system $\mathcal{H}_P$ and aim to identify an encoder channel $\mathcal{E}$ and a decoder channel $\mathcal{D}$ that effectively mitigate the impact of noise introduced by $\mathcal{N}$. Our objective is to design these channels so that the encoded information is approximately robust against general noise, ensuring the original logical information is accurately reconstructed after transmission. By approximating a noiseless transmission through this sequence of channels, we can safeguard quantum information despite the presence of errors, as defined more formally next.
Given an encoder-decoder pair $(\mathcal{E},\mathcal{D})$, we require a suitable metric to evaluate the proximity of the composed channel $\mathcal{D} \circ \mathcal{N} \circ \mathcal{E}$ to the identity on $\mathcal{S}(\mathcal{H}_L)$:
\begin{align}\label{eq:equation_approximate_equality}
    \mathcal{D} \circ \mathcal{N} \circ \mathcal{E} \approx \operatorname{id}_L.
\end{align}
Two common metrics are often considered to evaluate the effectiveness of error correction: the diamond norm and the channel fidelity. The diamond norm serves as a one-shot measure in channel discrimination tasks and can be seen as a worst-case bound for \eqref{eq:equation_approximate_equality}. In contrast, channel fidelity measures the accuracy of entanglement preservation and transmission, providing insight into how well average quantum information can be transmitted. Formally, the channel fidelity is defined as
\begin{equation}\label{eq:definition_channel_fidelity}
\begin{aligned}
    F\left(\mathcal{N}, d_L \right)\coloneqq  &\sup_{(\mathcal{E},\mathcal{D})} \mathcal{F} \left[\Phi_{L\bar{L}},(\mathcal{D}\circ \mathcal{N}\circ\mathcal{E})\otimes \operatorname{id}_L(\Phi_{L\bar{L}})\right] \\
    \operatorname{s.th.}& \quad \mathcal{E}:\mathcal{S}(\mathcal{H}_L) \to \mathcal{S}(\mathcal{H}_P) \quad \text{cptp}, \\ 
    &\quad \mathcal{D}:\mathcal{S}(\mathcal{H}_P) \to \mathcal{S}(\mathcal{H}_L) \quad \text{cptp}.
\end{aligned}
\end{equation}
Here, $\Phi_{L\bar{L}}$ represents a (normalized) maximally entangled state on the logical Hilbert space $\mathcal{H}_L$, serving as a reference for measuring the preservation of entanglement through the channel. The expression for channel fidelity can be reformulated into a bilinear optimization problem involving both the encoder and decoder channels \cite{Reimpell_2005} as
\begin{equation}\label{eq:resulting_opt}
\begin{aligned}
       F(\mathcal{N},d_L) \ = \   d_P^2  \max& \operatorname{tr}\left[ C_{\operatorname{id}_L \otimes \mathcal{N}}(C_{\mathcal{D}^\dagger}^T \otimes C_\mathcal{E})\right]& \\ 
        \operatorname{s.th.}& \ C_\mathcal{E} \geq, 0 \quad C_{\mathcal{D}^\dagger} \geq 0& \\ 
        & \ \operatorname{tr}_{\bar{P}}[C_\mathcal{E}] = \mathds{1}_{\bar{L}}/d_{\bar{L}}, \quad \operatorname{tr}_L[C_{\mathcal{D}^\dagger}] = \mathds{1}_P/d_P,&
\end{aligned}
\end{equation}
where $C_{-\cdot -}$ denotes the unique normalized Choi matrix corresponding to the given channel.


\subsection{A tractable outer optimization program}

\subsubsection{Convexification and hardness of outer approximations}

Due to the linearity of the objective function in \eqref{eq:resulting_opt} we can optimize over all convex mixtures of product states satisfying the constraints:
\begin{align}\label{eq:def_C}
    &\Sigma_{\operatorname{prod}}(LP:\bar{L}\bar{P})\nonumber\\
    &\coloneqq  \operatorname{conv} \left\{\rho_{LP} \otimes \sigma_{\bar{L}\bar{P}} \ \middle| \ \rho_{LP},\sigma_{\bar{L}\bar{P}} \in \mathcal{S}(\mathcal{H}_{LP,\bar{L}\bar{P}}), \ \operatorname{tr}_{\bar{P}}[\sigma_{\bar{L}\bar{P}}] = \frac{\mathds{1}_{\bar{L}}}{d_{\bar{L}}}, \ \operatorname{tr}_{L}[\rho_{LP}] = \frac{\mathds{1}_{P}}{d_P} \right\}.
\end{align}
With this definition, the problem of optimizing the channel fidelity in \eqref{eq:resulting_opt} is reformulated as
\begin{equation}\label{eq:optimization_C}
\begin{aligned}
       F(\mathcal{N},d_L) \ = \ d_P^2 \max& \operatorname{tr}\left[ C_{\operatorname{id}_L \otimes \mathcal{N}} \rho_{LP\bar{L}\bar{P}}\right]& \\ 
        \operatorname{s.th.}& \ \rho_{LP\bar{L}\bar{P}} \in \Sigma_{\operatorname{prod}}(LP:\bar{L}\bar{P}),&
\end{aligned}
\end{equation}
and can be understood as an instance of a constrained separability (cSEP) problem \cite{ohst2024characterisingmemoryquantumchannel}. For the full set of separable states (SEP)
\begin{align}\label{eq:def_sep_states}
    &\operatorname{SEP}(A:B)\nonumber\\
    & \coloneqq\Big\{\rho_{AB} \in \mathcal{S}(\mathcal{H}_A \otimes \mathcal{H}_B) \ \Big\vert \ \rho_{AB} = \sum_{i} p_i \rho_A^i \otimes \rho_B^i, \ p_i\geq 0, \  \sum_i p_i = 1, \ \rho_{A}^i,\rho_B^i \in \mathcal{S}(\mathcal{H}_{A,B})\Big\},
\end{align}
one then has that $\Sigma_{\operatorname{prod}}(LP:\bar{L}\bar{P}) \subset \operatorname{SEP}(LP:\bar{L}\bar{P})$. Unfortunately, the results from \cite{fawzi2019setseparablestatesfinite} for $\operatorname{SEP}(A:B)$ already imply that $\Sigma_{\operatorname{prod}}(LP:\bar{L}\bar{P})$ lacks an exact semi-definite representation in case that the $P$-system and the $\bar{L}$-system are trivial, which also aligns with the hardness results established in \cite{doherty2004complete,gurvits2003classical}. As such, conic hierarchies can serve as a valuable tool for approximating convex sets that cannot be expressed as semi-definite programs\,---\,prominent examples including \cite{doherty2004complete,lasserre2001global}.


\subsubsection{An SDP hierarchy}

For the convex set $\Sigma_{\operatorname{prod}}(LP:\bar{L}\bar{P})$, the work \cite{berta2021semidefinite} provided an outer SDP hierarchy, which can abstractly be written as a sequence of nested sets
\begin{align}\label{eq:def_sdp_hierarchies}
    \Sigma_{\operatorname{prod}}(LP:\bar{L}\bar{P}) \subset \cdots \subset \Sigma^n_{\operatorname{prod}}(LP:\bar{L}\bar{P}) \subset \Sigma^{n-1}_{\operatorname{prod}}(LP:\bar{L}\bar{P})\subset \cdots \subset \Sigma^{1}_{\operatorname{prod}}(LP:\bar{L}\bar{P}),
\end{align}
with $\rho_{LP\bar{L}\bar{P}} \in \Sigma_{\operatorname{prod}}^n(LP:\bar{L}\bar{P})$ if there exists an extension $\rho_{LP(\bar{L}\bar{P})^{(1\ldots n)}}$ on $n$ copies of $\bar{L}\bar{P}$ such that
\begin{enumerate}
        \item[(i)] $\rho_{LP(\bar{L}\bar{P})^{(1\ldots n)}}\succeq 0$ and $\operatorname{tr}[\rho_{LP(\bar{L}\bar{P})^{(1\ldots n)}}]=1$
        \vspace{1mm}
        \item[(ii)] $\rho_{LP(\bar{L}\bar{P})^{(1\ldots n)}} = U^{\sigma}_{(\bar{L}\bar{P})^{1\ldots n}}\left(\rho_{LP(\bar{L}\bar{P})^{(1\ldots n)}}\right) \quad \forall\sigma\in S_n$
        \vspace{1mm}
        \item[(iii)] $\operatorname{tr}_{L}\left[ \rho_{LP(\bar{L}\bar{P})^{(1\ldots n)}} \right] = \frac{\mathds{1}_P}{d_P} \otimes\rho_{(\bar{L}\bar{P})^{(1\ldots n)}}$
        \vspace{1mm}
        \item[(iv)] $\operatorname{tr}_{\bar{P}^{(n)}}\left[\rho_{(\bar{L}\bar{P})^{(1\ldots n)}}\right] = \rho_{(\bar{L}\bar{P})^{(1\ldots n-1)}}\otimes\frac{\mathds{1}_{\bar{L}^{(n)}}}{d_{\bar{L}^{(n)}}}$,
\end{enumerate}
where $S_n$ denotes the symmetric group of order $n$ and $U^{\sigma}_{(\bar{L}\bar{P})^{1\ldots n}}$ its natural actions on $(\bar{L}\bar{P})^{(1\ldots n)}$. Consequently, the $n$-level outer approximation is formulated as
\begin{equation}\label{eq:nlevel_relax}
\begin{aligned}
       \operatorname{SDP}_n^\star \coloneqq  d^2_P  \max& \operatorname{tr}\left[ C_{(\operatorname{id}_L \otimes \mathcal{N})} \rho_{LP\bar{L}\bar{P}}\right]& \\ 
        \operatorname{s.th.}& \ \rho_{LP\bar{L}\bar{P}} \in \Sigma^n_{\operatorname{prod}}(LP:\bar{L}\bar{P}),&
\end{aligned}
\end{equation}
which represents an adapted SDP hierarchy for the cSEP problem of AQEC.


\subsection{Types of symmetries}\label{subsec:types_of_symmetries}

In \eqref{eq:nlevel_relax}, the following symmetries are typically present, where we distinguish between symmetries arising from the objective function—i.e., the Choi matrix in AQEC—and the extendibility symmetry, which inherently arises from the optimization variable:
\begin{enumerate}
    
    \item[(a)]\label{extendibility symmetry}\textbf{Symmetry of the optimization variable:}: An \emph{extendibility symmetry} $S_n$ over the copies of the $\bar{L}\bar{P}$-system arising via
    \begin{align*}
        \rho_{LP\bar{L}\bar{P}} = \operatorname{tr}_{(\bar{L}\bar{P})^{(2\ldots n)}}[\rho_{LP(\bar{L}\bar{P})^{(1\ldots n)}}]\quad\rho_{LP(\bar{L}\bar{P})^{(1\ldots n)}} = U^{\sigma}_{(\bar{L}\bar{P})^{1\ldots n}}\left(\rho_{LP(\bar{L}\bar{P})^{(1\ldots n)}}\right) \quad \forall \sigma \in S_n,
    \end{align*}
    whereby $U^{\sigma}_{(\bar{L}\bar{P})^{1\ldots n}}\left( \cdot \right)$ acts as the permutation of tensor factors.
    \vspace{1mm}
    \item[(b)]\label{isotropic state symmetry} \textbf{Symmetry of the objective function:} A symmetry arises from the observation that there exists an isomorphism of Hilbert spaces such that
    \begin{align*}
        C_{\operatorname{id}_L \otimes \mathcal{N}} \mapsto C_{\operatorname{id}_L} \otimes C_{\mathcal{N}}.
    \end{align*}
    This means that the action of the identity channel on the logical space $ \mathcal{H}_L $ combined with the noisy channel $ \mathcal{N} $ can be represented as the tensor product of their respective Choi matrices. The maximally entangled state associated with this formulation is an isotropic state \cite{Werner1989}, which is invariant under unitary $2$-designs, expressed as
    \begin{align*}
        (U \otimes \overline{U}) C_{\operatorname{id}_L}  \left(U\otimes \overline{U}\right)^\dagger = C_{\operatorname{id}_L}, \quad U \in \mathcal{U}(\mathcal{H}_L).
    \end{align*}
    We call this the \emph{isotropic state symmetry}.
    \vspace{1mm}
    \item[(c)]\label{iid symmetry} \textbf{Symmetry of the objective function:}
    When considering multiple qubits it can be reasonable to assume that each qubit experiences similar errors. This leads to the perspective of permutation-invariant noise in $ \mathcal{N} $, or even a model where the noise is identical and independent (iid) across qubits. If we assume that $ \mathcal{N} \equiv \mathcal{M}^{\otimes m} $, for a local channel $ \mathcal{M}_i:\mathcal{S}(\mathcal{H}_{P_i}) \to \mathcal{S}(\mathcal{H}_{P_i})$, this gives rise to the following isomorphism of Hilbert spaces
    \begin{align*}
        C_{\mathcal{N}} \mapsto C_{\mathcal{M}}^{\otimes m}.
    \end{align*}
    We call this the \emph{iid-symmetry}.
    \vspace{1mm}
    \item[(d)]\label{internal Choi symmetry} \textbf{Symmetry of the objective function:}
    Let us assume that the local channel $ \mathcal{M} $ in (c) is a depolarizing channel acting on a qubit, which introduces an additional layer of internal symmetry in $\mathcal{M}$ \cite[Appendix A]{berta2021semidefinite}. This channel is covariant under the unitary group (see \autoref{subsec:symmetries_of_choi}) and thus has this additional \emph{Choi symmetry}. 
\end{enumerate}

These observations about the potential local symmetries in $ C_{\operatorname{id}_L \otimes \mathcal{N}} $ raise the question of how such symmetries can influence the semi-definite representation of $ \Sigma_{\operatorname{prod}}^n(LP:\bar{L}\bar{P})$. To address symmetry reduction, general semi-definite programming techniques have been developed, as discussed in \cite{Klerk2007ReductionOS}.


\subsection{Separability versus constrained separability}
\label{subsec:definition_of_hierarchies}

Here we aim to work out the proper difference between the hierarchy \cite{berta2021semidefinite}\,---\,which converges to \eqref{eq:optimization_C}\,---\,and a simplified hierarchy, which is directly implementable in the standard DPS framework. For this purpose we juxtapose next to $\Sigma_{\operatorname{prod}}(LP:\bar{L}\bar{P})$, defined in \eqref{eq:def_C}, the set
\begin{equation}
\begin{aligned}
&\Sigma_{\operatorname{SEP}}(LP:\bar{L}\bar{P})\coloneqq  \left\{\rho_{LP\bar{L}\bar{P}} \in \operatorname{SEP}(LP:\bar{L}\bar{P}) \ \middle\vert \ \operatorname{tr}_P[\rho_{LP}] = \frac{\mathds{1}_L}{d_L} \quad \text{and} \quad \operatorname{tr}_{\bar{L}}[\rho_{\bar{L}\bar{P}}] = \frac{\mathds{1}_{\bar{P}}}{d_{\bar{P}}}\right\}\label{eq:def_D}
\end{aligned}
\end{equation}
and further define outer approximations for both of them in terms of
\begin{align}\label{eq:comparison_hierarchies}
    &\begin{array}{ll}
        \rho_{LP\bar{L}\bar{P}} \in \Sigma_{\operatorname{prod}}^n(LP:\bar{L}\bar{P}) \Longleftrightarrow \\
        \\
        \text{(i)} \quad \rho_{LP(\bar{L}\bar{P})^{(1\ldots n)}}\succeq 0, \quad \operatorname{tr}[\rho_{LP(\bar{L}\bar{P})^{(1\ldots n)}}]=1 \\[1ex]
        \vspace{1mm}
        \text{(ii)} \quad \rho_{LP(\bar{L}\bar{P})^{(1\ldots n)}} = U^{\sigma}_{(\bar{L}\bar{P})^{1\ldots n}}\left(\rho_{LP(\bar{L}\bar{P})^{(1\ldots n)}}\right) \\[1ex]
        \vspace{1mm}
        \text{(iii)} \quad \operatorname{tr}_{L}\left[\rho_{LP(\bar{L}\bar{P})^{(1\ldots n)}} \right] = \frac{\mathds{1}_P}{d_P} \otimes\rho_{(\bar{L}\bar{P})^{(1\ldots n)}} \\[1ex]
        \vspace{1mm}
        \text{(iv)} \quad \operatorname{tr}_{\bar{P}^{(n)}}\left[\rho_{(\bar{L}\bar{P})^{(1\ldots n)}}\right] = \rho_{(\bar{L}\bar{P})^{(1\ldots n-1)}}\otimes\frac{\mathds{1}_{\bar{L}^{(n)}}}{d_{\bar{L}^{(n)}}}
    \end{array}
    &
    \begin{array}{ll}
    \rho_{LP\bar{L}\bar{P}} \in \Sigma_{\operatorname{SEP}}^n(LP:\bar{L}\bar{P}) \Longleftrightarrow \\
        \\
        \text{(i)} \quad \rho_{LP(\bar{L}\bar{P})^{(1\ldots n)}}\succeq 0, \quad \operatorname{tr}[\rho_{LP(\bar{L}\bar{P})^{(1\ldots n)}}]=1 \\[1ex]
        \vspace{1mm}
        \text{(ii)} \quad \rho_{LP(\bar{L}\bar{P})^{(1\ldots n)}} = U^{\sigma}_{(\bar{L}\bar{P})^{1\ldots n}}\left(\rho_{LP(\bar{L}\bar{P})^{(1\ldots n)}}\right) \\[1ex]
        \vspace{1mm}
        \text{(iii)'} \quad \operatorname{tr}_{L}\left[\rho_{LP}\right] = \frac{\mathds{1}_P}{d_P} \\[1ex]
        \vspace{1mm}
        \text{(iv)'} \quad \operatorname{tr}_{\bar{P}}\left[\rho_{\bar{L}\bar{P}}\right] =  \frac{\mathds{1}_{\bar{L}}}{d_{\bar{L}}}.
    \end{array}
\end{align}
In this context, \cite{berta2021semidefinite} showed that $\Sigma_{\operatorname{prod}}^n(LP:\bar{L}\bar{P})$ converges to $\Sigma_{\mathrm{prod}}(LP:\bar{L}\bar{P})$, and therefore to the channel fidelity \eqref{eq:optimization_C}. The key ingredient of their analysis is a quantum de-Finetti theorem that permits linear constraints, in contrast to the DPS hierarchy, whose underlying de-Finetti approximation does not, a priori, allow such constraints. Using the same machinery, we state an analogous convergence result for $\Sigma_{\mathrm{SEP}}(LP:\bar{L}\bar{P})$ with the associated optimization problem
\begin{equation}\label{eq:DPS_type}
\begin{aligned}
    \overline{\operatorname{SDP}}^\star \ \coloneqq \  \sup  \ &\operatorname{tr}[\rho_{LP \bar{L}\bar{P}}C_{\operatorname{id}\otimes \mathcal{N}}] \\
   \operatorname{s.th.}&  \ \rho_{LP\bar{L}\bar{P}} \in \Sigma_{\operatorname{SEP}}(LP:\bar{L}\bar{P}).
\end{aligned}
\end{equation}

\begin{proposition}[Hierarchy of SDP relaxations
\cite{berta2021semidefinite}]\label{thm:hierarchy_outer_approx} \label{prop:dps_type_hierarchy}
With the identification $\rho_{LP\bar{L}\bar{P}}=\rho_{LP(\bar{L}\bar{P})^{(1)}}$ we have that
\begin{enumerate}
\item[(a)] the SDP relaxation of \eqref{eq:resulting_opt} given by \eqref{eq:nlevel_relax} yields a sequence of upper bounds on \eqref{eq:definition_channel_fidelity} converging from above with $n\rightarrow\infty$ as
\begin{align}
\begin{array}{ccc}
     \displaystyle 0\leq \operatorname{SDP}^*_n - F(\mathcal{N},d_L) \leq \frac{\operatorname{poly}(d_L,d_P)}{\sqrt{n}} &\displaystyle 
\end{array}
\end{align}
\item[(b)] the SDP relaxation of $\Sigma_{\operatorname{SEP}}(LP:\bar{L}\bar{P})$ is given by \eqref{eq:DPS_type} yields a sequence of upper bounds and converging from above with $n\rightarrow\infty$ as
\begin{align}
\begin{array}{ccc}
     \displaystyle 0\leq \overline{\operatorname{SDP}}^*_n -\overline{\operatorname{SDP}}^\star \leq \frac{\operatorname{poly}(d_L,d_P)}{\sqrt{n}}. &\displaystyle 
\end{array}
\end{align}
\end{enumerate}
The abbreviation $\operatorname{poly}(d_L,d_P)$ stands a dependence scaling at most polynomial in $d_L$ and $d_P$.
\end{proposition}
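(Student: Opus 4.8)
The plan is to derive both convergence statements from the single quantum de-Finetti theorem with linear constraints established in \cite{berta2021semidefinite}, rather than re-proving that theorem. First I would set up the common skeleton: in each case the feasible set of the $n$-th level is the set of states $\rho_{LP\bar L\bar P}$ admitting a permutation-invariant extension on $n$ copies of $\bar L\bar P$ satisfying the relevant linear constraints, and one always has the chain of inclusions in \eqref{eq:def_sdp_hierarchies} (monotonicity: any $(n{+}1)$-extension restricts to an $n$-extension after tracing out one copy and the marginal constraints are consistent across levels), together with the trivial containment $\Sigma_{\operatorname{prod}}\subset\Sigma^n_{\operatorname{prod}}$ and $\Sigma_{\operatorname{SEP}}\subset\Sigma^n_{\operatorname{SEP}}$. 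These two facts immediately give the lower bound $0\le\operatorname{SDP}^*_n-F(\mathcal N,d_L)$ in (a) and $0\le\overline{\operatorname{SDP}}^*_n-\overline{\operatorname{SDP}}^\star$ in (b), since maximizing a fixed linear functional over a larger set can only increase the optimum.

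For the quantitative upper bound I would argue as follows. Fix a feasible $n$-extendable $\rho_{LP(\bar L\bar P)^{(1\ldots n)}}$ attaining (or nearly attaining) $\operatorname{SDP}^*_n$. Apply the informationally-complete-measurement / de-Finetti estimate from \cite{berta2021semidefinite}: its marginal $\rho_{LP\bar L\bar P}$ on one copy is $\varepsilon$-close in trace distance to some $\tilde\rho_{LP\bar L\bar P}$ lying in the exact target set — $\Sigma_{\operatorname{prod}}(LP:\bar L\bar P)$ in case (a), $\Sigma_{\operatorname{SEP}}(LP:\bar L\bar P)$ in case (b) — with $\varepsilon=\operatorname{poly}(d_L,d_P)/\sqrt n$. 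The only case-dependent point is that the de-Finetti approximant must still satisfy the correct linear constraints: in (a) these are the Choi-type partial-trace conditions $\operatorname{tr}_{L}[\rho_{LP}]=\mathds 1_P/d_P$ and $\operatorname{tr}_{\bar P}[\rho_{\bar L\bar P}]=\mathds 1_{\bar L}/d_{\bar L}$ that are built into items (iii)–(iv) and are linear, hence preserved by the constraint-respecting de-Finetti theorem; in (b) the constraints (iii$'$)–(iv$'$) are of exactly the same linear type, so the identical argument applies verbatim. Then, since $C_{\operatorname{id}\otimes\mathcal N}$ is a fixed bounded operator with $\|C_{\operatorname{id}\otimes\mathcal N}\|_\infty\le 1$ (it is a normalized Choi matrix of a channel, up to the explicit $d_P^2$ prefactor which is absorbed into $\operatorname{poly}(d_L,d_P)$), Hölder's inequality gives
\begin{align}
\operatorname{SDP}^*_n \;=\; d_P^2\operatorname{tr}[C_{\operatorname{id}\otimes\mathcal N}\,\rho_{LP\bar L\bar P}] \;\le\; d_P^2\operatorname{tr}[C_{\operatorname{id}\otimes\mathcal N}\,\tilde\rho_{LP\bar L\bar P}] + d_P^2\,\|\rho-\tilde\rho\|_1 \;\le\; F(\mathcal N,d_L) + \frac{\operatorname{poly}(d_L,d_P)}{\sqrt n},
\end{align}
and the analogous chain with $\overline{\operatorname{SDP}}^*_n$ and $\overline{\operatorname{SDP}}^\star$ in case (b).

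The main obstacle — really the only nontrivial ingredient — is the constrained de-Finetti theorem itself, i.e.\ showing that the one-copy marginal of a permutation-symmetric $n$-extension can be approximated by a genuinely separable (resp.\ product-type) state that simultaneously satisfies the prescribed linear marginal constraints, with the error decaying as $1/\sqrt n$ and only polynomial dependence on the local dimensions. Since this is precisely the content of the de-Finetti bound proved in \cite{berta2021semidefinite}, I would invoke it as a black box and merely verify that its hypotheses are met: the constraints in (iii$'$)–(iv$'$) are affine and compatible with the symmetrization, and the dimension factors entering their bound are $d_L$ and $d_P$ (together with the number of copies $n$, which is what produces the $\sqrt n$). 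A minor bookkeeping point I would be careful about is the extra normalization prefactor $d_P^2$ and the need, in case (a), to produce a state in $\Sigma_{\operatorname{prod}}$ (not merely in $\operatorname{SEP}$) — this is handled because items (iii)–(iv) of the hierarchy encode exactly the defining constraints of $\Sigma_{\operatorname{prod}}$, so the constrained de-Finetti output automatically lands in the right set. Everything else is the routine Hölder estimate displayed above.
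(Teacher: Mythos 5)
Your proposal is correct and follows essentially the same route as the paper, which states this proposition as a cited result of \cite{berta2021semidefinite} and relies on exactly the machinery you invoke: monotonicity of the nested relaxations for the lower bound, and the measurement-based de-Finetti theorem with linear constraints plus a H\"older estimate for the $\operatorname{poly}(d_L,d_P)/\sqrt{n}$ upper bound (the same strategy the paper reuses explicitly in the proof of \autoref{thm:inner_bounds}). Since the only nontrivial ingredient, the constrained de-Finetti bound, is taken as a black box in both cases, there is nothing substantive to add.
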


The counterexamples from \cite[Exam 4.6]{berta2021semidefinite} show that the hierarchies in \autoref{thm:hierarchy_outer_approx} (a) and (b) are in general strictly different, and for the cSEP problem of AQEC we do need the setting as in \autoref{thm:hierarchy_outer_approx} (a).\footnote{We note that so-called infinite de Finetti theorems with linear constraints can straightforwardly be deduced from standard infinite de Finetti theorem, as first observed by Fuchs \& Schack \cite{Fuchs_2004,Fuchs2004} (as well as others \cite{yu2021complete,PRXQuantum.3.010340,ohst2024characterisingmemoryquantumchannel}). However, this does not imply explicit convergence speed bounds of the corresponding SDP hierarchies. Further, stronger variations for infinite de Finetti theorems with linear constraints were shown by Costa {\it et al.}~\cite{costa2024finettitheoremquantumcausal}, but it remains an open problem if finite refinements thereof are possible.} However, from a computational perspective, implementing the hierarchy in \autoref{thm:hierarchy_outer_approx} (a) poses significant challenges. The number of equality constraints grows exponentially with the number of copies, making direct enforcement infeasible. Specifically, \autoref{thm:hierarchy_outer_approx} (a) involves $d_P^2 (d_L^2 d_P^2)^n + d_L^2 (d_L^2 d_P^2)^{n-1}$ equality constraints, whereas \autoref{prop:dps_type_hierarchy} (b) requires only $d_L^2 + d_P^2$. This contrast highlights the impracticality of the former for large $n$ without symmetry reduction (see \cite{chee2024efficient} and \autoref{sec:symmetry_reduction}).

A key distinction between the hierarchies\,---\,as illustrated in \autoref{fig:pic_for_proof_difference_hierarchies}\,---\,lies in the sequence of operations: applying the partial trace and imposing linear constraints. We end this section by showcasing that even in the restricted case of classical-quantum states, a significant difference between the hierarchies can be proven, showing that the two operations in \autoref{fig:pic_for_proof_difference_hierarchies} do not commute.

\begin{figure}[ht]
    \centering
    \includegraphics[width=0.9\linewidth]{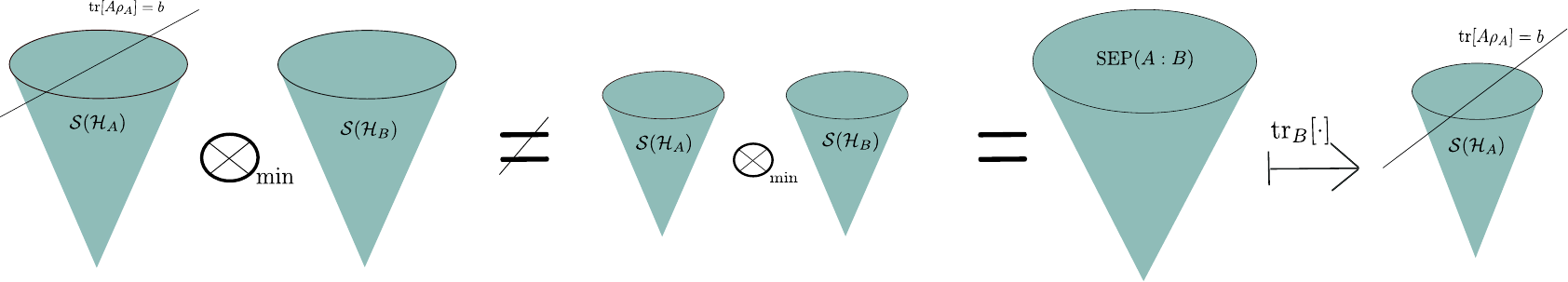}
    \caption{The figure highlights the fundamental difference between the two hierarchies \autoref{thm:hierarchy_outer_approx} (a) and (b). In the first hierarchy, linear constraints are imposed on the cones before applying the minimal tensor product for convexification, that is, the minimal tensor product plays the role of allowing for all convex combinations of products). In the second hierarchy, the minimal tensor product of the cones is constructed first, and constraints are applied afterward. As shown in \autoref{thm:inequality_between_hierarchies}, these operations do not generally commute, underscoring a significant structural divergence.}
    \label{fig:pic_for_proof_difference_hierarchies}
\end{figure}

\begin{proposition}\label{thm:inequality_between_hierarchies}
    Consider the two sets 
    \begin{align}
        \Sigma_1(A:B) \coloneqq  \operatorname{conv} \left\{\rho_{A} \otimes \sigma_{B} \ \middle| \ \rho_{A},\sigma_{B} \in \mathcal{S}(\mathcal{H}_{AB}), \ \operatorname{tr}[\Gamma_A \rho_{A}] = b, \ [\rho_A,\Gamma_A] = 0 \right\}.
    \end{align}
    and 
    \begin{align}
        \Sigma_2(A:B) \coloneqq \{\rho_{AB} \in \operatorname{SEP}(A:B) \ \vert \ \operatorname{tr}[\Gamma_A \rho_A] = b, \ [\rho_A, \Gamma_A]  =0  \},
    \end{align}
    with $\Gamma_A$ a hermitian operator on $\mathcal{H}_A$ not equal to the zero operator and $\lambda_{\operatorname{min}}(\Gamma) < b < \lambda_{\operatorname{max}}(\Gamma)$. Then, we have that $\Sigma_1(A:B)$ and $\Sigma_2(A:B)$ are not equal.
\end{proposition}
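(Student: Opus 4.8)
The plan is to show that one always has the inclusion $\Sigma_1(A:B)\subseteq\Sigma_2(A:B)$ and that it is strict as soon as $\mathcal H_B$ is non-trivial (if $\dim\mathcal H_B=1$ both sets degenerate to the same affine slice of states on $A$, so $\dim\mathcal H_B\ge 2$ is the only relevant regime). The inclusion is immediate and I would dispose of it first: if $\rho_{AB}=\sum_i p_i\,\rho_A^i\otimes\sigma_B^i$ with $[\rho_A^i,\Gamma_A]=0$ and $\operatorname{tr}[\Gamma_A\rho_A^i]=b$ for every $i$, then $\rho_{AB}$ is separable, its $A$-marginal $\rho_A=\sum_i p_i\rho_A^i$ again commutes with $\Gamma_A$, and $\operatorname{tr}[\Gamma_A\rho_A]=\sum_i p_i\,b=b$, so $\rho_{AB}\in\Sigma_2(A:B)$. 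Hence the content of the proposition is the separation, and it suffices to exhibit a single state in $\Sigma_2(A:B)\setminus\Sigma_1(A:B)$.

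For the construction, let $\ket{v_{\min}},\ket{v_{\max}}$ be unit eigenvectors of $\Gamma_A$ for the eigenvalues $\lambda_{\min}(\Gamma)<\lambda_{\max}(\Gamma)$ (orthogonal, since the eigenvalues differ), set $V:=\operatorname{span}\{\ket{v_{\min}},\ket{v_{\max}}\}$, and use the intermediate value theorem together with $\lambda_{\min}(\Gamma)<b<\lambda_{\max}(\Gamma)$ to pick $\theta\in(0,\pi/2)$ with $\cos^2\theta\,\lambda_{\min}(\Gamma)+\sin^2\theta\,\lambda_{\max}(\Gamma)=b$. Put $\ket{w_\pm}:=\cos\theta\ket{v_{\min}}\pm\sin\theta\ket{v_{\max}}$, fix two distinct states $\sigma_B^0\neq\sigma_B^1$ on $\mathcal H_B$ (possible since $\dim\mathcal H_B\ge2$), and consider
\begin{align}
\rho_{AB}:=\tfrac12\,\ketbra{w_+}{w_+}\otimes\sigma_B^0+\tfrac12\,\ketbra{w_-}{w_-}\otimes\sigma_B^1 .
\end{align}
A one-line computation in the basis $\{\ket{v_{\min}},\ket{v_{\max}}\}$ gives $\tfrac12\big(\ketbra{w_+}{w_+}+\ketbra{w_-}{w_-}\big)=\operatorname{diag}(\cos^2\theta,\sin^2\theta)=:\rho_A^\star$, and by the choice of $\theta$ this marginal commutes with $\Gamma_A$ and satisfies $\operatorname{tr}[\Gamma_A\rho_A^\star]=b$. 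Since $\rho_{AB}$ is a convex combination of product states, it is separable, so $\rho_{AB}\in\Sigma_2(A:B)$.

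The remaining and main step is to prove $\rho_{AB}\notin\Sigma_1(A:B)$, i.e. that no decomposition of $\rho_{AB}$ into admissible product terms exists. Suppose $\rho_{AB}=\sum_i p_i\,\tau_A^i\otimes\omega_B^i$ with $[\tau_A^i,\Gamma_A]=0$ and $\operatorname{tr}[\Gamma_A\tau_A^i]=b$. Because $\operatorname{supp}\rho_A^\star=V$ is two-dimensional and $\rho_A^\star=\sum_i p_i\tau_A^i$, every $\tau_A^i$ with $p_i>0$ is supported on $V$; there $\Gamma_A$ acts as $\operatorname{diag}(\lambda_{\min}(\Gamma),\lambda_{\max}(\Gamma))$ with distinct entries, so commutation forces $\tau_A^i$ to be diagonal in $\{\ket{v_{\min}},\ket{v_{\max}}\}$, and the linear trace constraint then pins it down to the unique value $\tau_A^i=\rho_A^\star$. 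Thus $\rho_{AB}=\rho_A^\star\otimes\big(\sum_i p_i\omega_B^i\big)$ would be a product state; comparing the $\ketbra{v_{\min}}{v_{\max}}$ off-diagonal block of the two expressions for $\rho_{AB}$ yields $\tfrac12\cos\theta\sin\theta\,(\sigma_B^0-\sigma_B^1)=0$, which contradicts $\sigma_B^0\neq\sigma_B^1$ since $\theta\in(0,\pi/2)$. The subtlety to watch here is precisely this rigidity: the admissible $A$-factors of $\Sigma_1$ must be simultaneously diagonalizable with $\Gamma_A$ and obey a linear trace condition, which collapses them on the relevant support to a single point, whereas $\Sigma_2$ constrains only the averaged marginal and therefore still admits genuine classical–quantum correlations — this is exactly the failure of commutativity between imposing linear constraints and taking the minimal tensor product depicted in \autoref{fig:pic_for_proof_difference_hierarchies}.
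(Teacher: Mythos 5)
Your proof is correct and follows essentially the same route as the paper: exhibit a separable, non-product state supported on the span of extremal eigenvectors of $\Gamma_A$ whose $A$-marginal meets the constraints, then use the rigidity that commutation with $\Gamma_A$ plus the linear trace condition pins every admissible $A$-factor of a $\Sigma_1$-decomposition to a single state, forcing a product form and yielding a contradiction. The only cosmetic difference is the witness state: you use coherent superpositions $\ket{w_\pm}$ paired with distinct $\sigma_B^0\neq\sigma_B^1$ and rule out productness via the nonzero $\ketbra{v_{\min}}{v_{\max}}$ block, whereas the paper takes the classically correlated state $p\,P_{\min}\otimes Q_1+(1-p)\,P_{\max}\otimes Q_2$ with orthogonal $Q_1\perp Q_2$, which is non-product by construction.
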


\begin{proof}
    Without loss of generality, we add $r\mathds{1}$ on $\Gamma$ for an adequate $r>0$, such that $\Gamma$ becomes positive, $b \mapsto b+r$. From now on we call this operator $\Gamma$ and the corresponding right hand side of the constraint $b$. Due to the fact that states are positive and have trace equal to $1$, $b$ is now positive and 
    \begin{align}
        \lambda_{\operatorname{min}}(\Gamma) \leq b \leq \lambda_{\operatorname{max}}(\Gamma),
    \end{align}
    with even strict inequalities due to the second assumption. Furthermore, even in case of degeneracy of the eigenspaces of $\Gamma$, we consider just one-dimensional projections onto the eigenspaces of $\lambda_{\operatorname{min}}(\Gamma)$ and $\lambda_{\operatorname{max}}(\Gamma)$, from now on called $P_{\operatorname{min}}$ and $P_{\operatorname{max}}$. Out of these, together with two orthogonal rank-$1$ projectors $Q_1 \perp Q_2$ on the $B$-system, we build the state
    \begin{align}\label{eq:first_rep_rho}
       \rho_{AB} \coloneqq  p P_{\operatorname{min}} \otimes Q_1 + (1-p)P_{\operatorname{max}} \otimes Q_2.
    \end{align}
    Since $\rho_{AB}$ belongs to $\Sigma_2(A:B)$, there exists a unique $p$ such that
    \begin{align}\label{eq:conclusion_proof}
        b = p\lambda_{\operatorname{min}} + (1-p)\lambda_{\operatorname{max}} \ \Rightarrow \ p = \frac{\lambda_{\operatorname{max}} - b}{\lambda_{\operatorname{max}} - \lambda_{\operatorname{min}}}.
    \end{align}
    The second assumption then yields that $p \neq 0,1$ such that $\rho_{AB}$ is a separable state but not product. Assume now for contradiction that there exists $q_j\geq 0$ with $\sum_j q_j = 1$ and states $\rho_A^j$ and $\rho_B^j$ from $\mathcal{S}(\mathcal{H}_A)$ and $\mathcal{S}(\mathcal{H}_B)$, respectively, such that
    \begin{align}\label{eq:sec_rep}
        \rho_{AB} = \sum_{j} q_j \rho_A^j \otimes \rho_B^j
    \end{align}
    for states $\rho_A^j$ such that 
    \begin{align}
        \operatorname{tr}[\rho_A^j \Gamma] = b \quad \text{and} \quad [\rho_A^j,\Gamma] = 0 \quad \text{for all}  \ j.
    \end{align}
    The fact that the reduced states of several representations of a state must be equal and that the support of the reduced state from \eqref{eq:first_rep_rho} on the $A$-system is just on the vector space spanned by $\langle P_{\operatorname{min}},P_{\operatorname{max}}\rangle$ leads us to the conclusion that every $\rho_{A}^j$ needs to have support on this subspace. Thus, we conclude, with the commutativity with $\Gamma$, that every element $\rho_A^j$ can be written diagonal in the basis of $\Gamma$
    \begin{align}
        \rho_{A}^j = r_j P_{\operatorname{min}} + (1-r_j) P_{\operatorname{max}}, \quad \text{for some} \ r_j \in [0,1]
    \end{align}
    and has to satisfy the constraint $\operatorname{tr}[\rho_A^j \Gamma] = b$. Together we are back to \eqref{eq:conclusion_proof} which yields that for every $j$ in the decomposition \eqref{eq:sec_rep} we have 
    \begin{align}
        \rho_{A}^j = p P_{\operatorname{min}} + (1-p) P_{\operatorname{max}} \quad \text{for all}  \ j.
    \end{align} 
    Thus, the state $\rho_{AB}$ from \eqref{eq:sec_rep} can be written as
    \begin{align}
        \rho_{AB} = \sum_j q_j (p P_{\operatorname{min}} + (1-p) P_{\operatorname{max}}) \otimes \rho_B^j= (p P_{\operatorname{min}} + (1-p) P_{\operatorname{max}}) \otimes \sum_j q_j \rho_B^j.
    \end{align}
    However, this is a product state and thus a contradiction to the fact that $\rho_{AB}$ is not product by construction. We conclude that states of the form considered have no separable decomposition into a form necessary for the set $\Sigma_1(A:B)$.
\end{proof}


\section{Inner bounds}\label{sec:inner_bounds}

\subsection{Rounding to inner points}

In this section, we aim to bridge the gap between the see-saw methods for \eqref{eq:resulting_opt} from \cite{Reimpell_2005} and the SDP hierarchies discussed in \autoref{subsec:definition_of_hierarchies}. While \autoref{thm:hierarchy_outer_approx} (a) and (b) provide outer bounds for related optimization problems, see-saw methods produce inner bounds, yielding practically implementable error-correcting codes.  From a broader perspective, connecting these approaches is a familiar challenge in non-convex bilinear optimization. A foundational example is the Goemans-Williamson approximation for quadratic unconstrained binary optimization problems, which pioneered a rounding technique. In this method, an outer solution is converted into an inner solution through probabilistic sampling over a sphere in $\mathbb{R}^{n+1}$ for an $n$-dimensional problem instance. We construct a quantum analogue of this probabilistic rounding process, seamlessly integrating it into the framework of quantum measurements. Conceptually, our approach is illustrated in \autoref{fig:inner_hierarchy}.

The process involves projecting an outer solution obtained from \autoref{thm:hierarchy_outer_approx} (a) onto an inner point, ensuring that the constraints are satisfied and the objective value remains controlled. This step mirrors the proof of the outer hierarchy, where the projections shown in \autoref{fig:inner_hierarchy} correspond to quantum measurements.

\begin{figure}[ht!]
    \centering
    \includegraphics[width=0.5\linewidth]{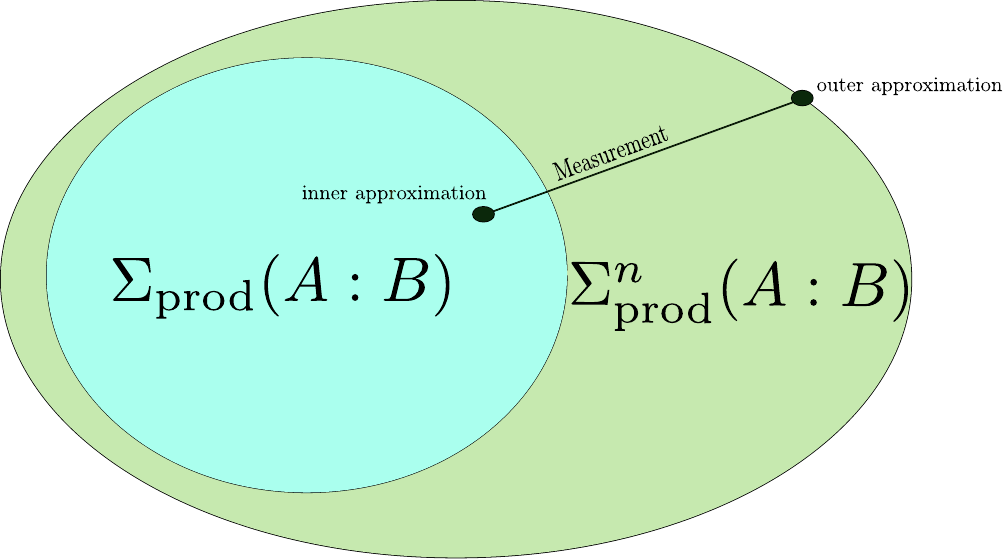}
    \caption{This diagram illustrates the process of obtaining an inner point. Starting with a solution from the outer hierarchy in \autoref{thm:hierarchy_outer_approx}, which lies within the set of $n$-extendable states, we project this solution onto an inner point. The projection ensures that the resulting state satisfies the constraints while maintaining control over the objective value.}
    \label{fig:inner_hierarchy}
\end{figure}

The following theorem establishes the conditions necessary to obtain these inner points, which correspond to viable error-correcting codes.

\begin{theorem}[Inner bounds]\label{thm:inner_bounds}
    From the $n$-th level optimizers
    \begin{equation}\label{eq:optimizer_thm_inner_bounds}
    \begin{aligned}
        \rho^\star_{LP (\bar{L}\bar{P})^{(1\ldots n)}} = \arg \max \ &d_P^2\operatorname{tr}\left[C_{(\operatorname{id}_L \otimes \mathcal{N})} \rho_{LP\bar{L}\bar{P}}\right]&\\
        \operatorname{s.th.} \ & \rho_{LP\bar{L}\bar{P}} \in \Sigma_{\operatorname{prod}}^n(LP:\bar{L}\bar{P}).
    \end{aligned}
    \end{equation}
    one can construct a sequence of inner points $\sigma^{[n]}_{LP\bar{L}\bar{P}} \in \Sigma_{\operatorname{prod}}(LP:\bar{L}\bar{P})$ with the same performance guarantee \autoref{thm:hierarchy_outer_approx} (a) as of the outer hierarchy.   
\end{theorem}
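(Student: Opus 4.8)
The plan is to mimic the de-Finetti-with-linear-constraints argument underlying \autoref{thm:hierarchy_outer_approx}~(a), but to extract an \emph{explicit} feasible point of $\Sigma_{\operatorname{prod}}(LP:\bar L\bar P)$ rather than merely an approximate membership statement. Given the $n$-th level optimizer $\rho^\star_{LP(\bar L\bar P)^{(1\ldots n)}}$, the key observation is that conditions (ii)--(iv) make $\rho^\star$ a permutation-invariant state on the $\bar L\bar P$ registers whose marginal on $L$ is maximally mixed (up to the $P$-constraint). The rounding step is to perform an informationally-structured measurement on the $n-1$ ``spectator'' copies $(\bar L\bar P)^{(2\ldots n)}$ and condition on an outcome: concretely, measure each spectator copy $\bar P^{(j)}$ (or a suitable POVM on $\bar L^{(j)}\bar P^{(j)}$) and post-select, so that the post-measurement state on $LP\,(\bar L\bar P)^{(1)}$ becomes (close to) a convex combination of product states $\rho_{LP}\otimes\sigma_{\bar L\bar P}$. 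I would choose the measurement so that the Choi constraints $\operatorname{tr}_{\bar P}[\sigma_{\bar L\bar P}]=\mathds 1_{\bar L}/d_{\bar L}$ and $\operatorname{tr}_L[\rho_{LP}]=\mathds 1_P/d_P$ are automatically inherited by the conditional states; this is where conditions (iii)--(iv), which are precisely the ``linear constraints on every copy'' feature distinguishing hierarchy (a) from (b), get used.

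After defining $\sigma^{[n]}_{LP\bar L\bar P}$ as the (correctly re-normalized) conditional state for an appropriately weighted measurement outcome, the two things to verify are: \textbf{(1) feasibility}, i.e.\ $\sigma^{[n]}_{LP\bar L\bar P}\in\Sigma_{\operatorname{prod}}(LP:\bar L\bar P)$, which follows because measuring product-with-classical-register extensions yields separable states in the correct cut, and the marginal constraints are preserved by the partial-trace structure; and \textbf{(2) the value bound}, i.e.\ $d_P^2\operatorname{tr}[C_{\operatorname{id}_L\otimes\mathcal N}\,\sigma^{[n]}_{LP\bar L\bar P}]$ differs from $\operatorname{SDP}_n^\star$ by at most $\operatorname{poly}(d_L,d_P)/\sqrt n$. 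For (2) I would bound the disturbance caused by the measurement using a de-Finetti / post-selection estimate: the reduced state on one copy of a permutation-invariant state on $n$ copies is $O(\operatorname{poly}(d)/\sqrt n)$-close (in a suitable norm, e.g.\ trace distance after measuring the spectators) to a measure over product states, and on average over outcomes the objective value is essentially unchanged because $C_{\operatorname{id}_L\otimes\mathcal N}$ is a fixed bounded operator. Averaging then guarantees the existence of at least one outcome whose conditional state meets the claimed bound; this outcome defines $\sigma^{[n]}$.

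Concretely the steps are: (i) record the structure of $\rho^\star$ from (i)--(iv) and note $0\le F(\mathcal N,d_L)\le \operatorname{SDP}_n^\star$; (ii) specify the rounding measurement on the $n-1$ spectator copies and the induced classical-quantum extension; (iii) show each conditional post-measurement state lies in $\Sigma_{\operatorname{prod}}(LP:\bar L\bar P)$, checking the two Choi marginals explicitly; (iv) invoke the de-Finetti-type bound to control $\big|\operatorname{tr}[C_{\operatorname{id}_L\otimes\mathcal N}\rho^\star]-\mathbb E_{\text{outcome}}\operatorname{tr}[C_{\operatorname{id}_L\otimes\mathcal N}\sigma^{[n]}]\big|\le \operatorname{poly}(d_L,d_P)/\sqrt n$; (v) pick the best outcome and conclude $0\le \operatorname{SDP}_n^\star - d_P^2\operatorname{tr}[C_{\operatorname{id}_L\otimes\mathcal N}\sigma^{[n]}] \le \operatorname{poly}(d_L,d_P)/\sqrt n$, and combine with $d_P^2\operatorname{tr}[C_{\operatorname{id}_L\otimes\mathcal N}\sigma^{[n]}]\le F(\mathcal N,d_L)$ (feasibility) to sandwich the value.

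The main obstacle I anticipate is step (iii)--(iv) done \emph{simultaneously}: the measurement must be chosen so that it both (a) leaves the conditional state genuinely in the product-convex-hull with the required exact marginal constraints on \emph{both} sides of the cut, and (b) does not degrade the objective by more than $\operatorname{poly}(d_L,d_P)/\sqrt n$. Exact preservation of the linear Choi constraints under post-selection is delicate — a naive measurement will only preserve them approximately, forcing a small correction (e.g.\ mixing in a fixed feasible point, or a twirl over the relevant symmetry) whose cost must itself be shown to be $\operatorname{poly}(d_L,d_P)/\sqrt n$. Handling this correction while keeping the error bound matching that of \autoref{thm:hierarchy_outer_approx}~(a) is the crux of the argument.
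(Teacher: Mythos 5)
Your proposal follows essentially the same route as the paper: measure the spectator copies $(\bar L\bar P)^{(2\ldots m+1)}$ with an informationally complete measurement, form the convex combination of conditional product states $\sum_z p(z)\,\rho_{LP\vert z}\otimes\rho_{\bar L\bar P\vert z}$, and invoke the constrained de Finetti bound of Berta \emph{et al.} together with H\"older's inequality (plus maximizing over outcomes and iterating over the unknown $m$) to match the $\operatorname{poly}(d_L,d_P)/\sqrt{n}$ guarantee of \autoref{thm:hierarchy_outer_approx}~(a). The only caveat you raise---that the Choi marginals might be preserved only approximately under post-selection, requiring a correction---does not arise: since the measurement acts solely on the spectator copies, constraints (iii)--(iv) combined with permutation invariance give \emph{exact} preservation of both marginal constraints for every conditional state, which is precisely the core observation the paper emphasizes after the theorem.
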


\begin{proof}
Let $\mathcal{M}_{\bar{L}\bar{P}\to Z}$ be an informationally complete measurement on $\bar{L}\bar{P}$ into a classical register $Z$, i.e.
    \begin{align}
        \Vert\zeta_{LP \bar{L}\bar{P}}\Vert_1 \leq c(d_{\bar{L}\bar{P}}) \Vert \mathcal{M}_{\bar{L}\bar{P}\to Z}(\zeta_{LP\bar{L}\bar{P}})\Vert_1
    \end{align}
for all hermitian operators $\zeta_{LP \bar{L}\bar{P}}$ on $\mathcal{H}_{LP} \otimes \mathcal{H}_{\bar{L}\bar{P}}$. Following the proof strategy of \cite[Thm. 2.3]{berta2021semidefinite}, there exists an $m$-fold product measurement into classical registers $Z_2,\ldots,Z_{m+1}$
\begin{align}
    \mathcal{M}_{(\bar{L}\bar{P})^{(2)} \to Z_2} \otimes \cdots \otimes \mathcal{M}_{(\bar{L}\bar{P})^{(m+1)} \to Z_{m+1}},
\end{align}
which yields local post-measurement states conditioned on outcomes $z_2^{m+1}$
\begin{align}
    \rho^\star_{LP \vert z_2^{m+1}} = \frac{\tr_{(\bar{L}\bar{P})^{(1)} Z_2^{m+1}}[\mathcal{M}_{(\bar{L}\bar{P})^{(2)} \to Z_2}^{z_2} \otimes \cdots \otimes \mathcal{M}_{(\bar{L}\bar{P})^{(m+1)} \to Z_{m+1}}^{z_{m+1}}(\rho^\star_{LP (\bar{L}\bar{P})^{(1\ldots m+1)}})]}{\tr[\mathcal{M}_{(\bar{L}\bar{P})^{(2)} \to Z_2}^{z_2} \otimes \cdots \otimes \mathcal{M}_{(\bar{L}\bar{P})^{(m+1)} \to Z_{m+1}}^{z_{m+1}}(\rho^\star_{LP (\bar{L}\bar{P})^{(1\ldots m+1)}})]}
\end{align}
and similar for $\rho^\star_{\bar{L}\bar{P} \vert z_2^{m+1}}$. Together we let 
\begin{align}
   \sigma^{[n]}_{LP\bar{L}\bar{P}} \coloneqq \sum_{z_2^{m+1}}p(z_2^{m+1}) \rho^\star_{LP \vert z_2^{m+1}} \otimes \rho^\star_{\bar{L}\bar{P}\vert z_2^{m+1}},
\end{align}
which satisfies the constraints from \autoref{thm:hierarchy_outer_approx} (a) and fulfills due to \cite[Thm. 2.3]{berta2021semidefinite}
\begin{align}\label{eq:proof_inner_bound_estimate}
        \Vert \rho^\star_{LP \bar{L}\bar{P}} - \sigma^{[n]}_{LP \bar{L}\bar{P} }\Vert_1 \leq c(d_{\bar{L}\bar{P}}) \sqrt{\frac{2 \ln (2) \ln (d_{\bar{L}\bar{P}})}{n}}.
\end{align}
 Now we consider the corresponding objective function value $d_{\bar{L}\bar{P}}^2\operatorname{tr}[\sigma^{[n]}_{LP\bar{L}\bar{P}}C_{\operatorname{id}_L \otimes \mathcal{N}}]$ of the channel fidelity from \eqref{eq:resulting_opt} in comparison to an $n$-level relaxation and use \eqref{eq:proof_inner_bound_estimate} in order to obtain
 \begin{equation} \label{eq:estimate_bound_inner_proof_objective}
\begin{aligned}
    &\mathcal{F}(\mathcal{N},d_L) - d_{\bar{L}\bar{P}}^2\operatorname{tr}[C_{\operatorname{id}_L \otimes \mathcal{N}} \sum_{z_2^{m+1}}p(z_2^{m+1}) \rho^\star_{LP \vert z_2^{m+1}} \otimes \rho^\star_{\bar{L}\bar{P}\vert z_2^{m+1}}]\\
    &\leq \operatorname{SDP}_n^\star - d_{\bar{L}\bar{P}}^2\operatorname{tr}[C_{\operatorname{id}_L \otimes \mathcal{N}} \sum_{z_2^{m+1}}p(z_2^{m+1}) \rho^\star_{LP \vert z_2^{m+1}} \otimes \rho^\star_{\bar{L}\bar{P}\vert z_2^{m+1}}] \\
    &= d_{\bar{L}\bar{P}}^2\operatorname{tr}[C_{\operatorname{id}_L \otimes \mathcal{N}} \rho^\star_{LP\bar{L}\bar{P}}] - d_{\bar{L}\bar{P}}^2\operatorname{tr}[C_{\operatorname{id}_L \otimes \mathcal{N}} \sum_{z_2^{m+1}}p(z_2^{m+1}) \rho^\star_{LP \vert z_2^{m+1}} \otimes \rho^\star_{\bar{L}\bar{P}\vert z_2^{m+1}}]\\
    &\leq d_{\bar{L}\bar{P}}^2 \Vert C_{\operatorname{id}_L \otimes \mathcal{N}}\Vert_\infty \Vert \rho^\star_{LP \bar{L}\bar{P}} - \sigma^{[n]}_{LP \bar{L}\bar{P} }\Vert_1 \\
    &\leq \Vert C_{\operatorname{id}_L \otimes \mathcal{N}}\Vert_\infty d_{\bar{L}\bar{P}}^2 c(d_{\bar{L}\bar{P}}) \sqrt{\frac{2 \ln (2) \ln (d_{LP})}{n}}.
\end{aligned}
\end{equation}
In the second step we used Hölder's inequality for the Schatten-$1$ and Schatten-$\infty$ norm. One can even obtain a better product state, if one maximizes over all measurement outcomes 
\begin{align*}
    \max_{z_2^{m+1}}\operatorname{tr[C_{\operatorname{id}_L \otimes \mathcal{N}} \rho_{LP \vert z_2^{m+1}} \otimes \rho_{\bar{L}\bar{P}\vert z_2^{m+1}}]},
\end{align*}
which is an inner point due to construction, and all estimates from \eqref{eq:estimate_bound_inner_proof_objective} are still valid. Because the number $2\leq m \leq n$ is unknown, we iterate through all possible values of $m$ in order to get the desired bound. Thus, we constructed a sequence of values converging from below towards $\mathcal{F}(\mathcal{N},d_L)$.
\end{proof}

The core insight of \autoref{thm:inner_bounds} is that for the state $\rho_{LP (\bar{L}\bar{P})^{(1\ldots n)}}$ the constraint structure is designed such that post-measurement states\,---\,obtained by measuring the extended systems\,---\,satisfy the Choi constraints. Importantly, Choi constraints are partial trace constraints and break the correlations in the constraints such that an adaptation of \cite[Theorem 9]{jee2020quasi} to the bipartite setting follows directly by \autoref{cor:better_bound_mutual_info}, which follows the lines of \cite[Lemma 6]{jee2020quasi}. Thus, tight bounds on the mutual information can improve the constants hidden in $\operatorname{poly}(d_L,d_P)$.


\subsection{Practical considerations}

To construct the inner points, we need access to a suitable measurement on the extended systems. From the proof strategy, a small distortion factor $c(d_{\bar{L}\bar{P}})$ of the informationally complete measurement $\mathcal{M}$
\begin{align}
    \Vert\zeta_{LP \bar{L}\bar{P}}\Vert_1 \leq c(d_{\bar{L}\bar{P}}) \Vert \mathcal{M}(\zeta_{LP\bar{L}\bar{P}})\Vert_1
\end{align}
is needed in order to achieve a controllable performance guarantee as of the outer hierarchy. In that respect, an optimal construction was reported in \cite{jee2020quasi}, scaling as $c(d_{\bar{L}\bar{P}})=O(d_{\bar{L}\bar{P}})$. Practically, however, controlling the number of measurement outcomes is also critical, as it directly affects the complexity of the search algorithm over all outcomes. However, for informationally complete measurements on the $\bar{L}\bar{P}$-system, the number of measurement outcomes for all copies scales exponentially in the number of copies. Hence, from a practical point of view it may be of significant interest to develop problem adapted measurements with significantly fewer outcomes\,---\,but lacking a priori analytical guarantees (as they might not be informationally complete anymore). Here we note that any measurement will give valid error correction codes, just without the rigorous optimality bounds on their performance.

To conclude, the procedure for generating inner bounds is as follows:
\begin{enumerate}
    \item Solve the outer hierarchy from \autoref{thm:hierarchy_outer_approx} (a).
    \item Choose an appropriate set of measurements and construct a separable post-measurement state as per \autoref{thm:inner_bounds}.
    \item Optimize over all conditional states $ \rho_{LP \vert z_2^{m+1}} \otimes \rho_{\bar{L}\bar{P} \vert z_2^{m+1}} $ with respect to the objective value.
    \item Optionally, perform a few iterations of the see-saw algorithm and compare the gap between the outer and inner values.
\end{enumerate}


\section{Dimension reduction via symmetries}
\label{sec:symmetry_reduction}

\subsection{Introduction}

In this section, we examine the task of symmetry reduction for semi-definite programs (SDPs), focusing specifically on its application to approximate quantum error correction (AQEC). Our motivation stems from the inherent permutation symmetry in the definition of $n$-extendable states and the need to use \autoref{thm:hierarchy_outer_approx} (a), which exhibits a more intricate permutation symmetry than the standard DPS hierarchy and, by extension, \autoref{prop:dps_type_hierarchy} (b). Indeed, a rough estimate of the dimension of the final optimization problem shows that if one encodes a single qubit into $m$ qubits and requires $n$ extensions, we have
\begin{align}\label{eq:dimension_estimate}
    \operatorname{dim} \mathcal{H}_{LP(\bar{L}\bar{P})_1^n} 
    = 2^{1 + m +  n\cdot (1+ m)}
    \quad \text{as the overall dimension.}
\end{align}
We recall from \cite{Leung_2015,berta2021semidefinite} that extensions are needed to circumvent the relatively simple non-signaling assistant channel fidelity bound (which is typically far off as it still includes powerful entanglement-assistance). Further, for $m=2$ it is well-known that a trivial encoder--decoder pair often already achieves the optimal channel fidelity, so the first genuinely non-trivial case arises when encoding one qubit into three qubits. Consequently, for $n=2$ and $m=3$, the overall dimension in \eqref{eq:dimension_estimate} is already $4096$. Moreover, from the discussion in \autoref{subsec:definition_of_hierarchies}, we note that the number of equality constraints grows polynomial with $\operatorname{dim}(\mathcal{H}_L)$ and $\operatorname{dim}(\mathcal{H}_P)$ but exponentially in $n$ and $m$. The seminal work by Burer-Monteiro \cite[Thm.~2.2]{BurerMonteiro2003} then indicates that one should expect a dense, high-rank solution of the optimization problem in \autoref{thm:hierarchy_outer_approx} (a). Hence, in a generic scenario, we may well face an unstructured (if we disregard symmetrization) and dense SDP whose block dimension on the order of $1000$--$2000$ is already quite challenging to solve, often requiring HPC resources.\footnote{Indeed, each interior-point iteration typically costs on the order of $\mathcal{O}\bigl((\operatorname{dim}\mathcal{H})^3\bigr)$ floating-point operations, and one typically needs tens of such iterations. Of course, the worst-case bound of $\mathcal{O}\bigl(\sqrt{N}\log\frac{1}{\varepsilon}\bigr)$ iterations also applies, where $N$ scales with the problem size and $\varepsilon$ is the target precision, but in practice this detailed estimate is often unnecessary.} As such, the dimension estimate in \eqref{eq:dimension_estimate} and the resulting dense SDP underscore the need to reduce the dimension in order to bridge the gap between the theoretically sound results of \cite{berta2021semidefinite} and practical applications to non-trivial examples.

In a recent work of Chee {et al.}~\cite{chee2024efficient}, the extendability symmetry for AQEC was worked out in the standard framework of symmetry reduction \cite{Klerk2007ReductionOS}, yielding practical results for the task of encoding a single qubit into a single qubit (see \cite[Table~1]{chee2024efficient}). However, scaling up the number of physical qubits may demand more advanced methods and we illustrate this by considering the coding task of mapping one qubit to $m$ qubits. As previously mentioned, to avoid the non-signaling assistant bound, we aim to have at least one extension; hence, we let $n = 2$. Then the total dimension of the problem with encoding one qubit into $m$ qubit is
\begin{align}
  \dim\bigl(\mathcal{H}_{LP(\bar{L}\bar{P})_1^2}\bigr)
  \;=\;
  2^{3 \cdot (1 + m)}.
\end{align}
Meanwhile, the symmetrization from the extendability symmetry acts only on a subspace of dimension $2^{2(1+m)}$, governed by an $S_2$ symmetry. The symmetric group $S_2$ has two irreducible representations of dimension $1$ (for details see e.g. \cite{Sagan2001}), with multiplicities given by the sizes of the symmetric and antisymmetric subspaces
\begin{equation}
  \begin{aligned}
    \dim\!\bigl(\mathrm{Sym}^2(\mathcal{H}_{LP})\bigr) 
    &= \frac{2^{m+1}\bigl(2^{m+1} + 1\bigr)}{2}\\[6pt]
    \dim\!\bigl(\wedge^2 \mathcal{H}_{LP}\bigr) 
    &= \frac{2^{m+1}\bigl(2^{m+1} - 1\bigr)}{2}.
  \end{aligned}
\end{equation}
Concretely, for $m=3$, we find
\begin{align}
  \dim\!\bigl(\mathrm{Sym}^2(\mathcal{H}_{LP})\bigr) \;=\; 136
  \quad\text{and}\quad
  \dim\!\bigl(\wedge^2(\mathcal{H}_{LP})\bigr) \;=\; 120.
\end{align}
Multiplying these by the dimension of the first system, $2^{3+1} = 16$, which is unaffected by the standard symmetry reduction in \cite{chee2024efficient} regarding the extendibility symmetry, yields two blocks of sizes
\begin{equation}
  \begin{aligned}
    16\cdot \dim\!\bigl(\mathrm{Sym}^2(\mathcal{H}_{LP})\bigr) 
    &= 2176, \\
    16\cdot \dim\!\bigl(\wedge^2(\mathcal{H}_{LP})\bigr) 
    &= 1920.
  \end{aligned}
\end{equation}
As already discussed, blocks of this size can already be quite challenging to handle on standard hardware.

In the following, we carefully examine whether additional symmetries\,---\,as outlined in \autoref{subsec:types_of_symmetries}\,---\,can be leveraged to further reduce the dimension. A key challenge arises because imposing certain symmetries in the objective function of \autoref{thm:hierarchy_outer_approx} (a), specifically, in the Choi matrix $C_{\mathrm{id} \otimes \mathcal{N}}$ for the noisy channel
\begin{align}
  \mathcal{N} : \mathcal{S}(\mathcal{H}_P) \to \mathcal{S}(\mathcal{H}_P)
\end{align}
may not be straightforwardly compatible with the extendibility symmetry.
 
\begin{example}\label{example:all_symmetries}
Suppose we have three physical qubits and the task is to encode one logical qubit within them. In this case, the setup \eqref{eq:setting_error_correction} is
\begin{align}
    \mathcal{S}(\mathcal{H}_L) \stackrel{\mathcal{E}}{\longrightarrow} \mathcal{S}\left(\bigotimes_{i=1}^3\mathcal{H}_{P_i}\right) \stackrel{\mathcal{N}}{\longrightarrow} \mathcal{S}\left(\bigotimes_{i=1}^3\mathcal{H}_{P_i}\right) \stackrel{\mathcal{D}}{\longrightarrow} \mathcal{S}(\mathcal{H}_L),
\end{align}
where $ \mathcal{H}_{P_i}$, $1\leq i \leq 3$, is isomorphic to $ \mathbb{C}^2 $ as a vector space. Assume further that the noisy channel is a depolarizing channel, represented by
\begin{align}
\mathcal{N} \equiv \mathcal{M}_{\operatorname{dep}}^{\otimes 3}: \rho \mapsto \bigotimes_{i=1}^3 \left( (1-p) \mathds{1}_i \rho \mathds{1}_i + \frac{p}{3}(X_i \rho X_i + Y_i \rho Y_i + Z_i \rho Z_i) \right),
\end{align}
where $ X_i $, $ Y_i $, and $ Z_i $ act only on the $ i $-th subsystem  $ \mathcal{H}_{P_i}$. This noise model introduces the two additional symmetries
\begin{itemize}
    \item \autoref{subsec:types_of_symmetries} (c) iid-symmetry over the three qubits, which corresponds to the symmetric group $ S_3 $
    \item \autoref{subsec:types_of_symmetries} (d) Choi symmetry via the fact that the depolarizing channel commutes with all single-qubit unitary channels (cf.~\autoref{def:covariant_channels}).
\end{itemize}
The covariance of the depolarizing channel with the unitary group results in a particularly simple form for its Choi matrix 
\begin{align}
C_{\mathcal{M}_{\operatorname{dep}}} = \frac{1-p}{2} \mathds{1} \otimes \mathds{1} + \frac{p}{6} \left( X \otimes X + Y \otimes Y + Z \otimes Z \right).
\end{align}
Thus, the Choi matrix of the depolarizing channel is invariant under the action of the unitary group on two qubits, or equivalently, it is an isotropic state. Further, if we recap the channel fidelity in \eqref{eq:resulting_opt}, we get a tensor product with a maximally entangled state as the Choi state for the identity channel, which is another isotropic state symmetry, \autoref{subsec:types_of_symmetries} (b), and thus satisfies invariance under the local unitary group. We then aim to combine the permutation symmetry arising from repeated applications of the local noisy channel, called \emph{iid-symmetry}, with the local covariance symmetry of the depolarizing channel, called \emph{Choi symmetry}, and the \emph{isotropic state symmetry}\,---\,as well as the permutation symmetry inherent to the separability problem, called \emph{extendibility symmetry}.
\end{example}


\subsection{Standard framework}
\label{subsec:recap_standard_symmetry}

Next, we introduce our notation within the standard symmetry reduction framework. We consider SDPs of the form
\begin{equation}\label{eq:general_SDP}
    \begin{aligned}
        c^\star \coloneqq   \sup \ &\operatorname{tr}[\rho H] \\
         \operatorname{s.th.} \ &\operatorname{tr}[\rho A_j] \leq b_j \\
         &\rho \geq 0.
    \end{aligned}
\end{equation}
We say that an SDP as \eqref{eq:general_SDP} with value $c^\star$ is invariant with respect to a unitary symmetry group $\mathcal{W}$ as a subgroup of the unitary group $\mathcal{U}(\mathcal{H})$ if for all $U \in \mathcal{W}$
\begin{enumerate}
    \item $U H U^\dagger = H$ 
    \item $U A_j U^\dagger = A_j$ for all $j$.
\end{enumerate}
Let $\mu$ be the Haar measure corresponding to $\mathcal{W}$, then a simple argument about the cyclicity of the trace shows that for every solution $\rho^\star \in \mathcal{S}(\mathcal{H})$, which is a solution to \eqref{eq:general_SDP}, can be mapped under the channel 
\begin{align}
   T: \rho \mapsto \int_\mathcal{W} d\mu(U) U \rho U^\dagger
\end{align}
and $T(\rho^\star)$ has the same value as $\rho^\star$, but is now invariant under $\mathcal{W}$. Hence, we can write
\begin{equation}\label{eq:commutant}
    \begin{aligned}
        c^\star \coloneqq   \sup \ &\operatorname{tr}[\rho H] \\
     \operatorname{s.th.} \ &\operatorname{tr}[\rho A_j] \leq b_j \quad 1\leq j \leq n\\
         &\rho \geq 0, \quad T(\rho) = \rho.
    \end{aligned}
\end{equation}
We then consider $\{U\}_{U\in \mathcal{W}}$ as a generating set for a group algebra $\mathcal{M}$. Given $\mathcal{M}$, our goal is to identify an appropriate unitary transformation such that we can represent $\mathcal{M}$ in terms of a direct sum structure. Specifically, we aim to construct
\begin{align}
    \mathcal{H} \to \bigoplus_{i=1}^k \mathbb{C}^{d_i \cdot m_i},
\end{align}
where the mapping transforms $\mathcal{H}$ into a block-diagonal form composed of subspaces of dimension $d_i \cdot m_i$. This decomposition provides a powerful way to parametrize the commutant $\mathcal{M}^\prime$. Since $\mathcal{M}^\prime$ consists of operators that commute with $\mathcal{M}$, it takes on a complementary structure in terms of a direct sum of matrix algebras. Specifically, we can assume the existence of an isometry $V$ such that every element $m \in \mathcal{M}^\prime$ can be mapped to an element in a direct sum of matrix algebras
\begin{align}\label{eq:decompose_mprime}
    \mathcal{M}^\prime \to \bigoplus_{i=1}^k \mathbb{C}^{m_i \times m_i}, \quad m \mapsto V^\dagger m V,
\end{align}
where $V^\dagger m V$ parametrizes elements of the commutant $\mathcal{M}^\prime$ within the given direct sum structure. By applying this map to the operators $H, A_1, \ldots, A_n$, we can reformulate the original optimization problem as the symmetry-reduced problem
\begin{equation}\label{eq:symmetry_reduced_problem}
\begin{aligned}
    c^\star = \sup \ & \sum_{i=1}^k \operatorname{tr}[X_i \tilde{H}_i] \\
    \operatorname{s.th.} \ &\sum_{i=1}^k \operatorname{tr}[X_i \tilde{A}_{ij}] \leq b_j \quad 1 \leq j \leq n, \\
    &X_i \geq 0, \quad 1 \leq i \leq k.
\end{aligned}
\end{equation}
In this reduced problem, $\tilde{H}_i$ and $\tilde{A}_{ij}$ denote the images of the original operators $H$ and $A_j$ under the map $V^\dagger \cdot V$, resulting in a decomposition into $k$ distinct components. Given that $V^\dagger V = \operatorname{id}_{\mathcal{H}}$, the map $V^\dagger \cdot V$ is unital, ensuring that its dual preserves the trace. This implies that the values of the original problem in \eqref{eq:general_SDP} and the reduced problem in \eqref{eq:symmetry_reduced_problem} remain equal.


\subsection{Extendibility symmetry}
\label{subsec:symmetry_reduction_separability}

Given our standard setting 
\begin{align}
    \mathcal{S}(\mathcal{H}_L) \stackrel{\mathcal{E}}{\longrightarrow} \mathcal{S}(\mathcal{H}_P) \stackrel{\mathcal{N}}{\longrightarrow} \mathcal{S}(\mathcal{H}_P) \stackrel{\mathcal{D}}{\longrightarrow} \mathcal{S}(\mathcal{H}_L)
\end{align}
the objective is given by $C_{\operatorname{id}_L \otimes \mathcal{N}} \in \mathcal{B}(\mathcal{H}_L \otimes \mathcal{H}_P \otimes \mathcal{H}_{\bar{L}} \otimes \mathcal{H}_{\bar{P}})$. Moreover, as introduced in \autoref{subsec:types_of_symmetries} (c), we have the iid-symmetry, whereby we consider a channel $\mathcal{M}:\mathcal{S}(\mathcal{H}_{P_i}) \to \mathcal{S}(\mathcal{H}_{P_i})$ and identify $\mathcal{N} \equiv \mathcal{M}^{\otimes m}$. Then, the Choi matrix can be rewritten as $C_{\operatorname{id}_L \otimes \mathcal{M}^{\otimes m}}$. After permuting the systems, this results in the notation of \autoref{fig:symmetry}, whereby extensions just get a label ${}^{(i)}$ for the $i$-th extended system.\\

A dimension reduction of the size of the SDP above based on the extendability symmetry was already done in \cite{chee2024efficient} with tools coming from \cite{litjens2017semidefinite}. In this section, we add a slightly different point of view on the topic, i.e. a primal symmetry reduction, and show how our approach enables us to incorporate entanglement witnesses as constraints as well. Additionally we add a discussion about what happens when we examine the space $\mathcal{M}^\prime$ associated with the group algebra of the symmetric group $S_n$, denoted by $\mathcal{M}$. In particular, it is crucial to analyze how the mappings $\operatorname{tr}[C_{\operatorname{id}_L \otimes \mathcal{N}} \cdot]$, $\operatorname{tr}_P[\cdot]$, and $\operatorname{tr}_L[\cdot]$ interact with this space.\\

Our objective is to deconstruct $\mathcal{M}^\prime$ into a direct sum, as demonstrated in \eqref{eq:decompose_mprime}. This allows us to reformulate the optimization problem accordingly and directly apply the constraints to this direct sum structure. Evaluating the objective function is straightforward, as we can leverage the isometry $V$ obtained from the established methods of symmetry reduction and apply it to $C_{\operatorname{id}_L \otimes \mathcal{N}}$. For the constraints, we will consider the following sequence of mappings
\begin{align}
  \Psi_1: \mathcal{B}(\mathcal{H}_{LP})\otimes  \bigoplus_{i=1}^k \mathbb{C}^{m_i \times m_i} \to \mathcal{B}(\mathcal{H}_{LP} \otimes \mathcal{H}_{\bar{L}\bar{P}}^{\otimes  n}),
\end{align}
the map
\begin{align}
  \operatorname{tr}_L[\cdot] : \mathcal{B}(\mathcal{H}_{LP} \otimes \mathcal{H}_{\bar{L}\bar{P}}^{\otimes  n}) \to \mathcal{B}(\mathcal{H}_P \otimes \mathcal{H}_{\bar{L}\bar{P}}^{\otimes  n}),
\end{align}
and then back to the commutant
\begin{align}
 \tilde{\Psi}_2:  \mathcal{B}(\mathcal{H}_P \otimes \mathcal{H}_B^{\otimes  n}) \to \mathcal{B}(\mathcal{H}_P)\otimes  \bigoplus_{i=1}^k \mathbb{C}^{m_i \times m_i}.
\end{align}
This concatenation of maps is a completely positive map and thus obtains a Kraus-representation. A bit more involved but similarly we can calculate this concatenation of maps for the other linear constraint
\begin{align}
  \Psi_1: \bigoplus_{i=1}^k \mathbb{C}^{m_i \times m_i} \to \mathcal{B}( \mathcal{H}_{\bar{L}\bar{P}}^{\otimes  n}),
\end{align}
the map
\begin{align}
  \operatorname{tr}_{\bar{P}^{(n)}}[\cdot] : \mathcal{B}( \mathcal{H}_{\bar{L}\bar{P}}^{\otimes  n}) \to \mathcal{B}(\mathcal{H}_{\bar{L}\bar{P}}^{\otimes  n-1}\otimes \mathcal{H}_{\bar{L}}),
\end{align}
and then back to the commutant
\begin{align}
 \tilde{\Psi}_2:  \mathcal{B}(\mathcal{H}_{\bar{L}\bar{P}}^{\otimes  n-1}\otimes \mathcal{H}_P) \to  \bigoplus_{j=1}^l \mathbb{C}^{n_j \times n_j} \otimes \mathcal{B}(\mathcal{H}_{\bar{L}^{(n)}}).
\end{align}
Here, the decomposition on the left hand side is a direct sum of matrices w.r.t. the irreducible representations of $S_{n-1}$. Observe that we can control the reduced block-decomposition w.r.t. $S_{n-1}$ with the help of \emph{branching rules} (see e.g. \cite[sec. 2.8]{Sagan2001}). Calculating the Kraus-representation for these maps is computationally similar expensive as calculating the inner sequence, because for the inner sequence we basically need to calculate $\Psi_1$ on a solution $(X_i)_{i=1}^k$.\\

Interestingly, this method works for arbitrary linear constraints, not necessary completely positive ones. This includes for example the positive partial transpose (PPT) entanglement witness. More generally, a state is separable if it is positive under all positive but not completely positive maps $\Gamma:\mathcal{B}(\mathcal{H}_{\bar{L}\bar{P}}) \to \mathcal{B}(\mathcal{H}_{LP})$ \cite{Horodecki_1996,Horodecki_2009}. Thus, for all these positive maps $\Gamma$ we can construct the mapping 
\begin{align}\label{eq:symmetry_reduction_1}
    \Psi_1 \circ \operatorname{tr}_{(\bar{L}\bar{P})^{(2\ldots n)}}[\cdot] \circ \Gamma 
\end{align}
and demand that
\begin{align}\label{eq:symmetry_reduction_2}
     (\Psi_1 \circ \operatorname{tr}_{B_2^n}[\cdot] \circ \Gamma) (\rho_{LP \bar{L}\bar{P}})\geq 0
\end{align}
as an SDP constraint. Importantly, the corresponding Kraus-representation can be chosen with two Hilbert-Schmidt orthogonal families of Kraus operators $\{A_j\}$ and $\{B_j\}$, but they must not coincide, because $(\Psi_1 \circ \operatorname{tr}_{(\bar{L}\bar{P})^{(2\ldots n)}}[\cdot] \circ \Gamma)$ is not completely positive. Thus we have
\begin{equation}
\begin{aligned}
    (\Psi_1 \circ \operatorname{tr}_{(\bar{L}\bar{P})^{(2\ldots n)}}[\cdot] \circ \Gamma): \mathcal{B}(\mathcal{H}_{LP})\otimes  \bigoplus_{i=1}^k \mathbb{C}^{m_i \times m_i}  &\to \mathcal{B}(\mathcal{H}_{LP} \otimes \mathcal{H}_{\bar{L}\bar{P}}) \\ 
    (X_i)_i &\mapsto \sum_j A_j (X_i)_i B_j^\dagger.
\end{aligned}
\end{equation}
We emphasize that even just the PPT constraint can be of strong numerical advantage for the separability problem (cf.~the PPT discussions in \cite{berta2021semidefinite} and results regarding AQEC in \cite{Leung_2015}).\\

We summarize the necessary steps for effectively implementing the evaluation of the program in practice. To compute the mappings $\Psi_i \circ \operatorname{tr}_{\cdot}[\cdot] \circ \tilde{\Psi}_i$, we first need to determine the isometry  $V$ and the corresponding Kraus operators for the partial trace on the larger space. Assuming that basic matrix operations can be performed efficiently in these large-dimensional spaces, this process can be executed relatively fast, although the actual time will naturally depend on the specific dimensions involved. The calculation of the Choi matrix and the Choi operators requires solving an eigenvalue problem for matrices of low rank. Fortunately, this task can also be accomplished in significantly high dimensions, well beyond the operational limits of typical SDP solvers. Ultimately, we are left with a Kraus representation of a channel acting on the smaller matrices $(X_i)_{i=1}^k$, which can be incorporated as constraints within the SDP solver\footnote{We emphasize that the ideas presented in this section are not concerned with the efficient preparation of SDPs in terms of SDP variables, as done in \cite{chee2024efficient}, but rather focus on the conceptual integration of entanglement witnesses into the symmetry reduction framework.}.


\section{Symmetry reduction for AQEC under symmetric noise}\label{sec:commutative_relaxation}

Continuing the notation of \autoref{example:all_symmetries} in this section, we take $\mathcal{N} = \mathcal{M}^{\otimes m}$, the $m$-fold product of a noisy channel $\mathcal{M}$ on subsystems $P_i$ of $P$ into itself such that $\bigotimes_{i=1}^m \mathcal{H}_{P_i} \cong \mathcal{H}_P$ as vector spaces. Consequently, we have the iid-symmetry in the Choi matrix $C_{\operatorname{id}_L \otimes \mathcal{N}}$ from \autoref{subsec:types_of_symmetries} (c), which has consequences for all copies of $\mathcal{H}_P\otimes \mathcal{H}_L$. 

Similarly to \autoref{subsec:symmetries_of_choi}, we first apply a global transformation such that
\begin{align}
   C_{\mathcal{M}^{\otimes m} \otimes \operatorname{id}_L} \mapsto C_{\operatorname{id}_L} \otimes C_{\mathcal{M}}^{\otimes m},
\end{align} 
which motivates us to construct the problem in the language of the Hilbert space 
\begin{align}
\bigotimes_{i=1}^{m} \mathcal{H}_{P_i} \otimes \bigotimes_{i=1}^{m}  \mathcal{H}_{\bar{P}_i} \otimes \mathcal{H}_{L} \otimes \mathcal{H}_{\bar{L}}.    
\end{align}
The bilinear optimization of \eqref{eq:resulting_opt} can be rewritten as 
\begin{equation}
\begin{aligned}\label{eq:optimization_channel_fidelity_symmetry}
       \mathcal{F}(\mathcal{M}^{\otimes m},d_L) &\coloneqq   d_{P_{1\ldots m}} d_{\bar{P}_{1\ldots m}} \max \operatorname{tr}[C_{\mathcal{M}}^{\otimes m} \otimes C_{\operatorname{id}_L} \rho_{P_{1\ldots m} \bar{P}_{1\ldots m}L\bar{L}}]& \\ 
        \operatorname{s.th.} \ &\rho_{P_{1\ldots m} \bar{P}_{1\ldots m}L\bar{L}} \quad \text{is product over} \quad P_{1\ldots m}L: \bar{P}_{1\ldots m} \bar L& \\ 
        &\rho_{P_{1\ldots m} \bar{P}_{1\ldots m}L\bar{L}} \geq 0& \\
        &\operatorname{tr}_{L}[\rho_{P_{1\ldots m}L} ] = \mathds{1}_{P_{1\ldots m}}/d_{P_{1\ldots m}}&\\
        &\operatorname{tr}_{\bar{P}_{1\ldots m}}[\rho_{\bar{P}_{1\ldots m}L} ] = \mathds{1}_{\bar{L}}/d_{\bar{L}}.&\\
\end{aligned}
\end{equation}
It is now evident that the resulting problem has become invariant under a local symmetry represented by $ S_m $, the symmetric group that permutes $ m $ elements.\footnote{An intriguing observation arises when we consider that the Choi state $ C_{\mathcal{M}}^{\otimes m} $ is, in this particular scenario, identical and independent across the $ m $ subsystems $\mathcal{H}_{P_i}\otimes \mathcal{H}_{\bar{P}_i}$. Even more, our framework remains applicable in the case of an (only) permutation-invariant Choi state that exhibits correlations between the subsystems. This presents an interesting area that may encompass unexplored types of noisy channels $ \mathcal{N}$.} Next, we aim to integrate the permutation symmetry of the channel, as indicated in \eqref{eq:optimization_channel_fidelity_symmetry}, with the results derived from \autoref{thm:hierarchy_outer_approx} (a). To tackle the separability problem in \eqref{eq:optimization_channel_fidelity_symmetry}, we recall that the $ n $-level relaxation presented in \cite{berta2021semidefinite} permutes the systems $ \bar{P}^{(i)}_{1\ldots m}\bar{L}^{(i)} $ across $1 \leq i \leq n $ copies.\\

\begin{figure}
    \centering
\begin{equation}
\begin{aligned}
\begin{matrix}
P_1 & \bar{P}_1 & \bar{P}_1^{(2)} & \ldots & \bar{P}_1^{(n)}\\
P_2 & \bar{P}_2 & \bar{P}_2^{(2)} & \ldots & \bar{P}_2^{(n)}\\
\vdots & \vdots & \ddots & \ldots & \vdots \\
P_{m} & \bar{P}_{m} & \bar{P}_{m}^{(2)}   & \ldots & \bar{P}_{m}^{(n)}\\
L & \bar{L}^{(1)} & \bar{L}^{(2)} & \ldots & \bar{L}^{(n)}
\end{matrix}
\end{aligned}
\end{equation}
\caption{The subsystem structure for the $n$-level relaxation is drawn. The extendibility symmetry enforces a $S_n$-symmetry within the entire $n$ columns counting from the right, i.e.~without the first column from the left. The iid-symmetry is given by a group $S_m$ which permutes the rows of $P$'s in the first two columns from the left.}
    \label{fig:symmetry}
\end{figure}

In order to find a SDP-relaxation of \eqref{eq:optimization_channel_fidelity_symmetry}, we have denoted in \autoref{fig:symmetry} all subsystems, which are of our interest. The symmetry coming from the global symmetry permutes the copied systems of $\bar{P}_{1\ldots m}\bar{L}$ with the copies right of them. It permutes the columns entirely. Copies are denoted with superscript ${}^{(i)}$. The symmetry coming from the $m$-fold product permutes only the rows in the first two columns and without the last row, because it corresponds to the Choi matrix of the identity channel. The $n$-level relaxation of \eqref{eq:optimization_channel_fidelity_symmetry} can be written as
\begin{equation}\label{eq:sdp_revisited}
    \begin{aligned}
        \mathcal{F}(\mathcal{N},d_L) = \max \ &d_P^2 \operatorname{tr}\bigl[C_{\operatorname{id}_L \otimes \mathcal{N}}\,\rho_{LP \bar{L}\bar{P}}\bigr]\\
        \operatorname{s.th.} \ & \rho_{AB} \in \Sigma_{\operatorname{prod}}^n(LP:\bar{L}\bar{P}).
    \end{aligned}
\end{equation}

The action we propose in this section is that, by considering \autoref{fig:symmetry}, the  iid-symmetry is extended to permute entire rows; the isotropic state symmetry acts diagonally on all $L\bar{L}^{(1\ldots n)}$ systems; and the Choi symmetry acts diagonally on all $P_{1\ldots m}\bar{P}_{1\ldots m}^{(1\ldots n)}$ systems.


\subsection{Establishing symmetrized SDPs}

The main idea here is to assume that the $n$-extended Choi matrix remains invariant under the extension of both symmetries to the entire collection of extension systems. To complete the argument, it remains to verify that applying this action across all systems does not conflict with the constraints. The following proposition addresses this point.

\begin{proposition}[Enlarged invariance]\label{prop:enlarged_invariance}
Consider the $n$-level relaxation of \eqref{eq:sdp_revisited} coming from \autoref{thm:hierarchy_outer_approx} (a) as
\begin{equation}\label{eq:enlarge_symmetry_sdp}
        \begin{aligned}
            \operatorname{SDP}_n^\star = &\operatorname{tr}\left[C_{PP} \otimes \Phi_{L\bar{L}} \otimes \mathds{1}_{\bar{P}^{(2\ldots n)}\bar{L}^{(2\ldots n)}} \rho_{P\bar{P}^{(1\ldots n)}L\bar{L}^{(1\ldots n)}}\right]&  \\
            \operatorname{s.th.} \ &\operatorname{tr}[\rho_{P\bar{P}^{(1\ldots n)}L\bar{L}^{(1\ldots n)}}] = 1&\\
            &\operatorname{tr}_{L}[\rho_{P\bar{P}^{(1\ldots n)}L\bar{L}^{(1\ldots n)}}] = \frac{\mathds{1}_{P}}{d_{P}} \otimes \rho_{\bar{P}^{(1\ldots n)}\bar{L}^{(1\ldots n)}} &\\
            &\operatorname{tr}_{\bar{P}^{(n)}}[\rho_{\bar{P}^{(1\ldots n)}\bar{L}^{(1\ldots n)}}] = \rho_{\bar{P}^{(1\ldots n-1)}\bar{L}^{(1\ldots n-1)}} \otimes \frac{\mathds{1}_{\bar{L}^{(n)}}}{d_{\bar{L}^{(n)}}}& \\
            &U_\sigma \rho_{P\bar{P}^{(1\ldots n)}L\bar{L}^{(1\ldots n)}} U_\sigma^{\dagger} = \rho_{P\bar{P}^{(1\ldots n)}L\bar{L}^{(1\ldots n)}} \quad \sigma \in S_n&.
        \end{aligned}
    \end{equation}
    Then, for a solution $\rho_{P\bar{P}^{(1\ldots n)}L\bar{L}^{(1\ldots n)}}^\star$ with value $\operatorname{SDP}_n^\star$ there exists a solution
    \begin{enumerate}
        \item[(a)] $\tilde{\rho}_{P\bar{P}^{(1\ldots n)}L\bar{L}^{(1\ldots n)}}^\star$ invariant under the iid-symmetry $S_m$, which acts diagonally on $P_{1\ldots m}\bar{P}_{1\ldots m}^{(1\ldots n)}$, and it satisfies the constraints whenever $C_{P_{1\ldots m}\bar{P}_{1\ldots m}}$ is invariant under the iid-symmetry defined in \autoref{subsec:types_of_symmetries}~(c).
        \item[(b)] $\tilde{\rho}_{P\bar{P}^{(1\ldots n)}L\bar{L}^{(1\ldots n)}}^\star$ invariant under the Choi-symmetry $\mathcal{V}$, which acts diagonally on $P_{1\ldots m}\bar{P}_{1\ldots m}^{(1\ldots n)}$, and it satisfies the constraints whenever $C_{P_{1\ldots m}\bar{P}_{1\ldots m}}$ is invariant under the Choi-symmetry defined in \autoref{subsec:types_of_symmetries}~(b).
        \item[(c)]$\tilde{\rho}_{P\bar{P}^{(1\ldots n)}L\bar{L}^{(1\ldots n)}}^\star$ invariant under the isotropic state symmetry $\mathcal{U}(\mathbb{C}^2)$ defined in \autoref{subsec:types_of_symmetries}~(d), which acts diagonally on $L\bar{L}^{(1\ldots n)}$.
    \end{enumerate}
    Moreover, all the symmetries together build the group $S_m \times S_n \times \mathcal{U}\times \mathcal{V}$, whereby $S_m$ is the iid-symmetry, $S_n$ the extendibility symmetry, $\mathcal{U}(\mathbb{C}^2)$ is the unitary symmetry acting diagonal on all $L\bar{L}^{(1\ldots n)}$ systems, and $\mathcal{V}$ is the Choi symmetry acing diagonal on all $P_{1\ldots m}\bar{P}_{1\ldots m}^{(1\ldots n)}$ systems.
\end{proposition}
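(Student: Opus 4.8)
The plan is to treat each of the three symmetries in parts (a), (b), (c) separately, following exactly the averaging argument recalled in \autoref{subsec:recap_standard_symmetry}: given an optimizer $\rho^\star$, twirl it over the relevant group and check (i) that the objective value is preserved and (ii) that each of the four constraints in \eqref{eq:enlarge_symmetry_sdp} is mapped to itself. The objective value is automatic once we know the Choi matrix is invariant under the group in question, since the twirl is self-adjoint with respect to the Hilbert-Schmidt inner product and $\operatorname{tr}[U\rho U^\dagger\, H]=\operatorname{tr}[\rho\, U^\dagger H U]=\operatorname{tr}[\rho H]$. So the entire content is verifying that the enlarged group actions — extended from their ``native'' subsystems to act diagonally across all $n$ extension copies — do not clash with the normalization, the $L$-marginal constraint (iii), the $\bar P^{(n)}$-marginal constraint (iv), and the $S_n$ permutation constraint (ii).

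\textbf{Key steps.} First I would fix notation for the three group actions on $\mathcal{H}_{P\bar P^{(1\ldots n)}L\bar L^{(1\ldots n)}}$: the iid-symmetry $S_m$ acts by simultaneously permuting the rows $P_i$ and $\bar P_i^{(j)}$ for \emph{all} $1\le j\le n$ (so it acts diagonally on the $n+1$ blocks $P_{1\ldots m}$ and $\bar P_{1\ldots m}^{(1\ldots n)}$); the Choi symmetry $\mathcal V$ acts as $W\otimes\overline W$ on each pair $P_i\bar P_i^{(j)}$ simultaneously for all $i,j$; and the isotropic symmetry acts as $U\otimes\overline U$ on each pair $L\bar L^{(j)}$ simultaneously for all $j$. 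Second, I would observe that all three of these enlarged actions commute with the $S_n$-permutation of the columns $\bar P^{(1\ldots n)}\bar L^{(1\ldots n)}$, because they act identically on every extension copy — this is precisely why they were designed to act ``diagonally'' — so twirling preserves constraint (ii). Third, for constraint (iii), $\operatorname{tr}_L[\rho]=\frac{\mathds 1_P}{d_P}\otimes\rho_{\bar P^{(1\ldots n)}\bar L^{(1\ldots n)}}$: here the point is that the $L$-system is traced out, so for the iid-symmetry and the Choi symmetry, which act nontrivially on the $P$-blocks, one checks that $\frac{\mathds 1_P}{d_P}$ is fixed by the corresponding $\sigma\in S_m$ resp.\ $W\otimes\overline W$ restricted to $P$ (it is, being a multiple of identity), while on the right-hand side the action passes through $\operatorname{tr}_L$; for the isotropic symmetry, one uses that $\operatorname{tr}_L[(U\otimes\overline U)_{L\cdot}\,\rho\,(U\otimes\overline U)^\dagger_{L\cdot}]$ equals a $U$-twirl of $\operatorname{tr}_L\rho$ only if $U$ acts on $L$ — but here it does act on $L$ and $\bar L^{(j)}$ simultaneously, and one checks compatibility because $\mathds 1_P/d_P$ is untouched and the $\bar L$-part of the action survives the partial trace on $L$ untouched on the right-hand side. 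Fourth, for constraint (iv), $\operatorname{tr}_{\bar P^{(n)}}[\rho_{\bar P^{(1\ldots n)}\bar L^{(1\ldots n)}}]=\rho_{\bar P^{(1\ldots n-1)}\bar L^{(1\ldots n-1)}}\otimes\frac{\mathds 1_{\bar L^{(n)}}}{d_{\bar L^{(n)}}}$: again the iid and Choi actions are block-diagonal and commute with tracing out $\bar P^{(n)}$, and the isotropic action similarly, with $\mathds 1_{\bar L^{(n)}}/d_{\bar L^{(n)}}$ fixed. Finally I would assemble the three twirls: since the three groups act on overlapping but compatibly-structured tensor factors and each twirl individually preserves all constraints and the objective, composing them (in any order, as they commute — $S_m$ permutes indices, $\mathcal V$ and $\mathcal U$ act by conjugation on disjoint subsystems $P\bar P$ versus $L\bar L$) yields an optimizer invariant under the full group $S_m\times S_n\times\mathcal U\times\mathcal V$.

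\textbf{Main obstacle.} The delicate point — and the reason the preceding \autoref{example:all_symmetries} and the surrounding discussion warn that ``imposing certain symmetries in the objective function may not be straightforwardly compatible with the extendibility symmetry'' — is verifying that each enlarged action genuinely commutes with the partial-trace-and-then-constrain structure of (iii) and (iv), \emph{and} with the $S_n$ action of (ii), simultaneously. In particular, the subtlety is that the constraint (iv) singles out the $n$-th copy $\bar P^{(n)}$, so one must be careful that the enlarged iid/Choi actions, which act on $\bar P_{1\ldots m}^{(n)}$ along with all the others, still commute with $\operatorname{tr}_{\bar P^{(n)}}$ — this works only because they act \emph{identically} on each copy, i.e.\ the action on copy $n$ is ``the same'' map as on copies $1,\ldots,n-1$, so tracing out copy $n$ leaves a well-defined residual action on copies $1,\ldots,n-1$ that matches the right-hand side. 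I expect the bulk of the written proof to be the careful bookkeeping that the particular diagonal extensions chosen here (as opposed to, say, an action on a single copy) are exactly the ones that thread this needle, which is the conceptual payoff flagged in the introduction to \autoref{sec:commutative_relaxation}.
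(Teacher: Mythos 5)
Your proposal is correct and follows essentially the same route as the paper: twirl the optimizer over the enlarged (diagonally acting) group, use that the action factors uniformly across the $P$/$L$ systems so it passes through the partial traces in constraints (iii) and (iv) while fixing the maximally mixed marginals, preserve the objective by invariance of the Choi matrix, and combine the commuting twirls into the product group $S_m\times S_n\times\mathcal{U}\times\mathcal{V}$. The paper writes out only the $S_m$ case explicitly and declares the others analogous, whereas you additionally flag the (implicit in the paper) check that the enlarged actions commute with the $S_n$ constraint (ii); this is a matter of detail, not of method.
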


\begin{proof}
    We make the calculation for one case in detail and argue afterwards that it works similarly in all three cases. We consider the symmetric group $S_m$ acting diagonal on all $P$ systems and denote its representation as $\mathcal{W}$. For sake of generality we write the group average with respect to a measure $d\mu(W)$ as an integral even though for finite groups it is the counting measure with normalization constant, but for further calculations involving the unitary group we aim to stay as general as possible. So, consider
    \begin{align*}
        \tilde{\rho}_{P\bar{P}^{(1\ldots n)}L\bar{L}^{(1\ldots n)}}^\star \coloneqq \int_G d\mu(W) W^{\otimes n+1} \rho_{P\bar{P}^{(1\ldots n)}L\bar{L}^{(1\ldots n)}}^\star(W^{\otimes n})^{\dagger}.
    \end{align*}
    We have to show that $\tilde{\rho}_{P\bar{P}^{(1\ldots n)}L\bar{L}^{(1\ldots n)}}^\star$ satisfies the constraints if $\rho_{P\bar{P}^{(1\ldots n)}L\bar{L}^{(1\ldots n)}}^\star$ does. We start with the left hand side of the first constraint
    \begin{equation}\label{eq:large_invariance_prop1}
    \begin{aligned}
        \operatorname{tr}_P[\tilde{\rho}_{P\bar{P}^{(1\ldots n)}L\bar{L}^{(1\ldots n)}}^\star] &= \int_G d\mu(W) \operatorname{tr}_P[W^{\otimes n+1} \rho_{P\bar{P}^{(1\ldots n)}L\bar{L}^{(1\ldots n)}}^\star(W^{\otimes n+1})^{\dagger}] \\
        &=\int_G d\mu(W) W^{\otimes n} \operatorname{tr}_P[\rho_{P\bar{P}^{(1\ldots n)}L\bar{L}^{(1\ldots n)}}^\star](W^{\otimes n})^{\dagger} \\
        &= \int_G d\mu(W) W^{\otimes n}  \frac{\mathds{1}_{L}}{d_{L}} \otimes \rho_{\bar{P}^{(1\ldots n)}\bar{L}^{(1\ldots n)}}^\star (W^{\otimes n})^{\dagger} \\
        &= \frac{\mathds{1}_{L}}{d_{L}} \otimes \int_G d\mu(W) W^{\otimes n}   \rho_{\bar{P}^{(1\ldots n)}\bar{L}^{(1\ldots n)}}^\star (W^{\otimes n})^{\dagger}. 
    \end{aligned}
    \end{equation}
    However, if we trace $P$ and $L$ out in we get 
    \begin{equation}\label{eq:large_invariance_prop2}
        \begin{aligned}
            \operatorname{tr}_{PL}[\tilde{\rho}_{P\bar{P}^{(1\ldots n)}L\bar{L}^{(1\ldots n)}}^\star] &= \int_G d\mu(W) \operatorname{tr}_{PL}[W^{\otimes n+1} \rho_{P\bar{P}^{(1\ldots n)}L\bar{L}^{(1\ldots n)}}^\star(W^{\otimes n+1})^{\dagger}] \\
        &=\int_G d\mu(W) W^{\otimes n} \operatorname{tr}_{PL}[\rho_{P\bar{P}^{(1\ldots n)}L\bar{L}^{(1\ldots n)}}^\star](W^{\otimes n})^{\dagger} \\
        &= \int_G d\mu(W) W^{\otimes n}  \rho_{\bar{P}^{(1\ldots n)}\bar{L}^{(1\ldots n)}}^\star (W^{\otimes n})^{\dagger} \\
        &= \int_G d\mu(W) W^{\otimes n}   \rho_{\bar{P}^{(1\ldots n)}\bar{L}^{(1\ldots n)}}^\star (W^{\otimes n})^{\dagger}\\
        &= \tilde{\rho}_{\bar{P}^{(1\ldots n)}\bar{L}^{(1\ldots n)}}^\star.
        \end{aligned}
    \end{equation}
    Thus, we conclude that $\tilde{\rho}_{P\bar{P}^{(1\ldots n)}L\bar{L}^{(1\ldots n)}}^\star$ satisfies the constraints. In particular, if $\rho_{P\bar{P}^{(1\ldots n)}L\bar{L}^{(1\ldots n)}}^\star$ is optimal for the SDP with value $c^\star$ and the objective is invariant under $W\otimes W$, we find
    \begin{align*}
        c^\star  &= \operatorname{tr}\left[C_{PP}\otimes \Phi_{L\bar{L}}\otimes \mathds{1}_{\bar{P}^{(1\ldots n-1)}\bar{L}^{(1\ldots n-1)}} \rho_{P\bar{P}^{(1\ldots n)}L\bar{L}^{(1\ldots n)}}^\star\right] \\
        &= \operatorname{tr}\left[\int_G d\mu(W) (W^{\otimes n+1})^\dagger C_{PP}\otimes \Phi_{L\bar{L}}\otimes \mathds{1}_{\bar{P}^{(1\ldots n-1)}\bar{L}^{(1\ldots n-1)}} W^{\otimes n+1} \rho_{P\bar{P}^{(1\ldots n)}L\bar{L}^{(1\ldots n)}}^\star\right] \\
        &= \operatorname{tr}\left[  C_{PP}\otimes \Phi_{L\bar{L}}\otimes \mathds{1}_{\bar{P}^{(1\ldots n-1)}\bar{L}^{(1\ldots n-1)}} \int_G d\mu(W) W^{\otimes n+1} \rho_{P\bar{P}^{(1\ldots n)}L\bar{L}^{(1\ldots n)}}^\star(W^{\otimes n+1})^\dagger \right] \\
        &= \operatorname{tr}\left[  C_{PP}\otimes \Phi_{L\bar{L}}\otimes \mathds{1}_{\bar{P}^{(1\ldots n-1)}\bar{L}^{(1\ldots n-1)} }\tilde{\rho}_{P\bar{P}^{(1\ldots n)}L\bar{L}^{(1\ldots n)}}^\star\right].
    \end{align*}
    Consequently $\tilde{\rho}_{P\bar{P}^{(1\ldots n)}L\bar{L}^{(1\ldots n)}}^\star$ has the same value $c^\star$ as $\rho_{P\bar{P}^{(1\ldots n)}L\bar{L}^{(1\ldots n)}}^\star$. Moreover, we conclude that the calculations for the other symmetries work exactly similar. The important ingredient is always that the symmetries act uniformly on the $P$ and $L$ system, such that we can use cyclicity of the partial trace.  
\end{proof}

A particularly interesting example for applications is the result for the iid-symmetry combined with the isotropic state symmetry and the extension symmetry\,---\,as this is a standard setting in coding tasks. 


\subsection{Solving the SDPs}

As we can now apply \autoref{prop:enlarged_invariance}, we are in the standard framework for symmetry reduction of SDPs. In the following we consider a group $\mathcal{W}$ consisting out of some combinations of symmetries from \autoref{subsec:types_of_symmetries}. In the standard framework \cite{Klerk2007ReductionOS}, it is assumed to have access to a basis $\mathcal{B}$ of the commutant $\mathcal{M}^\prime_{\mathcal{W}}(\mathcal{H}_{P \bar{P}^{(1\ldots n)}L \bar{L}^{(1\ldots n)}})$. Then, we can rewrite each $\rho \in \mathcal{M}^\prime_{\mathcal{W}}(\mathcal{H}_{P \bar{P}^{(1\ldots n)}L \bar{L}^{(1\ldots n)}})$ in 
\begin{align}\label{eq:objective_decomposition_in_basis}
    \rho = \sum_{B \in \mathcal{B}} x_B B
\end{align}
and optimize over the coefficients $x_B$, whereby the size of the resulting SDPs scales in the optimal case polynomial with $\vert B \vert$, the size of the basis. An important tool, which is standard in the literature is that for a given group $\mathcal{W}$, the map 
\begin{align}\label{eq:star_homomophism}
   \psi_1:\mathcal{M}^\prime_{\mathcal{W}}(\mathcal{H}_{P \bar{P}^{(1\ldots n)}L \bar{L}^{(1\ldots n)}}) &\to \bigoplus_{\lambda \in \operatorname{Irr}(\mathcal{W})}\mathbb{C}^{m_\lambda \times m_\lambda}
\end{align}
is a $\star$-homomorphism, which in particular means that $\psi_1$ and $\psi_1^{-1}$ are both positive. Usually this map is constructed via \emph{representative sets}, i.e. sets with one representative per irreducible representation in the full decomposition of the $\mathbb{C} \mathcal{W}$-module. Additionally, we need to handle the constraints of the AQEC problem appropriately, which can be done similarly to \cite[Lem. 3.3., Thm.~3.5]{chee2024efficient} in the following lemma.

\begin{lemma}\label{lem:making_the_constraints_small}
    Assume we are in the setup of \autoref{prop:enlarged_invariance} and have either of the symmetries there. Then, we have that the equations of the constraints 
    \begin{align}\label{eq:constraints_in_lemma}
        \operatorname{tr}_{P}[\rho_{P\bar{P}^{(1\ldots n)}L\bar{L}^{(1\ldots n)}}] &= \frac{\mathds{1}_{L}}{d_{L}} \otimes \rho_{\bar{P}^{(1\ldots n)}\bar{L}^{(1\ldots n)}}\\
        \operatorname{tr}_{L}[\rho_{\bar{P}^{(1\ldots n)}\bar{L}^{(1\ldots n)}}] &= \rho_{\bar{P}^{(1\ldots n-1)}\bar{L}^{(1\ldots n-1)}} \otimes \frac{\mathds{1}_{P}}{d_P}
    \end{align}
    can be handled as equality constraints in a certain sub-algebra of $\mathcal{B}(\mathcal{H})$, depending on the symmetries from \autoref{prop:enlarged_invariance} as
    \begin{enumerate}
        \item[(a)] $\tilde{\rho}_{P\bar{P}^{(1\ldots n)}L\bar{L}^{(1\ldots n)}}^\star$ invariant under the action of the iid-symmetry $S_m$ acting diagonal on $P\cong \bigotimes_{i=1}^m P_i$: then the first equation in \eqref{eq:constraints_in_lemma} can be solved in the $n$-fold diagonal action of $S_m$, that is $\mathcal{M}^\prime_{S_m}(\mathcal{H}_L\otimes \mathcal{H}_{\bar{P}^{(1\ldots n)}\bar{L}^{(1\ldots n)}})$, and the second equation in the $n$-fold diagonal action of $S_m$ over the $\bar{P}_{1\ldots m}$ systems, that is $\mathcal{M}^\prime_{S_m}(\mathcal{H}_{\bar{P}^{(1\ldots n)}\bar{L}^{(1\ldots n-1)}})$.
        \item[(b)] If $\tilde{\rho}_{P\bar{P}^{(1\ldots n)}L\bar{L}^{(1\ldots n)}}^\star$ is invariant under the action of a possible Choi symmetry $\mathcal{V}$ acting diagonal on all $P_{1\ldots m}\bar{P}^{(i)}_{1\ldots m}$, $1\leq i \leq n$, systems, then the first equation in \eqref{eq:constraints_in_lemma} can be solved in the $n\cdot m$-fold diagonal action of the Choi symmetry, that is $\mathcal{M}^\prime_{\mathcal{W}}(\mathcal{H}_L\otimes \mathcal{H}_{\bar{P}_{1\ldots m}^{(1\ldots n)}\bar{L}^{(1\ldots n)}})$, and the second equation in the $n\cdot m$-fold diagonal action of the Choi symmetry, that is $\mathcal{M}^\prime_{\mathcal{W}}(\mathcal{H}_{\bar{P}_{1\ldots m}^{(1\ldots n)}\bar{L}^{(1\ldots n-1)}})$.
        \item[(c)] $\tilde{\rho}_{P\bar{P}^{(1\ldots n)}L\bar{L}^{(1\ldots n)}}^\star$ invariant under the unitary-symmetry $U^{\otimes n+1}$ acting diagonal on the $L$-systems, then the first equation in \eqref{eq:constraints_in_lemma} can be solved in the $(n+1)$-fold diagonal action of $\mathcal{U}$, that is $\mathcal{M}^\prime_{\mathcal{U}}(\mathcal{H}_L\otimes \mathcal{H}_{\bar{P}^{(1\ldots n)}\bar{L}^{(1\ldots n)}})$, and the second equation in the $(n-1)$-fold diagonal action of $U$, that is $\mathcal{M}^\prime_{\mathcal{U}}(\mathcal{H}_{\bar{P}^{(1\ldots n)}\bar{L}^{(1\ldots n-1)}})$.
    \end{enumerate}
    Due to the fact that the symmetries commute, we can combine either of the symmetries (a)-(c) and get the direct product of groups, such that the constraints are satisfied. 
\end{lemma}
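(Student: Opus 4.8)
The plan is to reduce each of the two Choi-type partial-trace constraints to an equality inside a small $\ast$-subalgebra, namely the commutant of the \emph{restricted} group action on the subsystems that survive the partial trace, and then to replace the resulting operator identity by the (few) scalar equations obtained by expanding both sides in a basis of that commutant.

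First I would isolate the one structural fact that does all the work, namely equivariance of the partial trace under a diagonal action: if a compact group $\mathcal{W}$ acts on $\mathcal{H}_A\otimes\mathcal{H}_C$ by unitaries of tensor-product form $W\mapsto W_A\otimes W_C$ and $X_{AC}$ is $\mathcal{W}$-invariant, then for every $W$
\begin{align*}
W_C\,\operatorname{tr}_A[X_{AC}]\,W_C^\dagger \;=\; \operatorname{tr}_A\!\left[(W_A\otimes W_C)\,X_{AC}\,(W_A\otimes W_C)^\dagger\right] \;=\; \operatorname{tr}_A[X_{AC}],
\end{align*}
so $\operatorname{tr}_A[X_{AC}]\in\mathcal{M}^\prime_{\mathcal{W}}(\mathcal{H}_C)$. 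This is exactly the manipulation already carried out in \eqref{eq:large_invariance_prop1}--\eqref{eq:large_invariance_prop2}; here only its abstract version is needed.

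Next I would apply this to the symmetrized optimizer $\tilde{\rho}$ furnished by \autoref{prop:enlarged_invariance}, case by case. For the first constraint, $\operatorname{tr}_P[\tilde{\rho}]=\mathds{1}_L/d_L\otimes\rho_{\bar{P}^{(1\ldots n)}\bar{L}^{(1\ldots n)}}$: in each of the settings (a)--(c) the chosen symmetry acts diagonally on the traced-out $P$-column together with a disjoint block of surviving columns (all $\bar{P}$-copies in (a) and (b), all $L$-slots in (c)) and trivially elsewhere, so by the displayed equivariance the left-hand side lands in $\mathcal{M}^\prime_{\mathcal{W}}$ of the surviving systems --- $\mathcal{M}^\prime_{S_m}(\mathcal{H}_L\otimes\mathcal{H}_{\bar{P}^{(1\ldots n)}\bar{L}^{(1\ldots n)}})$ in case (a), and the analogues in (b), (c). The right-hand side lies in the same commutant: $\mathds{1}_L$ is fixed by any $U_L$, and $\rho_{\bar{P}^{(1\ldots n)}\bar{L}^{(1\ldots n)}}=\operatorname{tr}_{PL}[\tilde{\rho}]$ is again $\mathcal{W}$-invariant by the same argument, and since the action has tensor-product form, $\mathds{1}_L$ tensored with a commutant element is still in the commutant. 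Hence the constraint is an identity between two elements of one fixed finite-dimensional $\ast$-algebra, and imposing it costs $\dim\mathcal{M}^\prime_{\mathcal{W}}(\cdot)$ scalar equations instead of $(\dim\mathcal{H})^2$. The second Choi constraint is handled the same way, now after the partial trace over the $\bar{P}^{(n)}$-block; the surviving systems then carry one fewer copy, and the relevant block sizes are controlled by the branching rule $S_{n-1}\subset S_n$ in case (a) and by the analogous restrictions of $\mathcal{V}$, resp.\ $\mathcal{U}$, in (b) and (c). Finally, for a combination of the symmetries one uses --- as established at the end of \autoref{prop:enlarged_invariance} --- that the three actions pairwise commute, so their joint action is that of the direct product group, whose commutant is generated by tensor products of the individual commutants; the two steps above then go through verbatim with $\mathcal{W}$ replaced by that product.

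The step needing genuine care --- and the one foreshadowing the cautionary discussion of \autoref{sec:framework} --- is not the invariance computation itself but the choice of actions: one must arrange each symmetry so that it acts by a tensor-product unitary respecting the bipartition into traced-out and surviving systems and acts as the identity on the ``fresh'' maximally mixed factors ($\mathds{1}_L/d_L$, resp.\ $\mathds{1}_{\bar{L}^{(n)}}/d_{\bar{L}^{(n)}}$) appearing on the right-hand sides. If an action lacked this form, the right-hand side would fail to lie in the commutant and the reduction would collapse; verifying that the diagonal actions of (a)--(c) do have it, and tracking which commutant each side lands in for each constraint, is the only real content, and once the equivariance fact is in place each of these checks is routine.
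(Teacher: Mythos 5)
Your proposal is correct and is essentially the paper's own argument: the paper's proof is a one-line reference to the computations in \eqref{eq:large_invariance_prop1}--\eqref{eq:large_invariance_prop2}, i.e.\ precisely the equivariance of the partial trace under the diagonal action that you isolate and then apply case by case. Your write-up merely spells out what the paper leaves implicit (that both sides of each constraint land in the restricted commutant, with branching controlling the second constraint); the only nitpick is that for overlapping actions the joint commutant of the product group is the intersection of the individual commutants rather than something "generated by tensor products" of them, but this does not affect the conclusion.
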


\begin{proof}
    For the proof we refer to \eqref{eq:large_invariance_prop1} and \eqref{eq:large_invariance_prop2}, where it is easy to see that the reduced state possess the rest of underlying symmetry.     
\end{proof}

\autoref{lem:making_the_constraints_small} states that constraints in \eqref{eq:constraints_in_lemma} can be seen as constraints in a sub-algebra and thus can be handled in the sub-algebra. Particularly, a combined symmetry out of (a)-(c) induces a corresponding $\star$-homomorphism similar to \eqref{eq:star_homomophism}. We summarize the steps in the following theorem.

\begin{theorem}\label{thm:small_SDPs}
    Consider the SDP in \eqref{eq:enlarge_symmetry_sdp}. Then, the objective variable can be rewritten with a matrix basis $\mathcal{B}$ of $\mathcal{M}^\prime_{\mathcal{W}}(\mathcal{H}_{P\bar{P}^{(1\ldots n)}L\bar{L}^{(1\ldots n)}})$ corresponding to a $\star$-isomorphism $\psi_1$ and $\star$-isomorphisms $\psi_2$ and $\psi_3$ corresponding to the constraints in \eqref{eq:constraints_in_lemma} as
    \begin{equation}
    \begin{aligned}
        \psi_1:\mathcal{M}^\prime_{\mathcal{W}}(\mathcal{H}_{P \bar{P}^{(1\ldots n)}L \bar{L}^{(1\ldots n)}}) &\to \bigoplus_{\lambda \in \operatorname{Irr}(\mathcal{W})}\mathbb{C}^{m_\lambda \times m_\lambda} \\
        \psi_2:\mathcal{M}^\prime_{\mathcal{W}}(\mathcal{H}_{P\bar{P}^{(1\ldots n)} \bar{L}^{(1\ldots n)}}) &\to \bigoplus_{\lambda \in \operatorname{Irr}(\mathcal{W})}\mathbb{C}^{n_\lambda \times n_\lambda}  \\
        \psi_3: \mathcal{M}^\prime_{\mathcal{W}}(\mathcal{H}_{\bar{P}^{(1\ldots n-1)} \bar{L}^{(1\ldots n)}}) &\to  \bigoplus_{\lambda \in \operatorname{Irr}(\mathcal{W})} \mathbb{C}^{f_\lambda \times f_\lambda}.
    \end{aligned}
    \end{equation} 
    Then we can rewrite the objective variable with respect to the basis $\mathcal{B}$ as done in \eqref{eq:objective_decomposition_in_basis} such that the AQEC problem can be equivalently written as
     \begin{equation}\label{eq:small_problem}
    \begin{aligned}
        \sup_{x_B, B \in \mathcal{B}} &\sum_B x_B \operatorname{tr}\Big[C_{PP}\otimes \Phi_{L\bar{L}}\operatorname{tr}_{P^{n}_2 L^{n}_2}[B]\big]& \\
        \operatorname{s.th.} \ &\sum_B x_B \psi^{(\lambda)}_1(B) \geq 0, \quad \lambda \in \operatorname{Irr}(\mathcal{W})& \\
        & \sum_B x_B \operatorname{tr}[B] = 1 &\\
        & \sum_B x_B \psi_2^{(\lambda)} (\operatorname{tr}_P[B]) = \sum_B x_B \psi_2^{(\lambda)} \left(\frac{\mathds{1}_L}{d_L} \otimes \operatorname{tr}_{LP}[B]\right), \quad \lambda \in \operatorname{Irr}(\mathcal{W})& \\
        & \sum_B x_b \psi_3^{(\lambda)} (\operatorname{tr}_{PLL}[B]) = \sum_B x_B \psi_3^{(\lambda)} \left( \operatorname{tr}_{PLPL}[B] \otimes \frac{\mathds{1}_P}{d_P}\right), \quad \lambda \in \operatorname{Irr}(\mathcal{W}).& 
    \end{aligned}
    \end{equation}
    The resources needed in order to solve the resulting SDP \eqref{eq:small_problem} can be summarized as
    \begin{itemize}
        \item $\vert \mathcal{B}\vert$ many optimization variables
        \item for each irreducible representation $\lambda \in \operatorname{Irr}(\mathcal{W})$ a positive semi-definite matrix variable of size $m_\lambda\times m_\lambda$ 
        \item for each irreducible representation $\lambda \in \operatorname{Irr}(\mathcal{W})$ $n_\lambda^2$ many equality constraints
        \item for each irreducible representation $\lambda \in \operatorname{Irr}(\mathcal{W})$ $f_\lambda^2$ many equality constraints.
    \end{itemize}
\end{theorem}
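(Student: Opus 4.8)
The plan is to assemble \autoref{prop:enlarged_invariance} and \autoref{lem:making_the_constraints_small} inside the standard symmetry-reduction machinery of \autoref{subsec:recap_standard_symmetry}, essentially by tracking where each ingredient of the SDP \eqref{eq:enlarge_symmetry_sdp} is sent under the block decomposition of the relevant commutants. First I would fix any admissible combination $\mathcal{W}$ of the symmetries (a)--(c) together with the extendibility group $S_n$, and invoke \autoref{prop:enlarged_invariance}: twirling an optimizer over $\mathcal{W}$ keeps it feasible and preserves the objective value, so one may restrict the search to $\rho\in\mathcal{M}^\prime_{\mathcal{W}}(\mathcal{H}_{P\bar{P}^{(1\ldots n)}L\bar{L}^{(1\ldots n)}})$. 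I would then fix a matrix basis $\mathcal{B}$ of this commutant together with the $\star$-isomorphism $\psi_1$ onto $\bigoplus_\lambda\mathbb{C}^{m_\lambda\times m_\lambda}$ and expand $\rho=\sum_{B\in\mathcal{B}}x_B B$ as in \eqref{eq:objective_decomposition_in_basis}.

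Next I would translate each part of the program into the coefficients $x_B$. By linearity of the trace the objective is $\sum_B x_B\operatorname{tr}[HB]$ with $H=C_{PP}\otimes\Phi_{L\bar{L}}\otimes\mathds{1}_{\bar{P}^{(2\ldots n)}\bar{L}^{(2\ldots n)}}$, and since $H$ acts as the identity on the extension systems one may trace those out first, producing the stated objective functional. Because $\psi_1$ is a $\star$-isomorphism of finite-dimensional $C^*$-algebras, both $\psi_1$ and $\psi_1^{-1}$ are positive, so $\rho\geq 0$ is equivalent to $\sum_B x_B\psi_1^{(\lambda)}(B)\geq 0$ for each $\lambda\in\operatorname{Irr}(\mathcal{W})$, while $\operatorname{tr}[\rho]=1$ becomes $\sum_B x_B\operatorname{tr}[B]=1$. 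For the two Choi constraints in \eqref{eq:constraints_in_lemma} I would apply \autoref{lem:making_the_constraints_small}: the partial trace $\operatorname{tr}_P[\cdot]$ sends the large commutant into $\mathcal{M}^\prime_{\mathcal{W}}(\mathcal{H}_{P\bar{P}^{(1\ldots n)}\bar{L}^{(1\ldots n)}})$ and the further trace over $L$ into $\mathcal{M}^\prime_{\mathcal{W}}(\mathcal{H}_{\bar{P}^{(1\ldots n-1)}\bar{L}^{(1\ldots n)}})$, while the right-hand sides $\mathds{1}_L/d_L\otimes\rho_{\bar{P}\bar{L}}$ and $\rho_{\ldots}\otimes\mathds{1}_P/d_P$ lie in the same algebras because the $\mathds{1}/d$ tensor factors are central. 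Choosing $\star$-isomorphisms $\psi_2,\psi_3$ on these sub-algebras and using their injectivity, an equality between two of their elements is equivalent to the block-wise equality of the images; substituting $\rho=\sum_B x_B B$ turns \eqref{eq:constraints_in_lemma} into exactly the families of equality constraints indexed by $\lambda$ that appear in \eqref{eq:small_problem}.

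Finally I would observe that each $\psi_i$ is unital, hence its adjoint is trace preserving and no value is lost in the reduction; together with the twirling step this yields that \eqref{eq:small_problem} has value $\operatorname{SDP}_n^\star$. The resource count is then immediate: $\lvert\mathcal{B}\rvert$ scalar variables $x_B$; one positive semidefinite block of dimension $m_\lambda$ per irreducible representation; and, since equating two Hermitian elements of $\bigoplus_\lambda\mathbb{C}^{n_\lambda\times n_\lambda}$ (resp.\ $\bigoplus_\lambda\mathbb{C}^{f_\lambda\times f_\lambda}$) amounts to $n_\lambda^2$ (resp.\ $f_\lambda^2$) scalar equations for each $\lambda$, the two Choi constraints contribute $n_\lambda^2$ and $f_\lambda^2$ equality constraints per irreducible representation.

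The one genuinely delicate point is the translation of the Choi constraints: one must be certain that the partial traces really do restrict to $\star$-homomorphic maps landing in the precise commutants named above\,---\,rather than merely in some larger invariant subalgebra\,---\,since only then do the $\psi_2,\psi_3$ block decompositions encode the constraints faithfully. This is exactly the content of \autoref{lem:making_the_constraints_small}, established through the computations in \eqref{eq:large_invariance_prop1}--\eqref{eq:large_invariance_prop2}. The remaining points\,---\,centrality of the $\mathds{1}/d$ factors, compatibility of the $\psi_i$ with the system relabelings of \autoref{fig:symmetry}, and well-definedness of the joint commutant when several of (a)--(c) are imposed at once (which goes through because the relevant group actions commute and hence generate their direct product $S_m\times S_n\times\mathcal{U}\times\mathcal{V}$)\,---\,are routine.
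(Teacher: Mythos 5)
Your proposal is correct and follows essentially the same route as the paper's proof: restrict to the commutant $\mathcal{M}^\prime_{\mathcal{W}}$ via \autoref{prop:enlarged_invariance}, expand in the basis $\mathcal{B}$, use that the $\psi_i$ are $\star$-isomorphisms (positive with positive inverses) to transfer positivity and the Choi constraints into the block decomposition via \autoref{lem:making_the_constraints_small}, and count the resources by noting the constraints are entry-wise matrix equalities. The additional remarks you flag (centrality of the $\mathds{1}/d$ factors, injectivity of $\psi_2,\psi_3$, value preservation from unitality) are consistent with, and slightly more explicit than, the paper's own argument.
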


\begin{proof}
    If we start with the AQEC problem as stated in \eqref{eq:sdp_revisited}, we first assume a subset of the three possible symmetries: iid, isotropic state and local channel symmetry. 
    Due to the fact that in our representation in \autoref{prop:enlarged_invariance} all of them commute, we can just choose a problem dependent subset of them. Then, we form the global symmetry group, which is the direct product of all of them with the symmetric group $S_n$ as the extension symmetry. Again, since all possible additional symmetries commute with $S_n$, the direct product structure is exactly what comes up in this case. This group is then called $\mathcal{W}$. Now, again \autoref{prop:enlarged_invariance} states that we can restrict the optimization to the commutant 
    \begin{align}
    \mathcal{M}^\prime_{\mathcal{W}}(\mathcal{H}_{P\bar{P}^{(1\ldots n)}L\bar{L}^{(1\ldots n)}})    
    \end{align}
    and get the same optimal value. Assuming now access to a basis $\mathcal{B}$ of $\mathcal{M}^\prime_{\mathcal{W}}(\mathcal{H}_{P\bar{P}^{(1\ldots n)}L\bar{L}^{(1\ldots n)}})$, yields that we can express each element as a linear combination of those basis elements. From the fact that the maps $\psi_i$, $1\leq i \leq 3$ are $\star$-isomorphisms and thus indeed positive (and the inverses are positive as well), we can express positivity in terms of positivity in the direct sum decomposition, which justifies the first constraints. The second constraint is that we want to have trace normalization at the end. From \autoref{lem:making_the_constraints_small} we conclude that the constraints live in corresponding endomorphism algebras corresponding to $\mathcal{W}$, respectively its restriction to a subgroup (for the particular case of $S_n$ to $S_{n-1}$), such that again enforcing them in the large space is due to the fact that $\psi_2$ and $\psi_3$ are $\star$-isomorphisms equivalent to enforce them in the direct sum spaces. 

    We conclude that we need $\vert \mathcal{B}\vert$ many optimization variables for the basis decomposition, positive semi-definite matrices corresponding to the positivity constraints in the direct sum decomposition corresponding to $\psi_1$ and for each irreducible $\lambda$ occurring in $\psi_2$ and $\psi_3$ $n_\lambda^2$ respectively $f_\lambda^2$ equality constraints, because those equations are full matrix equalities, which have to be enforced entry-wise.  
\end{proof}


\subsection{Details about the optimization program}

\autoref{thm:small_SDPs} gives a symmetry reduced formulation for AQEC with access to different symmetries. For actual calculations, several challenges need to be resolved. First of all, the theorem assumes access to a basis $\mathcal{B}$ of the endomorphism space. For theoretical proofs about efficiency, a standard basis can be chosen in terms of the orbit construction by \cite{Klerk2007ReductionOS}. This basis has the particularly nice advantage that for certain cases of $\mathcal{W}$, such as, e.g., the extension symmetry \cite{chee2024efficient}, a polynomial scaling method is known to pre-compute all the relevant data in \autoref{thm:small_SDPs}. It has to be stated that the resulting SDP in \autoref{thm:small_SDPs} often scales polynomial in, e.g., the dimensions of $\mathcal{H}_P$ and $\mathcal{H}_L$, but to pre-compute the program, a priori needs many operations in the exponentially large spaces in those dimensions.


\subsubsection{Advantages of a Hilbert Schmidt orthonormal basis in practice}

As mentioned in the previous paragraph, for theoretical results, one often uses the orbit basis, because there are well known results for the computation of the entries in \autoref{thm:small_SDPs} in polynomial time in the dimension of $\mathcal{H}_P$ and $\mathcal{H}_L$. However, if we consider the resources generally needed in order to compute \autoref{thm:small_SDPs} in practice, we observe that the positivity constraints and the number of summands for the constraints may not scale optimally for an implementation. For this reason, we consider a Hilbert-Schmidt orthogonal basis 
\begin{align}
    \{(B^\lambda_{ij})_{i,j = 1}^{m_\lambda}\}_\lambda \subset \mathcal{M}^\prime_{\mathcal{W}}(\mathcal{H}_{P \bar{P}^{(1\ldots n)}L \bar{L}^{(1\ldots n)}}),
\end{align}
such that 
\begin{align}
    \operatorname{tr}[B_{ij}^\lambda] = \delta_{ij} \quad \text{for all} \ 1\leq i,j \leq m_\lambda, \ \lambda \in \operatorname{Irr}(\mathcal{W}).
\end{align}
Assume we have a unitary $U$, which decomposes $\mathcal{H}_{P \bar{P}^{(1\ldots n)}L \bar{L}^{(1\ldots n)}}$ into the direct sum if isotypic components and in particular 
\begin{align}\label{eq:decomposition_operators}
    \mathcal{B}(\mathcal{H}_{P \bar{P}^{(1\ldots n)}L \bar{L}^{(1\ldots n)}}) \equiv U \bigoplus_{\lambda \in \operatorname{Irr}(\mathcal{W})} \mathbb{C}^{m_\lambda \times m_\lambda} \otimes \mathbb{C}^{d_\lambda \times d_\lambda} U^\dagger,
\end{align}
where the equation should be read as "up to vector space isomorphic identifications". Then we can extract a subset $U_{\mathcal{B}}^\lambda$ of $U$ considered as a basis for $\mathcal{H}_{P \bar{P}^{(1\ldots n)}L \bar{L}^{(1\ldots n)}}$ for each irreducible representation and furthermore representative sets $U^\lambda_{\mathcal{R}}$ which concretely formulate the maps $\psi_1^{(\lambda)}$.

\begin{lemma}\label{lem:HSbasis}
Consider the decomposition in \eqref{eq:decomposition_operators} of the operator space according to the isotypic components. For each irreducible representation $\lambda$, let $U_{\mathcal{B}}^\lambda$ be a unitary whose columns form a basis of the $\lambda$-sector and let $\{E_{ij}\}_{i,j=1}^{m_\lambda}$ be the standard (elementary) basis of $\mathbb{C}^{m_\lambda\times m_\lambda}$ and define
\begin{align}
    B_{ij}^{\lambda} \coloneqq U_{\mathcal{B}}^\lambda \Bigl(E_{ij}\otimes \frac{\mathds{1}_{d_\lambda}}{d_\lambda}\Bigr)
    (U_{\mathcal{B}}^\lambda)^\dagger.
\end{align}
Then, the collection $\{B_{ij}^{\lambda}\}_{\lambda,i,j}$ forms a Hilbert--Schmidt orthogonal basis of $\mathcal{M}^\prime_{\mathcal{W}}(\mathcal{H}_{P \bar{P}^{(1\ldots n)}L \bar{L}^{(1\ldots n)}})$ and for every $\lambda$ and indices $i,j$ one has
\begin{align}
    \operatorname{tr}\Bigl[B_{ij}^{\lambda}\Bigr]=\delta_{ij},
\end{align}
so that in particular the diagonal elements have trace one.
\end{lemma}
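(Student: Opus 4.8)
\textbf{Proof plan for \autoref{lem:HSbasis}.}
The strategy is to verify the three claimed properties in sequence: (1) each $B_{ij}^\lambda$ lies in the commutant $\mathcal{M}^\prime_{\mathcal{W}}$; (2) the family is linearly independent and spans the commutant, hence a basis; (3) the Hilbert--Schmidt inner products and the trace normalisation come out as claimed. The only genuinely structural input is the isotypic decomposition \eqref{eq:decomposition_operators}: writing $U = \bigoplus_\lambda U_{\mathcal{B}}^\lambda$ (columns grouped by isotypic sector), the commutant consists precisely of operators of the block form $\bigoplus_\lambda M_\lambda \otimes \mathds{1}_{d_\lambda}$ with $M_\lambda \in \mathbb{C}^{m_\lambda\times m_\lambda}$, by Schur's lemma applied to the $\mathcal{W}$-action (which acts as $\mathds{1}_{m_\lambda}\otimes \pi_\lambda$ on the $\lambda$-sector). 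Membership in (1) is then immediate: $B_{ij}^\lambda = U_{\mathcal{B}}^\lambda\bigl(E_{ij}\otimes \tfrac{\mathds{1}_{d_\lambda}}{d_\lambda}\bigr)(U_{\mathcal{B}}^\lambda)^\dagger$ is exactly of this form, supported on the single sector $\lambda$ with block matrix $\tfrac{1}{d_\lambda}E_{ij}$.

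For (2), since $\{E_{ij}\}_{i,j=1}^{m_\lambda}$ is a basis of $\mathbb{C}^{m_\lambda\times m_\lambda}$ and the sectors are orthogonal, the map $\bigoplus_\lambda M_\lambda\otimes\mathds{1}_{d_\lambda}\mapsto (M_\lambda)_\lambda$ is a linear isomorphism $\mathcal{M}^\prime_{\mathcal{W}}\xrightarrow{\sim}\bigoplus_\lambda\mathbb{C}^{m_\lambda\times m_\lambda}$ carrying $B_{ij}^\lambda$ to $\tfrac{1}{d_\lambda}E_{ij}$ in the $\lambda$-summand; hence $\{B_{ij}^\lambda\}$ is a basis, with $\dim\mathcal{M}^\prime_{\mathcal{W}} = \sum_\lambda m_\lambda^2$. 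For (3), using unitarity of $U_{\mathcal{B}}^\lambda$ and $\operatorname{tr}[E_{ij}E_{kl}^\dagger]=\delta_{ik}\delta_{jl}$, one computes
\begin{align}
\operatorname{tr}\bigl[B_{ij}^\lambda (B_{kl}^\mu)^\dagger\bigr]
= \delta_{\lambda\mu}\,\frac{1}{d_\lambda^2}\operatorname{tr}\bigl[(E_{ij}E_{lk})\otimes\mathds{1}_{d_\lambda}\bigr]
= \delta_{\lambda\mu}\,\frac{\delta_{jl}}{d_\lambda}\,\operatorname{tr}[E_{ik}]
= \delta_{\lambda\mu}\delta_{jl}\delta_{ik}\,\frac{1}{d_\lambda},
\end{align}
so the family is Hilbert--Schmidt orthogonal (and rescalable to orthonormal). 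Finally,
\begin{align}
\operatorname{tr}\bigl[B_{ij}^\lambda\bigr]
= \operatorname{tr}\Bigl[E_{ij}\otimes\tfrac{\mathds{1}_{d_\lambda}}{d_\lambda}\Bigr]
= \operatorname{tr}[E_{ij}]\cdot\tfrac{1}{d_\lambda}\operatorname{tr}[\mathds{1}_{d_\lambda}]
= \delta_{ij},
\end{align}
which gives the trace-normalisation of the diagonal elements. The main obstacle, such as it is, is purely bookkeeping: one must be careful that the decomposition \eqref{eq:decomposition_operators} is read with the correct placement of multiplicity versus representation factors (the $\mathcal{W}$-action living on the $d_\lambda$-factor and the commutant on the $m_\lambda$-factor), since swapping them would invert the roles of $m_\lambda$ and $d_\lambda$; once that convention is fixed, everything is a direct computation.
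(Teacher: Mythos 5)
Your proof is correct and follows essentially the same route as the paper's: a direct computation of the Hilbert--Schmidt inner products and of the traces in the block-diagonal picture, using orthogonality of the isotypic sectors and the elementary-matrix relations. The only difference is that you additionally spell out commutant membership and the spanning/dimension-count argument, which the paper's proof leaves implicit; this is a completion of the same argument rather than a different approach.
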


\begin{proof}
Since the Hilbert Schmidt inner product is unitarily invariant, it suffices to work in the block-diagonal picture. 
Thus, for any $\lambda,\mu$ and any indices $i,j,k,l$,
\begin{align}
\langle B_{ij}^{\lambda},B_{kl}^{\mu}\rangle
&=\operatorname{tr}\Bigl[(B_{ij}^{\lambda})^\dagger B_{kl}^{\mu}\Bigr]
=\operatorname{tr}\Bigl[\Bigl(E_{ij}\otimes \frac{\mathds{1}_{d_\lambda}}{d_\lambda}\Bigr)^\dagger \Bigl(E_{kl}\otimes \frac{\mathds{1}_{d_\mu}}{d_\mu}\Bigr)\Bigr]\\
&=\operatorname{tr}\Bigl[\bigl(E_{ij}^\dagger\otimes \frac{\mathds{1}_{d_\lambda}}{d_\lambda}\bigr)
\bigl(E_{kl}\otimes \frac{\mathds{1}_{d_\mu}}{d_\mu}\bigr)\Bigr].
\end{align}
Since $E_{ij}^\dagger = E_{ji}$ and using the property of the tensor product we have
\begin{align}
\langle B_{ij}^{\lambda},B_{kl}^{\mu}\rangle
&=\operatorname{tr}\Bigl[(E_{ji}E_{kl})\otimes \Bigl(\frac{\mathds{1}_{d_\lambda}}{d_\lambda}\cdot\frac{\mathds{1}_{d_\mu}}{d_\mu}\Bigr)\Bigr].
\end{align}
Because the different isotypic components are mutually orthogonal (i.e., the block corresponding to $\lambda$ is orthogonal to that corresponding to $\mu$ for $\lambda\neq\mu$), it follows that
\begin{align}
\langle B_{ij}^{\lambda},B_{kl}^{\mu}\rangle
=\delta_{\lambda\mu}\; \operatorname{tr}\Bigl[(E_{ji}E_{kl})\otimes \frac{\mathds{1}_{d_\lambda}}{d_\lambda^2}\Bigr].
\end{align}
Recall that the elementary matrices satisfy
\begin{align}
E_{ji}E_{kl} = \delta_{i,k}\, E_{jl},\qquad \text{with}\quad \operatorname{tr}(E_{jl})=\delta_{j,l}.
\end{align}
Moreover, we have
\begin{align}
\operatorname{tr}\Bigl[\frac{\mathds{1}_{d_\lambda}}{d_\lambda^2}\Bigr]=\frac{d_\lambda}{d_\lambda^2}=\frac{1}{d_\lambda}.
\end{align}
Thus, we obtain
\begin{align}
\langle B_{ij}^{\lambda},B_{kl}^{\mu}\rangle
=\delta_{\lambda\mu}\,\delta_{i,k}\,\delta_{j,l}\,\frac{1}{d_\lambda}.
\end{align}
This shows that the family $\{B_{ij}^\lambda\}_{\lambda,i,j}$ is Hilbert--Schmidt orthogonal (up to the normalization factor $1/d_\lambda$; one may always rescale by $\sqrt{d_\lambda}$ to obtain an orthonormal basis if desired).

Next, we verify the trace property. Since the trace is also unitarily invariant, we have
\begin{align}
\operatorname{tr}\Bigl[B_{ij}^{\lambda}\Bigr]
&=\operatorname{tr}\Bigl[E_{ij}\otimes \frac{\mathds{1}_{d_\lambda}}{d_\lambda}\Bigr]
=\operatorname{tr}[E_{ij}]\,\operatorname{tr}\Bigl[\frac{\mathds{1}_{d_\lambda}}{d_\lambda}\Bigr] \\
&=\delta_{ij}\cdot 1
=\delta_{ij}.
\end{align}
In particular, for the diagonal elements ($i=j$) we have $\operatorname{tr}[B_{ii}^{\lambda}]=1$.
\end{proof}

If we now apply the maps $\psi_1^{(\lambda)}$ to a basis element $B_{ij}^\mu$, we get
\begin{align}\label{eq:basis_elements}
    \psi_1^{(\lambda)}(B_{ij}^\mu) = E_{ij}\delta_{\lambda \mu}.
\end{align}
That is, the full set of positivity constraints can just be replaced by positive semi-definite variables in the SDP of sizes $m_\lambda \times m_\lambda$. As a second advantage of such a Hilbert-Schmidt orthonormal basis, we extract that when reducing from $S_n$ to $S_{n-1}$ in the second constraint, the elements $B_{ij}^\lambda$, $\lambda \vdash n$ have just overlap with those partition $\mu \vdash n-1$ which become direct summands of the restricted $S_{n-1}$-module (see, e.g., \cite{Sagan2001} for Branching rules of the symmetric group). Thus, we reduce the amount of summands in the constraints. Moreover, due to the fact that our diagonal actions can be viewed as duals of a Schur-Weyl duality pair exactly something similar happens for those symmetries in case we have chosen a basis with respect to the irreducibles.  


\subsection{Small-scale examples}\label{subsec:commutative_relax_examples}

This section is dedicated to a few abstract cases and an example in AQEC. To get connected with standard literature in the domain of error correction, we assume from now on to encode one qubit into $m$ qubits with $n$-extensions in the hierarchy. Concretely, we consider 
\begin{align}
    \mathcal{S}(\mathcal{H}_L) \to \mathcal{S}(\mathcal{H}_P) \to \mathcal{S}(\mathcal{H}_P) \to \mathcal{S}(\mathcal{H}_L)   
\end{align}
where $\operatorname{dim} \mathcal{H}_L = 2$ and $\mathcal{H}_P = 2^m$.


\subsubsection{Iid-symmetry, isotropic state symmetry and extension symmetry}
\label{subsub:iid_werner_ext}

This section considers the situation of \autoref{fig:symmetry} with an $S_m$ action diagonal on all $P_j^{(i)}$-systems, $1\leq j \leq m$ and $1\leq i \leq n$, a unitary action diagonal on all $L$ systems and a permutation action of the $S_n$ on the last $n$ tensor factors or, in other words, on the extension due to the hierarchy in \autoref{thm:hierarchy_outer_approx}. The abstract strategy for working with $S_m \times S_n \times \mathcal{U}(2)$-modules is always to consider first the $S_m \times S_n$-module $P_{1\ldots m}\bar{P}_{1\ldots m}^{(1\ldots n)}$ and the $\mathcal{U}(2) \times S_n$-module $L\bar{L}^{(1\ldots n)}$. If we have a structural formula for both of them, we can tensor the modules and extract a decomposition into $S_m \times S_n \times \mathcal{U}(2)$ modules. 

So, starting with $\mathcal{H}_{L\bar{L}^{(1\ldots n)}}$, we can simply adopt the calculation coming from Schur-Weyl duality and branching rules. Considered as a $S_{n+1} \times \mathcal{U}(2)$ action, we have
\begin{align}\label{eq:decomposition_L}
    \mathcal{H}_{L\bar{L}^{(1\ldots n)}} \cong_{S_{n+1}\times \mathcal{U}(2)} \bigoplus_{\lambda \vdash n+1} E^\lambda \otimes S^\lambda. 
\end{align}
Now, we just restrict the $S_{n+1}$ to a subgroup $S_n$, which decomposes the $S^\lambda$ modules according to the branching rule, which can be stated as follows
\begin{align}
    S^\lambda \downarrow_{S_{n}} \cong_{S_n} \bigoplus_{A \in \operatorname{Rem}(\lambda)} S^{\lambda_A}.
\end{align}
Here, we follow the standard notations in the representation theory of the symmetric group: a box $A \in Y(\lambda)$ is a removable box of $\lambda$ if $Y(\lambda) \backslash \{A\} = Y(\mu)$ for a partition $\mu \vdash n$ and we write $\lambda_A$ for this partition $\mu$. The set of removable boxes is then denoted as $\operatorname{Rem}(\lambda)$. Inserting this equation just straight ahead into \eqref{eq:decomposition_L}, we get
\begin{align}
    \mathcal{H}_{L\bar{L}^{(1\ldots n)}} \cong_{S_{n}\times \mathcal{U}(2)} \bigoplus_{\lambda \vdash n+1} E^\lambda \otimes \bigoplus_{A \in \operatorname{Rem}(\lambda)} S^{\lambda_A} = \bigoplus_{\lambda \vdash n+1}\bigoplus_{A \in \operatorname{Rem}(\lambda)}  E^\lambda \otimes S^{\lambda_A}.
\end{align}
In particular, we see that the decomposition is multiplicity free, such that all occurring variables would be LP-variables so far.

For the diagonal permutation invariant $P_{1\ldots m}\bar{P}_{1\ldots m}^{(1\ldots n)}$ part with the extension symmetry permuting the last $n$ tensor factors of $\bar{P}_{1\ldots m}^{(1\ldots n)}$ with $\operatorname{dim} P_{j}^{(i)} = 2$, we give the concrete example $n=2$ and $m=3$. The symmetry group then becomes $S_2\times S_3$. For this group, all representation theory is known with the character table

\begin{align}\label{eq:character_table_S3S2}
\begin{array}{c|cccccc|c}
 S_3\times S_2& (e,e) & (e,\sigma) & (\tau,e) & (\tau,\sigma) & (\rho,e) & (\rho,\sigma) & \text{multiplicities}\\
\hline
(1_+,1^+) & 1 & 1 & 1 & 1 & 1 & 1 & 74\\
(1_+,1^-) & 1 & -1 & 1 & -1 & 1 & -1 & 26 \\
(1_-,1^+) & 1 & 1 & -1 & -1 & 1 & 1 & 30\\
(1_-,1^-) & 1 & -1 & -1 & 1 & 1 & -1 & 46\\
(2,1^+)   & 2 & 2 &  0 &  0 & -1 & -1 & 74\\
(2,1^-)   & 2 & -2 &  0 &  0 & -1 &  1 & 94. \\
\end{array}
\end{align}

In \eqref{eq:character_table_S3S2} $\tau$ is a class function representative for the transpositions in $S_3$, similarly $\sigma$ is a representative for the conjugacy classes of transpositions in $S_2$. The variable $\rho$ represents a $3$-cycle in $S_3$. The notation $1_+,1_-$ respectively $1^+,1^-$ represents the trivial and the sign character and $2$ represents the two-dimensional irreducible representation of $S_3$. Moreover, it is straightforward to calculate the multiplicities for this decomposition with character theory. As such, we find
\begin{align}
    74 S^{(3)} \otimes S^{(2)} &\oplus 26 S^{(3)} \otimes S^{(1,1)} \oplus 30 S^{(1,1,1)} \otimes S^{(2)} \oplus 46 S^{(1,1,1)} \otimes S^{(1,1)} \oplus 74 S^{(2,1)}\otimes S^{(2)}\nonumber\\
    &\oplus 94 S^{(2,1)} \otimes S^{(1,1)}.
\end{align}
In particular, we can decompose for the particular case of $n=3$ the decomposition of $\mathcal{H}_{L\bar{L}^{(1,2)}} \cong \operatorname{Sym}^3(L)\otimes S^{(2)} \bigoplus \textbf{2}_{\operatorname{std}} \otimes S^{(2)} \bigoplus\textbf{2}_{\operatorname{std}} \otimes S^{(1,1)}$, where we mean by $\operatorname{Sym}^3(L)$ the symmetric subspace and by $\textbf{2}_{\operatorname{std}}$ the $2$-dimensional irreducible of $L\bar{L}^{(1,2)}$ under diagonal $\mathcal{U}(2)$ action. Now, we can form the tensor product of those modules and get $12$ blocks of sizes

\begin{align}
\begin{array}{|c|c|c|c|}
\hline
\text{multiplicities} & S_3\text{-irreps} & \mathcal{U}(2)\text{-irreps} & S_2\text{-irreps} \\[4mm]
\hline
74  & S^{(3)}      & \operatorname{Sym}^3(L)     & S^{(2)}   \\[2mm]
26  & S^{(3)}      & \operatorname{Sym}^3(L)     & S^{(1,1)} \\[2mm]
30  & S^{(1,1,1)}  & \operatorname{Sym}^3(L)     & S^{(2)}   \\[2mm]
46  & S^{(1,1,1)}  & \operatorname{Sym}^3(L)     & S^{(1,1)} \\[2mm]
74  & S^{(2,1)}    & \operatorname{Sym}^3(L)     & S^{(2)}   \\[2mm]
94  & S^{(2,1)}    & \operatorname{Sym}^3(L)     & S^{(1,1)} \\[2mm]
120 & S^{(3)}      & \mathbf{2}_{\operatorname{std}} & S^{(2)}   \\[2mm]
56 & S^{(3)}      & \mathbf{2}_{\operatorname{std}} & S^{(1,1)} \\[2mm]
56  & S^{(1,1,1)}  & \mathbf{2}_{\operatorname{std}} & S^{(2)}   \\[2mm]
120  & S^{(1,1,1)}  & \mathbf{2}_{\operatorname{std}} & S^{(1,1)} \\[2mm]
168 & S^{(2,1)}    & \mathbf{2}_{\operatorname{std}} & S^{(2)}   \\[2mm]
168 & S^{(2,1)}    & \mathbf{2}_{\operatorname{std}} & S^{(1,1)}. \\
\hline
\end{array}
\end{align}

Comparing this with our original discussion at the beginning of \autoref{sec:symmetry_reduction}, where we only decomposed with respect to the $S_2$, we have not just two large $\sim 2000\times 2000$ blocks, but rather $12$ blocks of reasonable sizes, which can be used (at least theoretically) in an SDP.


\subsubsection{Large depolarizing channel, isotropic state symmetry, and extension symmetry}
\label{subsub:large_depolarizing_werner_ext}

Let us consider the situation of a depolarizing channel 
\begin{align}
    \mathcal{N}:\mathcal{S}(\mathcal{H}_P) \to \mathcal{S}(\mathcal{H}_P)
\end{align}
of arbitrary dimension $P$. Its Choi matrix is invariant under the tensor product representation of the unitary group on $\mathcal{H}_P$ due to \autoref{prop:covariant-choi}. Thus, we are using the arguments from \autoref{prop:enlarged_invariance} and \autoref{thm:small_SDPs} to get a small SDP. With similar arguments as in \autoref{subsub:iid_werner_ext} we can show that for $n\leq 3$, i.e. at most two extensions, upper bounds can be achieved with a linear program.

\begin{proposition}[LP-bounds for depolarizing channel]\label{prop:lp_bounds_depolarizing_channel}
    Consider the global depolarizing channel on $\mathcal{N}_{\operatorname{dep}}:\mathcal{S}(\mathcal{H}_P \to \mathcal{S}(\mathcal{H}_P)$. Then, the optimization problem \autoref{thm:small_SDPs} without extensions is a linear program. With one extension, a several linear variables and four $2\times 2$ SDP variables. 
\end{proposition}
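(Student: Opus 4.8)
The plan is to apply the general machinery of \autoref{thm:small_SDPs} with the group $\mathcal{W}$ taken to be (direct products of) the Choi symmetry $\mathcal{V}=\mathcal{U}(\mathcal{H}_P)$ acting diagonally on all $P\bar P^{(i)}$ systems, the isotropic state symmetry $\mathcal{U}=\mathcal{U}(\mathbb{C}^2)$ acting diagonally on all $L\bar L^{(i)}$ systems, and the extendibility symmetry $S_n$. The key point is that for the small values of $n$ in question, the relevant commutants $\mathcal{M}'_{\mathcal{W}}(\cdot)$ decompose with \emph{all multiplicities $m_\lambda$ equal to $1$} (or bounded by $2$), so that the positive semi-definite matrix variables of size $m_\lambda\times m_\lambda$ appearing in \autoref{thm:small_SDPs} are scalars (or $2\times 2$), reducing the SDP to a linear program (resp.\ to an LP with a few tiny SDP blocks).

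For the no-extension case $n=1$: here the systems are $L\bar L\otimes P\bar P$, and $\mathcal{W}=\mathcal{U}(\mathbb{C}^2)\times\mathcal{U}(\mathcal{H}_P)$ acts as $U\otimes\bar U$ on $L\bar L$ and $V\otimes\bar V$ on $P\bar P$. First I would invoke \autoref{prop:enlarged_invariance} (c) and (b) to restrict the optimization to this commutant. By Schur--Weyl duality for the pair $(V,\bar V)$ on $\mathcal{H}_P\otimes\overline{\mathcal{H}_P}$, the commutant of the diagonal conjugation action is spanned by $\mathds{1}$ and the (unnormalized) maximally entangled projector $\ketbra{\Omega_P}{\Omega_P}$, hence is $2$-dimensional and abelian; likewise the $L\bar L$ commutant is spanned by $\mathds{1}_{L\bar L}$ and $\Phi_{L\bar L}$. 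The full commutant is the tensor product, which is $4$-dimensional and commutative, so $\psi_1$ maps it onto $\mathbb{C}^{1}\oplus\mathbb{C}^{1}\oplus\mathbb{C}^{1}\oplus\mathbb{C}^{1}$: all four $m_\lambda=1$. Then I would check that the constraints $\operatorname{tr}_P[\rho_{LP}]=\mathds{1}_L/d_L$ and $\operatorname{tr}_{\bar P}[\rho_{\bar L\bar P}]=\mathds{1}_{\bar L}/d_{\bar L}$ descend, via \autoref{lem:making_the_constraints_small}, to scalar linear equalities among the four coefficients $x_B$ (the partial traces of the four basis elements are again multiples of $\mathds{1}$ or $\Phi$). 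Feeding this into \eqref{eq:small_problem} one is left with an optimization of a linear functional of four scalars subject to scalar linear equalities and sign constraints: a linear program.

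For the one-extension case $n=2$: now the relevant spaces are $L\bar L^{(1)}\bar L^{(2)}$ under diagonal $\mathcal{U}(\mathbb{C}^2)^{\times 3}$ together with $S_2$ swapping the two $\bar L$'s, and $P\bar P^{(1)}\bar P^{(2)}$ under diagonal $\mathcal{U}(\mathcal{H}_P)^{\times 3}$ together with the same $S_2$. For the qubit $L$-part the decomposition is exactly as computed in \autoref{subsub:iid_werner_ext}: $\mathcal{H}_{L\bar L^{(1,2)}}$ decomposes as $\operatorname{Sym}^3(L)\otimes S^{(2)}\oplus \mathbf{2}_{\mathrm{std}}\otimes S^{(2)}\oplus\mathbf{2}_{\mathrm{std}}\otimes S^{(1,1)}$, which is multiplicity-free as a $\mathcal{U}(2)\times S_2$ module. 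For the $P\bar P^{(1,2)}$-part, by Schur--Weyl for $\mathcal{U}(\mathcal{H}_P)$ acting diagonally on three tensor factors (two barred, one unbarred) the commutant multiplicities are governed by Littlewood--Richardson/branching combinatorics, and for the diagonal action on $\mathcal{H}_P\otimes\overline{\mathcal{H}_P}\otimes\overline{\mathcal{H}_P}$ together with the $S_2$ swap one finds a small number of sectors with multiplicities at most $2$. Tensoring the $L$ and $P$ decompositions over the common $S_2$ gives the full $\mathcal{W}=\mathcal{U}(\mathbb{C}^2)\times\mathcal{U}(\mathcal{H}_P)\times S_2$ decomposition; the sectors coming from the multiplicity-one pieces yield scalar ($1\times1$) blocks, i.e.\ LP variables, and the remaining pieces yield the "four $2\times2$ SDP variables" of the statement. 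As before, \autoref{lem:making_the_constraints_small} guarantees that the Choi constraints (which now involve the $S_2\!\to\!S_1$ branching on the extension systems) become linear equalities within these small blocks, completing the argument via \autoref{thm:small_SDPs}.

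The main obstacle I expect is the bookkeeping in the one-extension case: one must carefully identify how many sectors of the $P\bar P^{(1,2)}$ commutant carry multiplicity $>1$ and verify that this multiplicity is exactly $2$ and occurs in exactly four sectors, so that the count "several linear variables and four $2\times2$ SDP variables" comes out right; this requires tracking the interaction between the diagonal $\mathcal{U}(\mathcal{H}_P)$ action (whose commutant on the two barred copies is the group algebra of $S_2$, again two-dimensional) and the \emph{same} $S_2$ acting as the extendibility symmetry — a Schur--Weyl--type coincidence that one has to use rather than fight, exactly the phenomenon flagged at the end of \autoref{sec:commutative_relaxation}. Everything else is a routine application of the preceding results.
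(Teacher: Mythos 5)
Your overall strategy is the same as the paper's: restrict to the commutant of the combined Choi/isotropic/extendibility symmetry via \autoref{prop:enlarged_invariance} and \autoref{thm:small_SDPs}, decompose it with Schur--Weyl-type arguments, and read off block sizes from the multiplicities. Your treatment of the no-extension case is fine: a four-dimensional abelian commutant (spanned by identity/flip-type operators on $P\bar P$ and on $L\bar L$) gives four $1\times 1$ sectors and hence an LP, which matches the paper's "trivial case" table with partitions $(2),(1,1)$ on each side.

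However, for the one-extension case there is a genuine gap: the quantitative heart of the proposition\,---\,that exactly four sectors acquire multiplicity $2$ and everything else stays multiplicity-free\,---\,is precisely the computation you defer ("the main obstacle I expect is the bookkeeping"), so the claim "several linear variables and four $2\times 2$ SDP variables" is never established. Moreover, your sketch locates the multiplicities in the wrong place: you attribute them to the commutant of the $P\bar P^{(1)}\bar P^{(2)}$ part alone (via walled-Brauer/Littlewood--Richardson combinatorics for $V\otimes\bar V\otimes\bar V$), whereas in the paper's bookkeeping both the $P$-part and the $L$-part are individually multiplicity-free as $S_2\times\mathcal{U}$ modules, namely $\bigoplus_{\mu\vdash 3} E^\mu\otimes\bigoplus_{A\in\operatorname{Rem}(\mu)}S^{\mu_A}$ by Schur--Weyl for the diagonal action plus the $S_3\downarrow S_2$ branching rule. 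The multiplicities arise only when the two sides are tensored over the \emph{common} extendibility $S_2$, i.e.\ from the Kronecker products $S^{\mu_A}\otimes S^{\lambda_B}$: since $S^{(2,1)}\downarrow_{S_2}=S^{(2)}\oplus S^{(1,1)}$ and $S^{(1,1,1)}\downarrow_{S_2}=S^{(1,1)}$, the pairs $(\mu,\lambda)=((2,1),(2,1))$ (contributing $2S^{(2)}\oplus 2S^{(1,1)}$, two sectors), $((2,1),(1,1,1))$ and $((1,1,1),(2,1))$ (one sector each) are exactly the four multiplicity-$2$ sectors, while all other pairs stay multiplicity-free. Without carrying out this (or an equivalent) count, your argument establishes only that the blocks are "small", not the specific structure asserted in the statement.
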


\begin{proof}
    We consider the spaces $V \coloneqq \mathcal{H}_{P \bar{P}^{(1\ldots 3)}} \otimes \mathcal{H}_{L \bar{L}^{(1\ldots 3)}}$. then we make use of the fact that we can consider $V$ as a $\mathcal{W} \coloneqq S_3\otimes \mathcal{U}(\operatorname{dim P}) \otimes \mathcal{U}(\operatorname{dim} L)$ module. In particular we can decompose 
    \begin{align}
        V \cong_{G} \mathcal{H}_{P \bar{P}^{(1\ldots 3)}} \otimes \mathcal{H}_{L \bar{L}^{(1\ldots 3)}}
    \end{align}
    with each of them as $\mathbb{C} \mathcal{W}$-modules. For the second one, i.e. $\mathcal{H}_{L \bar{L}^{(1\ldots 3)}}$ we already know from the discussion in \autoref{subsub:iid_werner_ext} how it decomposes
    \begin{align}
        \mathcal{H}_{L\bar{L}^{(1\ldots n)}} \cong_{S_{n}\times \mathcal{U}(\operatorname{dim} L )} \bigoplus_{\lambda \vdash n+1} E^\lambda \otimes \bigoplus_{A \in \operatorname{Rem}(\lambda)} S^{\lambda_A} = \bigoplus_{\lambda \vdash n+1}\bigoplus_{A \in \operatorname{Rem}(\lambda)}  E^\lambda \otimes S^{\lambda_A}.
    \end{align}
    Exactly with the same reasoning and arguments we get a decomposition for $\mathcal{H}_{P\bar{P}^{(1\ldots 3)}}$
    \begin{align}
        \mathcal{H}_{P\bar{P}^{(1\ldots n)}} \cong_{S_{n}\times \mathcal{U}(\operatorname{dim} P)} \bigoplus_{\mu \vdash n+1} E^\mu \otimes \bigoplus_{A \in \operatorname{Rem}(\mu)} S^{\mu_A} = \bigoplus_{\mu \vdash n+1}\bigoplus_{A \in \operatorname{Rem}(\mu)}  E^\mu \otimes S^{\mu_A}.
    \end{align}
    The only difference is that the dimensions of the Schur modules $E^\lambda$ respectively $E^\mu$ adopts to the dimensions $\operatorname{dim} P $ respectively $\operatorname{dim} L$. To get now a decomposition of the full $\mathbb{C} \mathcal{W}$-module, we form the tensor product
    \begin{equation}
    \begin{aligned}
        \mathcal{H}_{P\bar{P}^{(1\ldots n)}} \otimes \mathcal{H}_{L\bar{L}^{(1\ldots n)}} &\cong \bigoplus_{\mu \vdash n+1}\bigoplus_{A \in \operatorname{Rem}(\mu)}  E^\mu \otimes S^{\mu_A} \otimes \bigoplus_{\lambda \vdash n+1}\bigoplus_{A \in \operatorname{Rem}(\lambda)}  E^\lambda \otimes S^{\lambda_A} \\
        &= \bigoplus_{\mu \vdash n+1}\bigoplus_{\lambda \vdash n+1} E^\mu \otimes E^\lambda \otimes \bigoplus_{A \in \operatorname{Rem}(\mu)} \bigoplus_{B \in \operatorname{Rem}(\lambda)}   S^{\mu_A} \otimes S^{\lambda_B}
    \end{aligned}
    \end{equation}
    which yields for fixed $\mu$ and $\lambda$ the following table
\begin{align}
\begin{array}{c|ccc}
(\mu,\lambda) & (3) & (2,1) & (1,1,1) \\[6pt]
\hline
(3)
 & S^{(2)}
 & S^{(1,1)} \;\oplus\; S^{(2)}
 &  S^{(1,1)}
\\[8pt]
(2,1)
 & S^{(1,1)} \;\oplus\; S^{(2)}
 & 2 S^{(2)} \;\oplus\;2 S^{(1,1)}
 & 2 S^{(1,1)}
\\[8pt]
(1,1,1)
 & S^{(1,1)}
 &  2S^{(1,1)}
 & S^{(2)}.
\end{array}
\end{align}
From that we see that only the four combinations out of $(2,1)$ and $(1,1,1)$ get multiplicities. The statement about the LP is easily seen from the trivial case
\begin{align}
\begin{array}{c|cc}
(\mu,\lambda) & (2) & (1,1) \\[6pt]
\hline
(2)   & S^{(1)} \otimes S^{(1)} & S^{(1)} \otimes S^{(1)} \\[6pt]
(1,1) & S^{(1)} \otimes S^{(1)} & S^{(1)} \otimes S^{(1)}.
\end{array}
\end{align}

\end{proof}


\subsubsection{Identical and independent copies of a depolarizing channel and isotropic state symmetry}
\label{subsub:iid_dep_werner}

In contrast to \autoref{subsub:large_depolarizing_werner_ext} we consider in this section the $m$-fold product of single qubit depolarizing channels. Importantly, the Choi matrix of the depolarizing channel is even flip invariant. This implies in \autoref{fig:symmetry} that even flips between $(1,m+1),(2,m+2)...$ are valid symmetries.

\begin{corollary}[see \cite{Wang_2019}]
    Consider the $m$-fold depolarizing channel $\mathcal{N}_{\operatorname{dep}}^{\otimes m}:\mathcal{S}(\mathcal{H}_{P_j})\to \mathcal{S}(\mathcal{H}_{P_j})$. Then, the optimization program in \autoref{thm:small_SDPs} without extension symmetry is a linear program.
\end{corollary}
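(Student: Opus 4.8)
The plan is to feed into \autoref{thm:small_SDPs} the full symmetry group available for the single-shot problem (no extensions, i.e.\ $n=1$, where the extendibility group $S_n$ is trivial) and to show that the resulting commutant $\mathcal{M}'_{\mathcal{W}}(\mathcal{H}_{LP\bar L\bar P})$ is a \emph{commutative} algebra. Once this is in place, the $\star$-isomorphism $\psi_1$ of \autoref{thm:small_SDPs} sends it to a direct sum $\bigoplus_\lambda \mathbb{C}^{m_\lambda\times m_\lambda}$ with every $m_\lambda=1$, so each positive-semidefiniteness block in \eqref{eq:small_problem} collapses to a single nonnegative scalar $x_\lambda\geq 0$; the equality constraints, which for $n=1$ live by \autoref{lem:making_the_constraints_small} in further commutative sub-algebras, reduce to scalar equalities; and the objective is a fixed linear functional of the coefficients $x_B$. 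Hence \eqref{eq:small_problem} is a linear program, recovering the observation of \cite{Wang_2019}.

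First I would fix $\mathcal{W}$. For $n=1$ the Hilbert space is $\mathcal{H}_{L\bar L}\otimes\bigotimes_{i=1}^m(\mathcal{H}_{P_i}\otimes\mathcal{H}_{\bar P_i})$ with all factors $\cong\mathbb{C}^2$, and after the global transformation the objective operator is $\Phi_{L\bar L}\otimes C_{\mathcal{M}_{\operatorname{dep}}}^{\otimes m}$. Using $X\otimes X+Y\otimes Y+Z\otimes Z = 2\mathbb{F}-\mathds{1}\otimes\mathds{1}$, one writes $C_{\mathcal{M}_{\operatorname{dep}}} = \alpha\,\mathds{1}\otimes\mathds{1}+\beta\,\mathbb{F}$ for scalars $\alpha,\beta$ depending on $p$, so each tensor factor is invariant under $U\otimes U$ for all $U\in\mathcal{U}(2)$ and under the flip $\mathbb{F}$. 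Because $C_{\mathcal{M}_{\operatorname{dep}}}^{\otimes m}$ is a product, it is invariant under the \emph{independent} unitaries $\bigotimes_{i=1}^m(U_i\otimes U_i)$, under each of the $m$ flips $\mathbb{F}_i$ (the extra flip symmetry noted in the statement), and under $S_m$ permuting the $m$ pairs (the iid-symmetry of \autoref{fig:symmetry}); on $\mathcal{H}_{L\bar L}$ the isotropic-state symmetry gives $U\otimes\bar U$. I would take $\mathcal{W}$ to be generated by all of these. Since each $U_i\otimes U_i$ is a product unitary, it keeps the maximally-mixed-marginal Choi constraints invariant, so the argument in the proof of \autoref{prop:enlarged_invariance} applies verbatim: group-averaging over $\mathcal{W}$ preserves the trace, the Choi constraints and positivity, and restricting the optimisation variable to $\mathcal{M}'_{\mathcal{W}}$ does not change the optimal value.

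Second I would compute the commutant. By Schur--Weyl duality the commutant of $\{U\otimes U\}$ on $\mathbb{C}^2\otimes\mathbb{C}^2$ is $\operatorname{span}\{\mathds{1}\otimes\mathds{1},\mathbb{F}\}$, a two-dimensional commutative algebra already containing $\mathbb{F}$, so adjoining the flip changes nothing; likewise the commutant of $\{U\otimes\bar U\}$ on $\mathcal{H}_{L\bar L}$ is $\operatorname{span}\{\Phi_{L\bar L},\mathds{1}_{L\bar L}\}$, again two-dimensional and commutative, since $\mathbb{C}^2\otimes\overline{\mathbb{C}^2}$ splits multiplicity-free as trivial $\oplus$ adjoint. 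Tensoring the $m$ pair-commutants with the $L\bar L$-commutant yields a commutative algebra, and passing to its $S_m$-invariant part (a sub-algebra of a commutative algebra) keeps it commutative; concretely $\mathcal{M}'_{\mathcal{W}}(\mathcal{H}_{LP\bar L\bar P})$ is spanned by $(m+1)\cdot 2$ mutually orthogonal idempotents labelled by the number of pairs carrying $\mathbb{F}$ and the $L\bar L$-label. Commutativity forces all $m_\lambda=1$ in $\psi_1$, and the same reasoning applied to the partial-trace images governing $\psi_2$ and $\psi_3$ (which for $n=1$ inherit commutativity, with the $\bar L$-blocks shrinking to $\mathbb{C}\mathds{1}$ by Schur's lemma) forces $n_\lambda=f_\lambda=1$; plugging this into \eqref{eq:small_problem} leaves only scalar variables, scalar equalities and a linear objective, i.e.\ an LP.

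The main obstacle is the bookkeeping of \emph{which} unitary action one may invoke. One must exploit the \emph{independent} per-pair covariance $\bigotimes_i(U_i\otimes U_i)$ rather than the diagonal action $\bigotimes_i(U\otimes U)$ with a single $U$: the latter has commutant $\mathbb{C}[S_{2m}]$, which is non-commutative and would not yield an LP. Justifying the enlarged group thus hinges on the product form of $U_i\otimes U_i$ (so that the Choi constraints survive the averaging, exactly as in the proof of \autoref{prop:enlarged_invariance}) and on the fact that combining $\mathcal{U}(2)^{\times m}$ with $S_m$ does not reintroduce multiplicities, which holds because the $S_m$-invariant part of a commutative algebra is commutative. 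Everything else — that the objective $x\mapsto\sum_B x_B\operatorname{tr}[(\Phi_{L\bar L}\otimes C_{\mathcal{M}_{\operatorname{dep}}}^{\otimes m})B]$ and the normalisation and Choi equalities are linear after reduction — is immediate from the linearity of $\psi_1,\psi_2,\psi_3$ and of the trace.
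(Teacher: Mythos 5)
Your argument is correct in substance, but it takes a genuinely different route from the paper's own proof. The paper works at the level of qubit permutations: it combines the iid pair-permutations with the per-pair flips (flip invariance of $C_{\mathcal{M}_{\operatorname{dep}}}$), asserts that these generate the full $S_{2m}$ on the qubits $P_{1\ldots m}\bar{P}_{1\ldots m}$, and then invokes Schur--Weyl duality for $S_{2m}\times\mathcal{U}(2)$ to conclude multiplicity-freeness and hence an LP, with the $L\bar{L}$ part "not changing anything". You instead exploit the \emph{independent} per-pair covariance $\bigotimes_{i}(U_i\otimes U_i)$: its commutant is the commutative algebra $\operatorname{span}\{\mathds{1},\mathbb{F}\}^{\otimes m}$, the $S_m$-invariant part of a commutative algebra stays commutative, and tensoring with the two-dimensional commutant on $L\bar{L}$ keeps everything commutative, so every $m_\lambda$ (and $n_\lambda$, $f_\lambda$) equals one and \eqref{eq:small_problem} collapses to a linear program. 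Your route buys robustness: the pair-permutations together with the flips in fact only generate the hyperoctahedral group $S_2\wr S_m$ (every generator preserves the pairing $\{P_i,\bar{P}_i\}$), not all of $S_{2m}$, so the paper's reduction really leans on the unitary covariance in any case; you make the precise symmetry that does the work explicit, and you correctly verify that these unitaries, being product across the $P:\bar{P}$ cut, preserve the Choi constraints under twirling, exactly as in \autoref{prop:enlarged_invariance}. Your remark that the diagonal action $U^{\otimes 2m}$ alone would only leave the non-commutative algebra $\mathbb{C}[S_{2m}]$ is also the right point of contrast between the two strategies.

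One small imprecision: you place the flips $\mathbb{F}_i$ inside the twirling group $\mathcal{W}$ and claim the averaging argument of \autoref{prop:enlarged_invariance} applies verbatim. It does not cover the flips, since a flip exchanges the asymmetric roles of $P_i$ and $\bar{P}_i$ in the constraints; for instance, it maps the feasible point $\Phi_{LP}\otimes\Phi_{\bar{L}\bar{P}}$ to $\Phi_{L\bar{P}}\otimes\Phi_{\bar{L}P}$, which violates $\operatorname{tr}_L[\rho]=\tfrac{\mathds{1}_P}{d_P}\otimes\rho_{\bar{L}\bar{P}}$. This is harmless for your proof: as you observe yourself, the flips already lie in the commutative commutant of the per-pair unitaries, so adjoining them neither shrinks the commutant nor is needed --- simply drop them from $\mathcal{W}$ and the argument goes through unchanged.
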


\begin{proof}
    We apply directly Schur-Weyl duality to the whole $P_{1\ldots m}\bar{P}_{1 \ldots m}$ block with respect to the $S_{2m}$, which is possible, because we the Choi matrix of the depolarizing channel is flip invariant what implies that in \autoref{fig:symmetry} we have in particular the transposition $(1,m+1),\ldots(m-1,m)(2m-1,2m)$. For example 
    \begin{align}
        (1,m+1)\;\bigl((1,2)(m+1,m+2)\bigr)\;(1,m+1) = (1,2),
    \end{align}
    which can be done for all such elements in both columns. Thus, we get $S_m \times S_m$ and with just one additional transposition $(1,m+1)$, we can apply e.g. the graph connection criterion, which implies that we can generate the whole $S_{2m}$. Thus, we have the full Schur-Weyl duality as irreducible decomposition for $S_{2m}$ with qubits $P_{1\ldots m}\bar{P}_{1 \ldots m}$. This implies directly an LP formulation, because no multiplicities occur. In particular the tensor product with the diagonal $U$ action over $L$ and $\bar{L}$ does not change anything. 
\end{proof}


\section{Numerical considerations}\label{sec:numerical_results}

We provide some explanations for numerical implementations, as things turn out to be a bit trickier than simply applying the presented symmetry reduction tools. In particular, we are concerned with the case of three iid-qubits and isotropic state symmetry as already described in \autoref{subsub:iid_werner_ext}. The reason for the difficulty is on one hand that we deal with a mixture of a compact group, $\mathcal{U}(2)$, and $S_2\times S_3$ and on the other hand the dimensions are too large for applying numerical methods such as RepLAB \cite{Rosset_2021} directly. Thus, we consider a $4096$-dimensional vector space and construct a basis for its invariant subspace under the group action of $S_2\times S_3 \times \mathcal{U}(2)$.


\subsection{Block structure and decomposition strategy}
\label{subsec:block_structure_decomposing_strategy}

Because the full $4096$-dimensional space is too large to decompose directly, the method is to first split the space as a tensor product between the $P\bar{P}^{(1,2)}$ block (a $3 \times 3$ array) and the $L\bar{L}^{(1,2)}$ systems. However, this splitting presents a \emph{basis alignment problem}: the tensor product structure does not automatically align the bases from different subspaces in a way that controls local unitary degrees of freedom. To overcome this, the strategy is to decompose the $W_i$ block
\begin{align}
\begin{array}{ccc}
    P_1 & \bar{P}_1 & \bar{P}_1^{(2)} \\
    P_2 & \bar{P}_2 & \bar{P}_2^{(2)} \\
    P_3 & \bar{P}_3 & \bar{P}_3^{(2)} 
    \end{array}
\end{align}
as a direct sum of twice itself, whereby the action of $S_2$ should be different on both blocks. One part corresponds to the subspace on which the $S_2$ acts with eigenvalue $+1$ (symmetric) and the other where $S_2$ acts with eigenvalue $-1$ (antisymmetric). Once these two blocks are obtained and further decomposed with respect to the $S_3$ symmetry (which permutes the rows), one can use a known Clebsch--Gordan basis to properly align the decomposition. This step can be done in a few hours with RepLAB \cite{Rosset_2021}.


\subsection{\texorpdfstring{Clebsch--Gordan basis and the $\overline{u}\otimes u\otimes u$ representation}{}}

To decompose the $L\bar{L}^{(1,2)}$-systems we consider a concrete realization of the Clebsch--Gordan coefficients as
\begin{align*}
v_1 &= \ket{100},\\[1mm]
v_2 &= \frac{-\ket{101} - \ket{110} + \ket{000}}{\sqrt{3}},\\[1mm]
v_3 &= \frac{-\ket{111} + \ket{001} + \ket{010}}{\sqrt{3}},\\[1mm]
v_4 &= \ket{011},\\[1mm]
v_5 &= \frac{-\ket{101} - \ket{110} - 2\ket{000}}{\sqrt{6}},\\[1mm]
v_6 &= \frac{\ket{001} + \ket{010} + 2\ket{111}}{\sqrt{6}},\\[1mm]
v_7 &= \frac{-\ket{101} + \ket{110}}{\sqrt{2}},\\[1mm]
v_8 &= \frac{\ket{010} - \ket{001}}{\sqrt{2}}.
\end{align*}
Here, the normalization factors $\sqrt{2}$, $\sqrt{3}$, and $\sqrt{6}$ ensure that each vector is of unit norm. Notice that the construction leads to a representation of the form
\begin{align}
\overline{u}\otimes u\otimes u,
\end{align}
which arises naturally from the isotropic state defined on the $L\bar{L}^{(1,2)}$-systems. Moreover, the subspace on the $L\bar{L}^{(1,2)}$ systems decomposes under $S_2$ into the symmetric subspace plus two copies of the standard two-dimensional spin representation.


\subsection{PPT and non-signaling constraints}\label{subsec:ppt_ns}

It is known that AQEC can be described as a more general channel-coding problem (see, for example, \cite{Leung_2015} and references therein). In particular, a straightforward scenario would allow arbitrary codes with unrestricted correlations between the encoder and decoder. However, in the context of approximate quantum error correction, we focus on the strongest assumptions for both encoder and decoder—namely, so-called \emph{unassisted codes}. Unassisted codes must satisfy a PPT condition as well as non-signaling assistance. To include PPT constraints, we use the following lemma (which follows by direct calculation).

\begin{lemma}\label{lem:ppt}
Let $\rho_{AB}$ be a density operator on a bipartite Hilbert space $\mathcal{H}_A \otimes \mathcal{H}_B$. Suppose there is a finite group of unitaries $\mathcal{U}(\mathcal{H})$ on $\mathcal{H}_B$ such that
\begin{align}
(U \otimes U) \,\rho_{AB}\, (U \otimes U)^\dagger = \rho_{AB}
\quad\text{for all } U \in \mathcal{U}(\mathcal{H}).
\end{align}
Then the partial transpose of $\rho_{AB}$ with respect to $B$ is invariant under $U \otimes \overline{U}$, i.e.,
\begin{align}
(U \otimes \overline{U}) \,\rho_{AB}^{T_B}\, (U \otimes \overline{U})^\dagger 
= \rho_{AB}^{T_B}
\quad\text{for all } U \in \mathcal{U}(\mathcal{H}).
\end{align}
\end{lemma}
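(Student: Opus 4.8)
The plan is to isolate a single elementary identity describing how the partial transpose interchanges left and right multiplication on the transposed tensor factor, and then apply it to the given invariance relation. As a preliminary observation I would record that, for every operator $M$ on $\mathcal{H}_A \otimes \mathcal{H}_B$ and all operators $Y_1, Y_2$ on $\mathcal{H}_B$,
\[
\big[(\mathds{1}_A \otimes Y_1)\, M\, (\mathds{1}_A \otimes Y_2)\big]^{T_B} = (\mathds{1}_A \otimes Y_2^{T})\, M^{T_B}\, (\mathds{1}_A \otimes Y_1^{T}),
\]
while conjugation by operators supported on $\mathcal{H}_A$ passes through $T_B$ unchanged, i.e. $\big[(X_1 \otimes \mathds{1}_B)\, M\, (X_2 \otimes \mathds{1}_B)\big]^{T_B} = (X_1 \otimes \mathds{1}_B)\, M^{T_B}\, (X_2 \otimes \mathds{1}_B)$. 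Both facts are checked on elementary tensors $M = X \otimes Z$ using $(Y_1 Z Y_2)^{T} = Y_2^{T} Z^{T} Y_1^{T}$ and then extended by linearity; this is the only computational content.

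Next I would apply $T_B$ to both sides of the hypothesis $\rho_{AB} = (U \otimes U)\,\rho_{AB}\,(U \otimes U)^\dagger$. Factoring $U \otimes U = (U \otimes \mathds{1}_B)(\mathds{1}_A \otimes U)$ and $(U \otimes U)^\dagger = (\mathds{1}_A \otimes U^\dagger)(U^\dagger \otimes \mathds{1}_B)$, the $\mathcal{H}_A$-part commutes with $T_B$, and the preliminary identity converts the inner $B$-conjugation by $(U, U^\dagger)$ into conjugation by $(U^\dagger)^{T}$ on the left and $U^{T}$ on the right, giving
\[
\rho_{AB}^{T_B} = (U \otimes \mathds{1}_B)(\mathds{1}_A \otimes (U^\dagger)^{T})\, \rho_{AB}^{T_B}\, (\mathds{1}_A \otimes U^{T})(U^\dagger \otimes \mathds{1}_B).
\]
Using the conjugation conventions $\overline{U} = (U^\dagger)^{T}$ and $U^{T} = (\overline{U})^{\dagger}$, valid for an arbitrary matrix $U$, the right-hand side is exactly $(U \otimes \overline{U})\, \rho_{AB}^{T_B}\, (U \otimes \overline{U})^{\dagger}$, which is the asserted invariance. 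I would also note that the finiteness of the group plays no role here: the identity is established for each individual unitary $U$, hence for every group element; finiteness is only needed downstream if one wishes to average over the group to produce a projector, as in the accompanying symmetry-reduction statements.

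The main obstacle is purely bookkeeping — keeping the three operations $U \mapsto \overline{U}$, $U \mapsto U^{T}$, $U \mapsto U^\dagger$ straight, and tracking which of $U$, $U^\dagger$ lands on the left versus the right once $T_B$ reverses the order of the $B$-side multipliers. There is no conceptual difficulty (this is why the paper flags it as a direct calculation); the only care needed is to fix once and for all that $T_B$ sends $(\mathds{1}_A \otimes Y_1)\, M\, (\mathds{1}_A \otimes Y_2)$ to $(\mathds{1}_A \otimes Y_2^{T})\, M^{T_B}\, (\mathds{1}_A \otimes Y_1^{T})$ and to apply it with $Y_1 = U$, $Y_2 = U^\dagger$.
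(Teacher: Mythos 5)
Your proof is correct and is precisely the direct calculation the paper alludes to (the paper gives no written proof, stating only that the lemma ``follows by direct calculation''): the identity $\bigl[(\mathds{1}_A \otimes Y_1)M(\mathds{1}_A \otimes Y_2)\bigr]^{T_B} = (\mathds{1}_A \otimes Y_2^{T})M^{T_B}(\mathds{1}_A \otimes Y_1^{T})$ applied with $Y_1 = U$, $Y_2 = U^\dagger$, together with $\overline{U} = (U^\dagger)^T$, yields the claim, and your remark that finiteness of the group is irrelevant to this step is also accurate.
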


\autoref{lem:ppt} then shows that we can implement the PPT constraints similarly to the linear Choi constraints, because it is easy to see that we can mimic \autoref{prop:enlarged_invariance} for the reduced state on the first bi-partition as well and get exactly such symmetries as described in \autoref{lem:ppt}.\footnote{This solves in particular the question whether one can impose PPT constraints in the symmetry reduced hierarchy in \cite{chee2024efficient}.} Thus, in practice we just need to use the dual representation for the basis, where a representation and its dual representation equal up to a global unitary.\\ 

The non-signaling constraints are on the other hand somehow more involved. The non-signaling constraints imply linear constraints on the objective function, so they are SDP representable. Furthermore, as observed in \cite{berta2021semidefinite}, the constraint
\begin{align}\label{eq:non_signalling_1}
    \operatorname{tr}_{L}[\rho_{LP (\bar{L}\bar{P})^{(1\ldots n)}}] = \frac{\mathds{1}_P}{d_P} \otimes \rho_{ (\bar{L}\bar{P})^{(1\ldots n)}}
\end{align}
in \autoref{thm:hierarchy_outer_approx} is exactly the first of the two non-signaling constraints (usually called non-signaling from Bob to Alice). However, the other way around, would lead to a constraint looking like
\begin{align}\label{eq:non_signalling_2}
    \operatorname{tr}_{P}[\rho_{LP (\bar{L}\bar{P})^{(1\ldots n)}}] = \rho_{LP (\bar{L}\bar{P})^{(1\ldots n-1)}}\otimes \frac{\mathds{1}_{\bar{L}^{(n)}}}{d_{\bar{L}^{(n)}}},
\end{align}
which is not required for the convergence of \autoref{thm:hierarchy_outer_approx} (called non-signaling from Alice to Bob). This difference has an impact on the numerical results. Our implementation at the moment works only with the non-signaling condition in \eqref{eq:non_signalling_1}, but with more work one could also include \eqref{eq:non_signalling_2}.


\subsection{Numerical results}\label{subsec:results}

We provide two results calculated with the techniques described in this work. We symmetrize the task of encoding one qubit into three qubits and consider the case of iid-noise. The resulting optimization problem in dimension $4096$ is then rewritten with the symmetry reduction methods described in \autoref{sec:commutative_relaxation} into $12$ blocks as described in \autoref{subsub:iid_werner_ext}. To handle the large dimensions we used the tools from \autoref{subsec:block_structure_decomposing_strategy} and to impose the PPT constraint we used the \autoref{lem:ppt} to relate it back to the usual cases of linear constraints we have already discussed in \autoref{prop:enlarged_invariance}. 

\begin{figure}
    \centering
    
\begin{tikzpicture}
\begin{axis}[
    xmin=0,
    xmax=0.7,
    width=14cm,
    height=8cm,
    xlabel={depolarizing parameter},
    ylabel={channel fidelity},
    title={Outer bounds for three qubits iid-depolarizing noise},
    legend pos=north east,
    grid=major
]

\addplot+[mark=none, color=MidnightBlue] coordinates {
(0,1.00000000000065)
(0.0344827586206897,0.99006468100788)
(0.0689655172413793,0.977592357214459)
(0.103448275862069,0.96272144900926)
(0.137931034482759,0.945590247277051)
(0.172413793103448,0.926337290762238)
(0.206896551724138,0.905100660952972)
(0.241379310344828,0.882019045478011)
(0.275862068965517,0.857230720462796)
(0.310344827586207,0.830874062077705)
(0.344827586206897,0.803087457514341)
(0.379310344827586,0.77400928697465)
(0.413793103448276,0.743777932963072)
(0.448275862068966,0.712532036426432)
(0.482758620689655,0.680409200871863)
(0.517241379310345,0.647548587479795)
(0.551724137931034,0.61408831850579)
(0.586206896551724,0.580166776005883)
(0.620689655172414,0.545922342039458)
(0.655172413793103,0.511493398686546)
(0.689655172413793,0.500000041176634)
};
\addlegendentry{Level $1$, PPT, NS (Bob $\rightarrow$ Alice)}

\addplot+[mark=none, color=PineGreen] coordinates {
(0,1.00000006981159)
(0.0344827586206897,0.988335506995324)
(0.0689655172413793,0.974380490062433)
(0.103448275862069,0.958266847580353)
(0.137931034482759,0.940115047747567)
(0.172413793103448,0.920060491428538)
(0.206896551724138,0.898229409773625)
(0.241379310344828,0.874749852253308)
(0.275862068965517,0.849750753083169)
(0.310344827586207,0.823359943925888)
(0.344827586206897,0.795702948318388)
(0.379310344827586,0.766912244285844)
(0.413793103448276,0.737113603352771)
(0.448275862068966,0.70643627114731)
(0.482758620689655,0.675007982893763)
(0.517241379310345,0.642956891760274)
(0.551724137931034,0.610412015155685)
(0.586206896551724,0.577499861488063)
(0.620689655172414,0.544349390496222)
(0.655172413793103,0.511089107598921)
(0.689655172413793,0.50000039263973)

};
\addlegendentry{Level $2$, PPT, NS (Bob $\rightarrow$ Alice)}

\end{axis}
\end{tikzpicture}
\begin{tikzpicture}
\begin{axis}[
    xmin=0,
    xmax=1,
    width=14cm,
    height=8cm,
    xlabel={damping parameter},
    ylabel={channel fidelity},
    title={Outer bounds for three qubits iid-amplitude damping noise},
    legend pos=south west,
    grid=major
]

\addplot+[mark=none, color=MidnightBlue] coordinates {
(0,1.00000000000065)
(0.0344827586206897,0.999813991131562)
(0.0689655172413793,0.999222087029469)
(0.103448275862069,0.998171484910239)
(0.137931034482759,0.996603035226908)
(0.172413793103448,0.994451484601565)
(0.206896551724138,0.991646445752033)
(0.241379310344828,0.988109681603265)
(0.275862068965517,0.983758909905463)
(0.310344827586207,0.978504465752878)
(0.344827586206897,0.972248052522798)
(0.379310344827586,0.964889893267677)
(0.413793103448276,0.95632251217589)
(0.448275862068966,0.94643099425421)
(0.482758620689655,0.93509475156418)
(0.517241379310345,0.922185118504099)
(0.551724137931034,0.907571021274835)
(0.586206896551724,0.891114802367654)
(0.620689655172414,0.872676502864069)
(0.655172413793103,0.852114751390449)
(0.689655172413793,0.829292716282071)
(0.724137931034483,0.804058816154471)
(0.758620689655172,0.776268856659787)
(0.793103448275862,0.745794266933805)
(0.827586206896552,0.712519689712587)
(0.862068965517241,0.67634539911029)
(0.896551724137931,0.637185774810756)
(0.931034482758621,0.594954870218209)
(0.96551724137931,0.549411549808639)
(1,0.500000000891859)
};
\addlegendentry{Level $1$, PPT, NS (Bob $\rightarrow$ Alice)}

\addplot+[mark=none, color=PineGreen] coordinates {
(0,1.00000006981159)
(0.0344827586206897,0.999663279903013)
(0.0689655172413793,0.998656783235379)
(0.103448275862069,0.996975629140083)
(0.137931034482759,0.994597125658287)
(0.172413793103448,0.991491715085833)
(0.206896551724138,0.987612269234304)
(0.241379310344828,0.982912659659612)
(0.275862068965517,0.977327885351905)
(0.310344827586207,0.970795439460163)
(0.344827586206897,0.963245381846756)
(0.379310344827586,0.95460129312932)
(0.413793103448276,0.944782291413351)
(0.448275862068966,0.933699607287708)
(0.482758620689655,0.921261178396304)
(0.517241379310345,0.907369718766138)
(0.551724137931034,0.89192730488573)
(0.586206896551724,0.874825654313323)
(0.620689655172414,0.855957705937527)
(0.655172413793103,0.835211040398635)
(0.689655172413793,0.812469732536723)
(0.724137931034483,0.787613432858826)
(0.758620689655172,0.760518062420908)
(0.793103448275862,0.731051827714754)
(0.827586206896552,0.699075756463849)
(0.862068965517241,0.664438353508194)
(0.896551724137931,0.626957301903186)
(0.931034482758621,0.586911650810086)
(0.96551724137931,0.544519142012839)
(1,0.500000506548027)
};
\addlegendentry{Level $2$, PPT, NS (Bob $\rightarrow$ Alice)}

\end{axis}
\end{tikzpicture}

\caption{We consider outer bounds on the channel fidelity when encoding a single qubit into three iid qubits subject to depolarizing respectively amplitude damping noise. In addition to the PPT constraint, we include a non-signaling constraint preventing signaling from Bob to Alice, as discussed in \autoref{subsec:ppt_ns}. For the first level of the hierarchy, we use the code provided in \cite{berta2021semidefinite}, but we do not include the non-signaling constraint from Alice to Bob, since the standard hierarchy \autoref{thm:hierarchy_outer_approx} does not assume it.
}
    \label{fig:dep_results}
\end{figure}
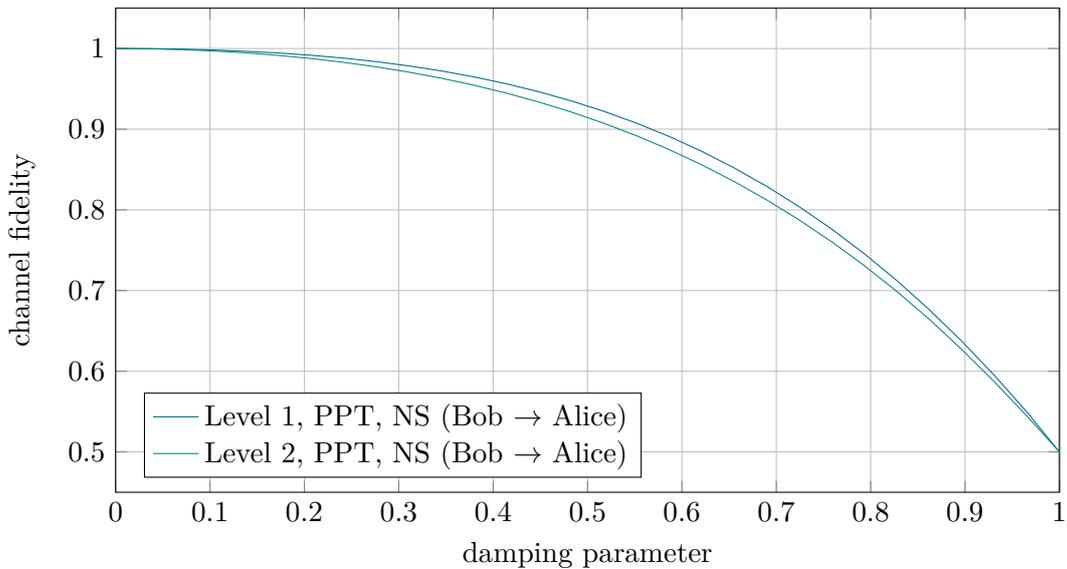


\section{Combining general symmetries of optimization variable and objective function}
\label{sec:framework}

\subsection{General framework} In this section we use the AQEC problem as and in particular its SDP relaxation as a motivation for a general discussion of the following question. 

\begin{question}\label{quest:symmetries}
Let $\mathcal{U}$ and $\mathcal{V}$ be subgroups of the unitary group $\mathcal{U}(\mathcal{H})$ and consider an SDP 
\begin{equation}\label{eq:symmtry_theorem_problem}
        \begin{aligned}
            \sup \ &\operatorname{tr}[H \rho] \\
            \operatorname{s.t.} \  &\operatorname{tr}[A_j \rho ] \leq b_j,  \quad  1\leq j \leq m \\
            &\rho \geq 0 \\
            &UHU^\dagger = H  \quad \text{and} \quad U A_j U^\dagger = A_j \quad \text{for all} \  U \in \mathcal{U}, \ 1\leq j \leq m \\
    &V\rho V^\dagger = \rho \quad \text{for all} \ V \in \mathcal{V}.
        \end{aligned}
    \end{equation}
Can we benefit from both symmetry groups $\mathcal{U}$ and $\mathcal{V}$ for the purpose of symmetry reduction of SDPs?
\end{question}

As an introductory example, we consider the action of the iid-symmetry in a slight different way as in \autoref{sec:commutative_relaxation} and the extendibility symmetry in AQEC.

\begin{example}\label{example}
    \begin{figure}[ht]
    \centering
\begin{equation}
\begin{aligned}
\begin{matrix}
P_1 & \bar{P}_1 & \bar{P}_1^{(2)}\\
P_2 & \bar{P}_2 & \bar{P}_2^{(2)}\\
L & \bar{L} & \bar{L}^{(2)} \\
\end{matrix}
\end{aligned}
\end{equation}
\caption{In this figure, corresponding to \autoref{example}, we consider a special case of \autoref{fig:symmetry} for the case of a noisy channel which is permutation invariant under an $S_2$ symmetry, given by the permutation $(P_1,P_2)(\bar{P}_1,\bar{P}_2)$ and an extension symmetry permuting the two columns $\bar{P}_1,\bar{P}_2, \bar{L}$ with the column $\bar{P}^{(2)}_1, \bar{P}_2^{(2)}, \bar{L}^{(2)}$. Together we can consider them as a subgroup of the permutation group acting on the whole figure. In \autoref{example} we investigate the group generated by both groups, the iid-symmetry and the extension symmetry.}
    \label{fig:example}
\end{figure}

This example shows that the group $\mathcal{W}$, combined by all elements of the iid-symmetry group denoted $\mathcal{U}$ (see \autoref{subsec:types_of_symmetries} (c)) and generated by the action $(P_1, P_2)(\bar{P}_1, \bar{P}_2)$, and the extendibility symmetry group denoted $\mathcal{V}$ (see \autoref{subsec:types_of_symmetries} (a)), which permutes the column $(\bar{P}_1, \bar{P}_2, \bar{L})$ with the column $(\bar{P}_1^{(2)}, \bar{P}_2^{(2)}, \bar{L}^{(2)})$, does not yield a symmetry under which the SDP \eqref{eq:optimization_channel_fidelity_symmetry} remains invariant. For simplicity, we restrict the action of the extendibility symmetry group to the systems $\bar{P}_i^{(j)}$, where $1 \leq i,j \leq 2$. Both groups, the iid-symmetry group $\mathcal{U}$ and the extendibility symmetry group $\mathcal{V}$, considered individually, are a $S_2$ and are subgroups of
\begin{align}
    \operatorname{Sym}\left(\{P_1,P_2,\bar{P}_1,\bar{P}_2,\bar{P}_1^{(2)},\bar{P}_2^{(2)}\}\right).    
\end{align}
To simplify notation we denote $\mathcal{U}$ and $\mathcal{V}$ as subgroups of $S_6$, enumerating row-wise, i.e.
\begin{align}
    P_1 \equiv 1,P_2  \equiv 4,\bar{P}_1 \equiv 2,\bar{P}_2 \equiv 5,\bar{P}_1^{(2)} \equiv 3,\bar{P}_2^{(2)} \equiv 6,
\end{align}
and restrict the action on the $P_i$ and $\bar{P}_i$-systems for simplicity. Then, we have for the iid-symmetry group $\mathcal{U}$ and the extendibility symmetry group $\mathcal{V}$ that
\begin{align}
    \mathcal{U} = \big\{\operatorname{id}, (1,4)(2,5)\big\} \quad \mathcal{V} = \big\{\operatorname{id}, (2,3)(5,6)\big\}.
\end{align}
Multiplying all elements of $\mathcal{U}$ with all elements from $\mathcal{V}$ yields a large group $\mathcal{W}$ as
\begin{equation}
    \begin{aligned}
        \mathcal{W} =  \big\{&\operatorname{id},(2,3)(5,6),(2,5)(3,6),(2,6)(3,5),(1,4)(3,6),(1,4)(2,3,5,6),\\
    &(1,4)(2,5),(1,4)(2,6,5,3)\big\}.
    \end{aligned}
\end{equation}
Returning to \eqref{eq:optimization_channel_fidelity_symmetry}, one could consider the groups $\mathcal{U}$ and $\mathcal{V}$ individually, allowing the application of the standard framework from \autoref{subsec:recap_standard_symmetry} to each group separately. However, to incorporate both symmetry groups $\mathcal{U}$ and $\mathcal{V}$ simultaneously, we must address in what sense might the resulting SDP be invariant under the combined group $\mathcal{W}$.

Towards this goal, let us assume we have an objective $H$ and a variable $\rho$ satisfying the following symmetry conditions
\begin{align}
    \operatorname{tr}[UHU^\dagger \rho] = \operatorname{tr}[H\rho] \quad \text{and} \quad \operatorname{tr}[H V\rho V^\dagger] = \operatorname{tr}[H\rho]
\end{align}
for the unitary groups $U$ and $V$. Combining these two equations yields that we aim for solutions invariant under 
\begin{align}\label{eq:joint_symmetry_condition}
    \operatorname{tr}[H U^\dagger V \rho (U^\dagger V)^\dagger] = \operatorname{tr}[H \rho] \quad U \in \mathcal{U},V \in \mathcal{V}.
\end{align}
This means that exactly all products of this form are allowed. If now all elements $\zeta \in \mathcal{W}$ would have a representation in the form 
\begin{align}
    \zeta = \sigma \tau \quad \sigma \in \mathcal{U}, \ \tau \in \mathcal{V}
\end{align}
we would have a joint symmetry as shown in \eqref{eq:joint_symmetry_condition}. However, if we consider the element $(1,4)(2,3,5,6)$, then it has a decomposition $e \coloneqq  (2,3)(5,6)(1,4)(2,5)$, which is equivalent to a representation $\tau \in \mathcal{V}$ and $\sigma \in \mathcal{U}$ such that 
\begin{align}
   e =  \tau \sigma .
\end{align}
Assuming that we can write 
\begin{align}
    e = \tilde{\sigma} \tilde{\tau} \ \text{for some} \ \tilde{\sigma} \in \mathcal{U}, \ \tilde{\tau} \in \mathcal{V},
\end{align}
we could write
\begin{align}
    \text{$\tau \sigma = \tilde{\sigma} \tilde{\tau} \ \Rightarrow \ \tau = \tilde{\sigma} \tilde{\tau} \sigma \in \mathcal{V}$ (because $\sigma$ is of order $2$).}
\end{align}
However, we have only have two possibilities for $\tilde{\tau}$, either $(2,3)(5,6)$ or the identity. The identity cannot be the correct choice because we want to show a representation for $\tau$ and in this case the underlying element would be in $\mathcal{U}$. Similarly we only have the possibilities $(1,4)(2,5)$ or the identity $\tilde{\sigma}$. Thus, we find 
\begin{equation}
\begin{aligned}
    \underbrace{(1,4)(2,5)}_{= \tilde{\sigma}}\underbrace{(2,3)(5,6)}_{= \tilde{\tau}}\underbrace{(1,4)(2,5)}_{=\sigma} = (2,6)(3,5) \notin \mathcal{V} \\
    \underbrace{\operatorname{id}}_{= \tilde{\sigma}}\underbrace{(2,3)(5,6)}_{= \tilde{\tau}}\underbrace{(1,4)(2,5)}_{=\sigma} = (1,4)(2,3,5,6) \notin \mathcal{V}
\end{aligned}
\end{equation}
and we do not get the desired element in $\mathcal{V}$. This example shows that the problem is not invariant under the symmetry group $\mathcal{W}$. 
\end{example}

Importantly, the key difference between \autoref{example} and \autoref{sec:commutative_relaxation} lies in the \emph{manner in which} the iid-symmetry acts. From the abstract perspective of group theory, this corresponds to considering two different representations of $S_m$ on a Hilbert space. As we observe in \autoref{example}, the interaction of the images of $S_m$ as a subgroup of $\mathcal{U}(\mathcal{H})$ leads to different results, and consequently to a different interplay with the extendibility symmetry: in the commutative action of \autoref{sec:commutative_relaxation}, the two groups commute, whereas in \autoref{example}, the two group actions do not commute. This observation is somehow generic and can be generalized to an extension of the symmetry reduction framework of \cite{Klerk2007ReductionOS} towards a notion of \emph{joint symmetry reduction} giving an answer on \autoref{quest:symmetries}, what is purpose of this section. 

Motivated by \autoref{example} and \eqref{eq:joint_symmetry_condition}, one could try to define joint symmetry reduction, with the following notion of invariance
\begin{equation}
    \begin{aligned}
    \operatorname{tr}[H \rho] &= \operatorname{tr}[H U \rho U^\dagger] \quad U\in \mathcal{U}, \ \rho \in \mathcal{M}^\prime_{\mathcal{V}}(\mathcal{H}) \\
        \operatorname{tr}[A_j \rho] &= \operatorname{tr}[A_j U \rho U^\dagger] \quad U \in \mathcal{U}, \ \rho \in \mathcal{M}^\prime_{\mathcal{V}}(\mathcal{H}), \ 1 \leq j \leq m.
    \end{aligned}
\end{equation}
The problem with this notion of invariance is that, in general, $\mathcal{U}$ and $\mathcal{V}$ can generate elements that do not preserve the \emph{inner products} (see \autoref{example}). In fact, given symmetries $\mathcal{U}$ and $\mathcal{V}$, we are looking for a method to construct a compact group $\mathcal{W}$ from $\mathcal{U}$ and $\mathcal{V}$ such that
\begin{equation}\label{eq:joint_symmetries_weak_form}
    \begin{aligned}
        \operatorname{tr}[\rho H] & = \operatorname{tr}[H W \rho W^\dagger] \quad \text{for all } W \in \mathcal{W}, \\
        \operatorname{tr}[\rho A_j] & = \operatorname{tr}[A_j W \rho W^\dagger] \quad \text{for all } W \in \mathcal{W}, \ 1 \leq j \leq m.
    \end{aligned}
\end{equation}
In this case, we can replace any optimizer $\rho^\star$ of the original SDP \eqref{eq:symmtry_theorem_problem}
\begin{equation}\label{eq:optimizer_without_sym}
    \begin{aligned}
        \rho^\star \coloneqq \arg \max \ & \operatorname{tr}[H \rho] \\
        \text{s.t.} \quad & \operatorname{tr}[A_j \rho ] \leq b_j, \quad 1 \leq j \leq m, \\
        & \rho \geq 0
    \end{aligned}
\end{equation}
with its twirled operator
\begin{align}
    \rho^\star \mapsto \int_\mathcal{W} d\mu(W) \, W \rho^\star W^\dagger,
\end{align}
where $\mu$ is the unique normalized Haar measure corresponding to $\mathcal{W}$. The optimization task can then be solved within the commutant $\mathcal{M}^\prime_{\mathcal{W}}(\mathcal{H})$. Observe that this is, in general, a much weaker\,---\,but still sufficient\,---\,assumption compared to the standard symmetry reduction framework discussed in \autoref{subsec:recap_standard_symmetry}. Specifically, we neither require conditions such as $W H W^\dagger = H$ and $W A_j W^\dagger = A_j$ for all $W \in \mathcal{W}$ and $1 \leq j \leq m$, nor do we initially enforce $W \rho W^\dagger = \rho$ for all $\rho \in \mathcal{M}^\prime_{\mathcal{V}}(\mathcal{H})$. Concluding this line of thought, we arrive at what is arguably the most general perspective on this mathematical problem: namely, the setting of two unital $C^\star$-algebras on $\mathcal{H}$, denoted by $\mathcal{A}$ and $\mathcal{B}$, along with the (compact) groups of unitaries in their commutants, $\mathcal{U}_{\mathcal{A}^\prime}$ and $\mathcal{U}_{\mathcal{B}^\prime}$. The goal is to construct the largest group $\mathcal{W}$ out of $\mathcal{U}_{\mathcal{A}^\prime}$ and $\mathcal{U}_{\mathcal{B}^\prime}$ such that, whenever $H, A_1, \ldots, A_m \in \mathcal{A}$ and $\rho \in \mathcal{B}$, the conditions in \eqref{eq:joint_symmetries_weak_form} are satisfied. 

In general, the question seems challenging, which leads us to propose a simplified notion of joint symmetry as follows.

\begin{definition}[Joint symmetry reduction]\label{def:strong_joint_symmetry}
        Consider the SDP in \autoref{quest:symmetries}. We say that the SDP exhibits a \emph{joint symmetry} if the set 
        \begin{align}\label{eq:strong_joint_symmetry_def}
            \mathcal{W} \coloneqq \{UV \ \vert \ U \in \mathcal{U}, \ V \in \mathcal{V}\}
        \end{align}
        is a group. 
\end{definition}

It is evident that $\mathcal{W}$ is a set, and thus \eqref{eq:strong_joint_symmetry_def} is well-defined. The precise sense in which the definition of $\mathcal{W}$ as a group is well-behaved is clarified in the following lemma.

\begin{lemma}\label{lem:strong_symmetry}
    Consider the set $\mathcal{W}$ as defined in \eqref{eq:strong_joint_symmetry_def}. Then the following properties hold:
    \begin{enumerate}
        \item[(a)] If $\mathcal{U}$ and $\mathcal{V}$ are closed subgroups, then the set $\mathcal{W}$ is compact with respect to the operator norm topology.
        
        \item[(b)] If $\mathcal{U}$ and $\mathcal{V}$ commute and have trivial intersection, i.e., $\mathcal{U} \cap \mathcal{V} = \{\operatorname{id}\}$, then $\mathcal{W} \cong \mathcal{U} \times \mathcal{V}$, the direct product of $\mathcal{U}$ and $\mathcal{V}$.
        
        \item[(c)] The inner products in \eqref{eq:joint_symmetries_weak_form} are invariant under all elements of $\mathcal{W}$.
    \end{enumerate}
\end{lemma}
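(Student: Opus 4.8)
For part (a), the plan is to observe that $\mathcal{U}$ and $\mathcal{V}$, being closed subgroups of the unitary group $\mathcal{U}(\mathcal{H})$ on a finite-dimensional Hilbert space, are compact in the operator-norm topology. The set $\mathcal{W} = \{UV : U \in \mathcal{U},\, V \in \mathcal{V}\}$ is the image of the compact product space $\mathcal{U} \times \mathcal{V}$ under the multiplication map $(U,V) \mapsto UV$, which is continuous (multiplication of matrices is jointly continuous, and in fact operator-norm multiplication is continuous on bounded sets, which is automatic here since all elements are unitaries of norm $1$). The continuous image of a compact set is compact, so $\mathcal{W}$ is compact. I would state this in two sentences — it is genuinely routine.

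For part (b), assume $\mathcal{U}$ and $\mathcal{V}$ commute elementwise and $\mathcal{U} \cap \mathcal{V} = \{\operatorname{id}\}$. Then $\mathcal{W}$ is a subgroup: closure under products follows since $(U_1 V_1)(U_2 V_2) = U_1 U_2 V_1 V_2 \in \mathcal{W}$ using commutativity, and closure under inverses since $(UV)^{-1} = V^{-1}U^{-1} = U^{-1}V^{-1} \in \mathcal{W}$. The map $\mathcal{U} \times \mathcal{V} \to \mathcal{W}$, $(U,V) \mapsto UV$, is then a surjective group homomorphism; its kernel consists of pairs $(U,V)$ with $UV = \operatorname{id}$, i.e.\ $U = V^{-1} \in \mathcal{U} \cap \mathcal{V} = \{\operatorname{id}\}$, so the kernel is trivial and the map is a group isomorphism. (If desired one also checks it is a homeomorphism, which follows from part (a) plus the fact that a continuous bijection from a compact space to a Hausdorff space is a homeomorphism.)

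For part (c), the claim is that for any $W \in \mathcal{W}$, writing $W = UV$ with $U \in \mathcal{U}$, $V \in \mathcal{V}$, one has $\operatorname{tr}[H\rho] = \operatorname{tr}[H\, W\rho W^\dagger]$ and similarly for each $A_j$, whenever $H, A_1,\dots,A_m$ satisfy the $\mathcal{U}$-invariance hypothesis of \eqref{eq:symmtry_theorem_problem} and $\rho$ satisfies the $\mathcal{V}$-invariance hypothesis. The computation is
\begin{align}
\operatorname{tr}[H\, (UV)\rho(UV)^\dagger] = \operatorname{tr}[H\, U(V\rho V^\dagger)U^\dagger] = \operatorname{tr}[H\, U\rho U^\dagger] = \operatorname{tr}[U^\dagger H U\, \rho] = \operatorname{tr}[H\rho],
\end{align}
where the second equality uses $V\rho V^\dagger = \rho$, the third uses cyclicity of the trace, and the fourth uses $U^\dagger H U = H$; the identical chain works with $H$ replaced by any $A_j$. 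The main (mild) subtlety worth flagging is that this argument only uses the original single-group invariances and does \emph{not} require $W H W^\dagger = H$ or $W\rho W^\dagger = \rho$ — this is precisely the point made in the surrounding text, and it is what makes \autoref{def:strong_joint_symmetry} useful: the weaker condition \eqref{eq:joint_symmetries_weak_form} suffices to twirl an optimizer into $\mathcal{M}^\prime_{\mathcal{W}}(\mathcal{H})$ without loss of objective or feasibility. None of the three parts presents a real obstacle; the content is in having set up \autoref{def:strong_joint_symmetry} so that these properties hold, and the proof simply records why.
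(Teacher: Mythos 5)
Your proposal is correct and follows essentially the same route as the paper: part (a) via the continuous image of the compact product $\mathcal{U}\times\mathcal{V}$, part (b) via the multiplication map being a group isomorphism with trivial kernel from $\mathcal{U}\cap\mathcal{V}=\{\operatorname{id}\}$, and part (c) via the same one-line trace computation using cyclicity together with the separate $\mathcal{U}$- and $\mathcal{V}$-invariances. The only (harmless) deviation is your homeomorphism argument in (b)\,---\,continuous bijection from a compact space to a Hausdorff space\,---\,which is if anything cleaner than the paper's appeal to openness of translations.
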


\begin{proof}
    \begin{enumerate}
        \item[(a)] Consider the map
        \begin{align}\label{eq:proof_lemma_strong_symmetry}
            \phi: \mathcal{U} \times \mathcal{V} \to \mathcal{U}(\mathcal{H}), \quad (U,V) \mapsto UV.
        \end{align}
        As closed subsets of a compact set, the unitary group $\mathcal{U}(\mathcal{H})$, $\mathcal{U}$ and $\mathcal{V}$ are compact. 
        By Tychonoff's theorem, the product $\mathcal{U} \times \mathcal{V}$ is compact in the product topology. Since the unitary group is a topological group, multiplication is continuous. Furthermore, the image of a compact set under a continuous map is compact, so $\mathcal{W}$ is compact under the stated assumptions.
        
        \item[(b)] Under the assumption that $\mathcal{U}$ and $\mathcal{V}$ commute, the mapping \eqref{eq:proof_lemma_strong_symmetry} is a group homomorphism, as
        \begin{equation}
        \begin{aligned}
            \phi((U_1,V_1)\star(U_2,V_2)) &= \phi(U_1 U_2, V_1 V_2) \\
            &= U_1 U_2 V_1 V_2 \\
            &= U_1 V_1 U_2 V_2 \\
            &= \phi(U_1, V_1)\phi(U_2, V_2).
        \end{aligned}
        \end{equation}
        The image of $\phi$ is clearly $\mathcal{W}$, and since $\mathcal{U} \cap \mathcal{V} = \{\operatorname{id}\}$, the kernel is trivial. Hence, $\phi$ is a bijective group homomorphism.

        To show that $\phi$ is an isomorphism of topological groups, we must verify that it is a homeomorphism. Since left and right translations are homeomorphisms in a topological group \cite[chap. 1.2]{Osborne2014}, and the multiplication map is open (as open sets map to open sets), it follows that $\phi$ is an open mapping. Combining this with continuity and bijectivity, we conclude that $\phi$ is a homeomorphism, and hence an isomorphism of topological groups, proving the claim.

        \item[(c)] Let $W \in \mathcal{W}$ with decomposition $W = UV$ by definition. Then, we have
        \begin{equation}
        \begin{aligned}
            \operatorname{tr}[H W\rho W^\dagger] &= \operatorname{tr}[H UV \rho (UV)^\dagger] \\
            &= \operatorname{tr}[H UV \rho V^\dagger U^\dagger] \\
            &= \operatorname{tr}[U^\dagger H U \, V\rho V^\dagger] \\
            &= \operatorname{tr}[H \rho],
        \end{aligned}
        \end{equation}
        where the last step uses that $H \in \mathcal{M}^\prime_{\mathcal{U}}(\mathcal{H})$ and $\rho \in \mathcal{M}^\prime_{\mathcal{V}}(\mathcal{H})$. An analogous argument applies to each $A_j$ for $1 \leq j \leq m$.
    \end{enumerate}
\end{proof}

A more group-theoretical perspective on $\mathcal{W}$ is provided by the natural notion of complements. As shown in \autoref{lem:strong_symmetry}~(a), given that $\mathcal{W}$ is a group with $\mathcal{U}$ as a closed subgroup, we can equivalently interpret the construction as $\mathcal{U}$ admitting a complement $\mathcal{V}$ (assuming $\mathcal{U} \cap \mathcal{V} = \{\operatorname{id}\}$). In other words, we can reformulate the question in group-theoretical terms as asking whether $\mathcal{U}$ and $\mathcal{V}$ are complements of each other. In general, this question can be addressed using the concept of the \emph{Zappa--Szép product} \cite{Brin2005}, which is arguably difficult to classify in full generality. For our purposes, we focus on interesting special cases of the Zappa--Szép product in the following proposition and relate them to our notion of symmetry reduction introduced in \autoref{def:strong_joint_symmetry}.

\begin{proposition}[Symmetry reduction for combined symmetries]\label{thm:symmetry_reduction_combined}
    Let $\mathcal{U}$ and $\mathcal{V}$ be two closed subgroups of the unitary group $\mathcal{U}(\mathcal{H})$. Then, the following are true: 
   \begin{enumerate}
        \item[(a)] $\mathcal{W}$ as defined in \autoref{def:strong_joint_symmetry} is a compact subgroup of $\mathcal{U}(\mathcal{H})$ if and only if 
        \begin{align} \label{eq:equality_sets}
            \mathcal{U}\mathcal{V}  = \mathcal{V}\mathcal{U} 
        \end{align}
        as sets (see e.g. \cite{Robinson1996}).\footnote{Recall the definition $
            \mathcal{U}\mathcal{V} \coloneqq \{ UV \mid U \in \mathcal{U},\ V \in \mathcal{V} \}$ and similarly for $\mathcal{V}\mathcal{U} $.}
        \item[(b)] Assume that $\mathcal{V}$ normalizes $\mathcal{U}$ and $\mathcal{U} \cap \mathcal{V} = \{\operatorname{id}\}$ (a similar result is of course true if $\mathcal{U}$ normalizes $\mathcal{V}$). Then we have that $\mathcal{W}$ is a group and
        \begin{align}
            \mathcal{W} \cong \mathcal{U} \rtimes \mathcal{V},
        \end{align}
        a semidirect product (see e.g. \cite{Robinson1996}). 
        \item[(c)] 
       Let $\mathcal{U}$ and $\mathcal{V}$ be finite groups. 
        If $\mathcal{W} = \mathcal{U}\mathcal{V}$ forms a group, then the following product formula holds:
        \begin{align}
            |\mathcal{W}| \cdot |\mathcal{U} \cap \mathcal{V}| = |\mathcal{U}| \cdot |\mathcal{V}|.
        \end{align}
        \item[(d)] If $\mathcal{W}$ is a group, then the SDP
\begin{equation}\label{eq:symmetry_reduced_programm_prop}
    \begin{aligned}
        \sup \ &\operatorname{tr}[H \rho] \\
        \text{s.t.} \quad &\operatorname{tr}[A_j \rho ] \leq b_j, \quad 1 \leq j \leq m, \\
        &\rho \geq 0, \\
        &\rho \in \mathcal{M}^\prime_{\mathcal{W}}(\mathcal{H})
    \end{aligned}
\end{equation}
admits the same optimal value as the SDP in \eqref{eq:symmtry_theorem_problem}, provided that the groups $\mathcal{U}$ and $\mathcal{V}$ are interpreted as in \autoref{quest:symmetries} and is in particular a joint symmetry in the sense of \autoref{def:strong_joint_symmetry}.
    \end{enumerate}
\end{proposition}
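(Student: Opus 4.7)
The plan is to handle (a)--(d) in sequence, since each reduces either to standard abstract group theory or to a standard twirling argument that is made to work by \autoref{lem:strong_symmetry}~(c). Parts~(a) and~(b) form the algebraic core, (c) is a counting exercise, and~(d) packages the symmetry reduction itself.

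For~(a), I will prove both implications. Assuming $\mathcal{U}\mathcal{V}=\mathcal{V}\mathcal{U}$, closure of $\mathcal{W}$ under multiplication follows by rewriting $(U_1V_1)(U_2V_2)=U_1(V_1U_2)V_2$ and replacing the middle factor $V_1U_2\in\mathcal{V}\mathcal{U}=\mathcal{U}\mathcal{V}$ by some $U'V'$; closure under inverses follows from $(UV)^{-1}=V^{-1}U^{-1}\in\mathcal{V}\mathcal{U}=\mathcal{U}\mathcal{V}$. Conversely, if $\mathcal{W}$ is a group, the inverse set computes as $\mathcal{W}^{-1}=\mathcal{V}^{-1}\mathcal{U}^{-1}=\mathcal{V}\mathcal{U}$ (using that $\mathcal{U},\mathcal{V}$ are groups), and closure of $\mathcal{W}$ under inverses yields $\mathcal{V}\mathcal{U}=\mathcal{W}^{-1}=\mathcal{W}=\mathcal{U}\mathcal{V}$. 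Compactness is then immediate from \autoref{lem:strong_symmetry}~(a). Part~(b) reduces cleanly to~(a): the normalization assumption rewrites $VU=(VUV^{-1})V\in\mathcal{U}\mathcal{V}$, so $\mathcal{V}\mathcal{U}\subseteq\mathcal{U}\mathcal{V}$, and the symmetric argument with $V^{-1}\in\mathcal{V}$ gives the reverse inclusion; the identification $\mathcal{W}\cong\mathcal{U}\rtimes\mathcal{V}$ is then the textbook internal semidirect product theorem using $\mathcal{U}\cap\mathcal{V}=\{\operatorname{id}\}$ and the fact that $\mathcal{U}$ is normal in $\mathcal{W}$.

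For~(c), a fiber counting argument on $\pi:\mathcal{U}\times\mathcal{V}\to\mathcal{W}$, $(U,V)\mapsto UV$ does the job: $U'V'=UV$ iff $U^{-1}U'=V(V')^{-1}\in\mathcal{U}\cap\mathcal{V}$, so every fiber has exactly $|\mathcal{U}\cap\mathcal{V}|$ elements and hence $|\mathcal{U}|\cdot|\mathcal{V}|=|\mathcal{W}|\cdot|\mathcal{U}\cap\mathcal{V}|$. For~(d), I twirl any optimizer $\rho^\star$ of \eqref{eq:symmtry_theorem_problem} against the unique normalized Haar measure $\mu$ on the compact group $\mathcal{W}$, setting
\begin{align*}
\tilde\rho \;\coloneqq\; \int_{\mathcal{W}} d\mu(W)\, W\rho^\star W^\dagger.
\end{align*}
Then $\tilde\rho\succeq 0$ as a convex mixture of PSD operators, $\tilde\rho\in\mathcal{M}^\prime_{\mathcal{W}}(\mathcal{H})$ by left invariance of $\mu$, and \autoref{lem:strong_symmetry}~(c) gives $\operatorname{tr}[H\tilde\rho]=\operatorname{tr}[H\rho^\star]$ together with $\operatorname{tr}[A_j\tilde\rho]\leq b_j$, so $\tilde\rho$ is feasible for \eqref{eq:symmetry_reduced_programm_prop} with the same objective value. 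The reverse inequality is trivial, since feasibility in $\mathcal{M}^\prime_{\mathcal{W}}(\mathcal{H})$ is a restriction of feasibility in the full positive semi-definite cone.

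The main obstacle is conceptual rather than computational. Because \autoref{def:strong_joint_symmetry} imposes a strictly weaker invariance condition than the standard framework (we never demand $WHW^\dagger=H$, only that $H,A_j\in\mathcal{M}^\prime_{\mathcal{U}}(\mathcal{H})$ and $\rho\in\mathcal{M}^\prime_{\mathcal{V}}(\mathcal{H})$ separately), one must resist invoking classical results from \cite{Klerk2007ReductionOS} directly and instead verify that trace cyclicity alone is enough to push twirling through. That is precisely what \autoref{lem:strong_symmetry}~(c) supplies, and once identified as the load-bearing step, the proof factors neatly into the algebraic pieces above.
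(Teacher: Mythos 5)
Your proposal is correct and follows essentially the same route as the paper: the set-equality characterization with the $U_1(V_1U_2)V_2$ rewriting for (a), the fiber-counting bijection onto $\mathcal{U}\cap\mathcal{V}$ for (c), and Haar twirling justified by \autoref{lem:strong_symmetry}~(c) for (d). The only cosmetic difference is in (b), where you deduce $\mathcal{U}\mathcal{V}=\mathcal{V}\mathcal{U}$ from the normalizer hypothesis and cite the internal semidirect product recognition theorem (noting that $\mathcal{U}$ is normal in $\mathcal{W}$), whereas the paper verifies closure directly and constructs the isomorphism $\phi:\mathcal{U}\rtimes\mathcal{V}\to\mathcal{W}$ explicitly; both are standard and equivalent.
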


\begin{proof}
\begin{enumerate}
    \item[(a)] Let $\mathcal{W}$ be a subgroup of the unitary group $\mathcal{U}(\mathcal{H})$. Then, in particular, every element $(UV) \in \mathcal{W}$ has an inverse given by $(UV)^{-1} = V^{-1} U^{-1}$. From this, we immediately deduce that $\mathcal{U}\mathcal{V} = \mathcal{V}\mathcal{U}$.

Now assume that this equality of sets holds. To show that $\mathcal{W}$ is a group, it remains to verify that $\mathcal{W}$ is closed under multiplication, that every element has an inverse, and that $\mathcal{W}$ is topologically closed. All other group properties are inherited from the fact that the unitary group $\mathcal{U}(\mathcal{H})$ satisfies the group axioms.

For the inverses, take any $UV \in \mathcal{W}$. Then $(UV)^{-1} = V^{-1} U^{-1} \in \mathcal{V} \mathcal{U}$, and by the assumed equality $\mathcal{U} \mathcal{V} = \mathcal{V} \mathcal{U}$, we conclude that $(UV)^{-1} \in \mathcal{U} \mathcal{V}$, so $(UV)^{-1} \in \mathcal{W}$.

Next, consider two elements $U_1 V_1, U_2 V_2 \in \mathcal{W}$. Then, we get
\begin{align}
    U_1 V_1 U_2 V_2 = U_1 \underbrace{(V_1 U_2)}_{=W \in \mathcal{V} \mathcal{U}} V_2.
\end{align}
By assumption, $W \in \mathcal{V} \mathcal{U} = \mathcal{U} \mathcal{V}$, so we can write $W = \tilde{U}_2 \tilde{V}_1$ with $\tilde{U}_2 \in \mathcal{U}$ and $\tilde{V}_1 \in \mathcal{V}$. Hence, we have
\begin{equation}
\begin{aligned}
    U_1 V_1 U_2 V_2 = U_1 \tilde{U}_2 \tilde{V}_1 V_2
    = \underbrace{(U_1 \tilde{U}_2)}_{\in \mathcal{U}} \underbrace{(\tilde{V}_1 V_2)}_{\in \mathcal{V}} \in \mathcal{W},
\end{aligned}
\end{equation}
where we have used that both $\mathcal{U}$ and $\mathcal{V}$ are groups and thus closed under multiplication.

It follows that $\mathcal{W}$ is a group. Finally, to conclude that $\mathcal{W}$ is topologically closed—and hence compact\,---\,we invoke \autoref{lem:strong_symmetry}~(a).
\item[(b)] In order to prove that $\mathcal{W}$ is a compact group, we need to verify that it is closed under multiplication and inverses, and that it is topologically closed. 

Take any two elements $U_1 V_1, U_2 V_2 \in \mathcal{W}$ and compute
\begin{equation}
\begin{aligned}
    (U_1 V_1)(U_2 V_2) = U_1 \underbrace{(V_1 U_2 V_1^{-1})}_{=\tilde{U}_2 \in \mathcal{U}} V_1 V_2= (U_1 \tilde{U}_2)(V_1 V_2) \in \mathcal{W},
\end{aligned}
\end{equation}
where in the first step we used the assumption that $\mathcal{V}$ normalizes $\mathcal{U}$, i.e., 
\begin{align}
    V_1 U_2 V_1^{-1} \in \mathcal{U},    
\end{align}
and in the second step that both $\mathcal{U}$ and $\mathcal{V}$ are groups and thus closed under multiplication.

To show that $\mathcal{W}$ is closed under inverses, take any $UV \in \mathcal{W}$. Then, we get
\begin{equation}
\begin{aligned}
    (UV)^{-1} = V^{-1} U^{-1} = \underbrace{(V^{-1} U^{-1} V)}_{= \tilde{U} \in \mathcal{U}} V^{-1} = \tilde{U} V^{-1} \in \mathcal{W},
\end{aligned}
\end{equation}
where, again, we use that $\mathcal{V}$ normalizes $\mathcal{U}$ and that $\mathcal{V}$ is a group.

To conclude that $\mathcal{W}$ is compact, we refer to \autoref{lem:strong_symmetry}~(a), which guarantees compactness under the given assumptions.

To see that the resulting group is a semidirect product, we construct a concrete group homomorphism and prove its bijectivity. Define a binary operation on the set $\mathcal{U} \times \mathcal{V}$ by:
\begin{align}
    (U_1, V_1) \star (U_2, V_2) \coloneqq (U_1 V_1 U_2 V_1^{-1}, V_1 V_2).
\end{align}
Since $\mathcal{V}$ normalizes $\mathcal{U}$, this is a well-defined group operation. It is a standard result that $\mathcal{U} \times \mathcal{V}$ becomes a group under this operation, known as the semidirect product, because the map
\begin{align}
    \mathcal{V} \to \operatorname{Aut}(\mathcal{U}), \quad V \mapsto \big(U \mapsto V U V^{-1} \big)
\end{align}
is a well-defined group homomorphism. This defines the semidirect product $\mathcal{U} \rtimes \mathcal{V}$.

Now, consider the map
\begin{align}
    \phi : \mathcal{U} \rtimes \mathcal{V} \to \mathcal{W}, \quad (U, V) \mapsto UV.
\end{align}
This map is well-defined since $\mathcal{W}$ consists precisely of elements of the form $UV$ with $U \in \mathcal{U}$ and $V \in \mathcal{V}$, and we have already shown that $\mathcal{W}$ is a group.

To prove that $\phi$ is a homomorphism, we compute:
\begin{equation}
\begin{aligned}
    \phi(U_1, V_1) \cdot \phi(U_2, V_2) &= U_1 V_1 U_2 V_2 \\
    &= U_1 (V_1 U_2 V_1^{-1}) V_1 V_2 \\
    &= \phi(U_1 V_1 U_2 V_1^{-1}, V_1 V_2) \\
    &= \phi((U_1, V_1) \star (U_2, V_2)).
\end{aligned}
\end{equation}
Thus, $\phi$ is a group homomorphism. By construction, it is surjective. Moreover, since $\mathcal{U} \cap \mathcal{V} = \{\operatorname{id}\}$, it is injective. Hence, $\phi$ is a group isomorphism.

To conclude that $\phi$ is also a topological isomorphism, we recall from the proof of \autoref{lem:strong_symmetry}~(b) that the multiplication map is open. Therefore, $\phi$ is a homeomorphism, and we conclude that
\begin{align}
    \mathcal{W} \cong \mathcal{U} \rtimes \mathcal{V}.
\end{align}

\item[(c)] 
As already established in the proof of part~(b), the map
\begin{align}
    \phi: \mathcal{U} \times \mathcal{V} \to \mathcal{W}
\end{align}
is surjective. Now, fix $(U_1, V_1) \in \mathcal{U} \times \mathcal{V}$ such that $\phi(U_1, V_1) = W$ for some fixed $W \in \mathcal{W}$. We define the map
\begin{align}
    \psi_W: \mathcal{U} \cap \mathcal{V} \to \phi^{-1}(W), \quad X \mapsto (U_1 X, X^{-1} V_1).
\end{align}

To prove that $\psi_W$ is surjective, consider any pair $(U_2, V_2) \in \phi^{-1}(W)$, i.e., such that $\phi(U_2, V_2) = W = U_1 V_1$. Then, we have
\begin{align}
    U_2 V_2 = U_1 V_1 
    \quad \Rightarrow \quad 
    U_2^{-1} U_1 = V_2 V_1^{-1},
\end{align}
where both sides belong to $\mathcal{U}$ and $\mathcal{V}$, respectively. Hence, their common element lies in $\mathcal{U} \cap \mathcal{V}$.

Define
\begin{align}
    X := (U_2^{-1} U_1)^{-1} = U_1^{-1} U_2 
    \quad \text{and} \quad 
    X^{-1} = V_2 V_1^{-1}
\end{align}
and compute
\begin{equation}
\begin{aligned}
    \psi_W(X) = (U_1 X, X^{-1} V_1) = (U_1 U_1^{-1} U_2, V_2 V_1^{-1} V_1) = (U_2, V_2),
\end{aligned}
\end{equation}
which shows that $\psi_W$ is surjective.

To prove injectivity, assume $\psi_W(X) = \psi_W(Y)$ for $X, Y \in \mathcal{U} \cap \mathcal{V}$. Then, we get
\begin{equation}
\begin{aligned}
    (U_1 X, X^{-1} V_1)= (U_1 Y, Y^{-1} V_1)\quad
    \Rightarrow \quad U_1 X = U_1 Y \quad \Rightarrow \quad X = Y,
\end{aligned}
\end{equation}
since left multiplication in a group is injective. Thus, $\psi_W$ is a bijection.

Because the pre-images of distinct elements under $\phi$ are disjoint, we may count the total number of elements in the two ways
\begin{align}
    |\mathcal{U} \times \mathcal{V}| = |\mathcal{U}| \cdot |\mathcal{V}| = |\mathcal{W} | \cdot |\mathcal{U} \cap \mathcal{V}|,
\end{align}
which rearranges to the desired formula:
\begin{align}
    |\mathcal{W}| = \frac{|\mathcal{U}| \cdot |\mathcal{V}|}{|\mathcal{U} \cap \mathcal{V}|}.
\end{align}

\item[(d)]
First of all, if $\mathcal{W}$ is a group, it is automatically compact by \autoref{lem:strong_symmetry}~(a), and thus admits a unique normalized Haar measure \cite{Cohn2013}. Applying \autoref{lem:strong_symmetry}~(c), we find that the inner products remain invariant under all elements of $\mathcal{W}$. Therefore, using the definition of $\rho^\star$ from \eqref{eq:optimizer_without_sym}, we obtain
\begin{equation}
\begin{aligned}
    \operatorname{tr}[H \rho^\star] 
    = \int_{\mathcal{W}} d\mu(W) \operatorname{tr}[H W \rho^\star W^\dagger] = \operatorname{tr}\left[H \int_{\mathcal{W}} d\mu(W) \, W \rho^\star W^\dagger \right],
\end{aligned}
\end{equation}
and analogously for each $A_j$, $1 \leq j \leq m$. Since the twirled state
\begin{align}
    \int_{\mathcal{W}} d\mu(W) \, W \rho^\star W^\dagger \in \mathcal{M}_{\mathcal{W}}^{\prime}(\mathcal{H}) \subset \mathcal{B}(\mathcal{H}),
\end{align}
we conclude that the optimization problem in \eqref{eq:symmetry_reduced_programm_prop} attains the same optimal value as the original problem in \eqref{eq:symmtry_theorem_problem}.
\end{enumerate}
\end{proof}


\subsection{On extendibility- and iid-symmetry in the standard framework}
\label{subsec:adding_and_internal_symmetry}

In this section we aim to generalize the argument from \autoref{example} to the general case of $n$-extensions with extendibility symmetry $S_n$, together with an $S_m$ iid-symmetry in the noisy channel. For simplicity we neglect the systems $L\bar{L}^{(1\ldots n)}$ in \autoref{fig:symmetry} as we have already done in \autoref{example}. As a first step, we provide a structural description of the potential symmetry group $\mathcal{W}$, consisting of all combinations of elements from the iid-symmetry group if it would act as in \autoref{example} and the extendibility symmetry group. 

\begin{theorem}\label{thm:global_symmetry_group}
    The permutation group $G$ acting on \autoref{fig:symmetry}, while assuming an iid-symmetry of the
    Choi matrix $C_{\mathcal{M}^{\otimes m}}$ of a tensor power of a channel $\mathcal{M}$, 
    together with an extension symmetry $S_n$ 
    is a group extension
    \begin{align}\label{eq:group_extension_global_group}
        A_m \to G \to S_m \wr S_n,
    \end{align}
    where $A_m$ denotes the alternating group of order $m!/2$. In particular, the group extension \eqref{eq:group_extension_global_group} splits, so that we have
    \begin{align}
        G \cong A_m \rtimes (S_m \wr S_n).
    \end{align}
\end{theorem}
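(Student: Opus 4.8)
The plan is to realise $G$ explicitly as a permutation group of the boxes of \autoref{fig:symmetry} (dropping the systems $L\bar{L}^{(1\ldots n)}$ as in \autoref{example}), to peel off a normal subgroup on which the extendibility symmetry acts transparently, and then to identify the remaining piece by an elementary parity computation. Write $\mathcal{U}\cong S_m$ for the iid-symmetry, which simultaneously permutes the $m$ rows of the first two columns $P_{1\ldots m}$ and $\bar{P}^{(1)}_{1\ldots m}$, and $\mathcal{V}\cong S_n$ for the extendibility symmetry, which permutes the $n$ blocks $\bar{P}^{(1)}_{1\ldots m},\dots,\bar{P}^{(n)}_{1\ldots m}$ as whole columns. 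Both sit inside the ambient group $S_m^{\,n+1}\rtimes S_n$ of independent row-permutations of the $n+1$ relevant columns followed by a permutation of the $n$ extension columns, and $G=\langle \mathcal{U},\mathcal{V}\rangle$. First I would set $N:=\langle\,\sigma\mathcal{U}\sigma^{-1}:\sigma\in\mathcal{V}\,\rangle$; since conjugating $\mathcal{U}$ by $\sigma\in S_n$ merely moves the ``$\bar{P}^{(1)}$-slot'' to the ``$\bar{P}^{(\sigma(1))}$-slot'', one gets $N=\langle\mathcal{U}_1,\dots,\mathcal{U}_n\rangle$, where $\mathcal{U}_i\cong S_m$ acts diagonally on columns $P$ and $\bar{P}^{(i)}$. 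As $N$ is generated by $\mathcal{U}$- and $\mathcal{V}$-conjugates of $\mathcal{U}$ it is normal in $G$; it has trivial $S_n$-component so $N\cap\mathcal{V}=\{\operatorname{id}\}$; and $G=N\mathcal{V}$. Hence $G=N\rtimes S_n$, and it remains to describe $N$ together with the induced $S_n$-action.

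The core step is to identify $N$ inside $S_m^{\,n+1}$ through the map recording the row-permutations it induces on columns $P,\bar{P}^{(1)},\dots,\bar{P}^{(n)}$. I claim the image is exactly
\[
\bigl\{(h_0,h_1,\dots,h_n)\in S_m^{\,n+1}\ :\ \operatorname{sgn}(h_0)=\textstyle\prod_{i=1}^n\operatorname{sgn}(h_i)\bigr\},
\]
an index-$2$ subgroup. For the inclusion ``$\subseteq$'': in any word in the generators the action on column $\bar{P}^{(i)}$ is the ordered product of the $S_m$-parts of the $\mathcal{U}_i$-letters, while the action on column $P$ is the ordered product of all the $S_m$-parts, so applying the homomorphism $\operatorname{sgn}$ yields the stated relation. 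For ``$\supseteq$'': the products $\mathcal{U}_1\cdots\mathcal{U}_n$ already realise every tuple $(g_1\cdots g_n,g_1,\dots,g_n)$, and for $n\ge 2$ the commutators $[(g)_{\mathcal{U}_i},(h)_{\mathcal{U}_j}]$ with $i\ne j$ act trivially on every $\bar{P}$-column and as $[g,h]$ on column $P$, hence generate the subgroup that acts only on $P$ via $[S_m,S_m]=A_m$; multiplying by such elements sweeps out the full coset $A_m\,g_1\cdots g_n$ in the $P$-coordinate. Consequently $N$ contains the normal subgroup $A_m\cong\{(h_0,\operatorname{id},\dots,\operatorname{id}):h_0\in A_m\}$ with $N/A_m\cong S_m^{\,n}$ realised by forgetting the $P$-coordinate. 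The factor $\mathcal{V}=S_n$ permutes the $\bar{P}^{(i)}$-columns, hence permutes the $n$ coordinates of $S_m^{\,n}$ and fixes $A_m$ (it never touches column $P$); therefore $A_m\trianglelefteq G$ and $G/A_m\cong S_m^{\,n}\rtimes S_n=S_m\wr S_n$, giving the extension $A_m\to G\to S_m\wr S_n$.

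For the splitting I would produce a section $s\colon S_m\wr S_n\to G$. Start with a section $s_0\colon S_m^{\,n}\to N$ of $N\to N/A_m\cong S_m^{\,n}$: fixing a transposition $t\in S_m$, put $\phi(h_1,\dots,h_n):=t^{\#\{i:\operatorname{sgn}(h_i)=-1\}}\in\langle t\rangle\le S_m$. A one-line parity check (each bit $[\operatorname{sgn}(g_ih_i)=-1]$ is the mod-$2$ sum of $[\operatorname{sgn}(g_i)=-1]$ and $[\operatorname{sgn}(h_i)=-1]$) shows $\phi$ is a homomorphism with $\operatorname{sgn}(\phi(\vec h))=\prod_i\operatorname{sgn}(h_i)$, so $(\phi(\vec h),h_1,\dots,h_n)$ lies in $N$ and defines a homomorphic section $s_0$. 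Since $\phi(\vec h)$ depends only on the multiset $\{h_i\}$, the subgroup $\mathcal{V}\le G$ normalises $s_0(S_m^{\,n})$ exactly through the wreath action $\sigma\cdot(h_1,\dots,h_n)=(h_{\sigma^{-1}(1)},\dots,h_{\sigma^{-1}(n)})$, so $s(\vec h,\sigma):=s_0(\vec h)\,\sigma$ is a well-defined homomorphism $S_m^{\,n}\rtimes S_n\to G$ splitting $G\to S_m\wr S_n$; this gives $G\cong A_m\rtimes(S_m\wr S_n)$. The main obstacle is the bookkeeping in the identification of $N$: pinning the ``$P$-coordinate slack'' down to precisely $A_m$ requires both the sign relation (the upper bound) and the availability of commutators between distinct $\mathcal{U}_i$ (whence $n\ge 2$) to realise all of $[S_m,S_m]=A_m$ (the lower bound). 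A secondary point is that the naive diagonal section fails, so one must supply the auxiliary homomorphism $\phi$ to split the extension.
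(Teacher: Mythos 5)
Your proposal is correct and follows essentially the same route as the paper: the key ingredients\,---\,commutators of the conjugated iid-copies producing $A_m$ supported only on the $P$-column, the sign homomorphism bounding the kernel from above, and a parity-corrected section built from a fixed transposition acting in parallel on $P$ and a single $\bar{P}^{(i)}$-column (your $\phi$) to realize $S_m\wr S_n$ inside $G$ and split the extension\,---\,all coincide with the paper's argument. The only difference is organizational: you first identify $N=\langle\mathcal{U}_1,\dots,\mathcal{U}_n\rangle$ explicitly as the index-$2$ sign-condition subgroup of $S_m^{\,n+1}$ and then quotient by $A_m$, whereas the paper works directly with the restriction homomorphism onto the $\bar{P}$-columns; your bookkeeping is somewhat more explicit (and usefully makes the implicit requirement $n\ge 2$ visible), but it is not a genuinely different proof.
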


\begin{proof}
By assumption, every element $g \in G$ is a permutation on the subsystems 
\begin{align}
\{\bar{P}_{1\ldots m}, \dots,  \bar{P}^{(n)}_{1\ldots m}\},
\end{align}
as well as on the main system $P_{1\ldots m}$ consisting of $m$ copies. 
We define a group homomorphism
\begin{align}
   \varphi \colon G \longrightarrow \mathrm{Sym}\bigl(\bar{P}_{1\ldots m} \times \cdots \times \bar{P}^{(n)}_{1\ldots m}\bigr)
\end{align}
by letting $\varphi(g)$ be the \emph{restriction} of the permutation $g$ to the $\bar{P}_{1 \ldots m}^{(i)}$-systems.
This induced action is well-defined and surjects onto the wreath product $S_m \wr S_n$. 

Indeed, the group $S_m \wr S_n$ can be seen as ``all ways'' to permute $m$ slots within each of the $n$ blocks 
and then permute the $n$ blocks themselves. The global group $G$ must capture permutations of the $P_{1\ldots m}^{(i)}$-system ($m$ copies) combined consistently with permutations of the $n$ extensions. Thus, we get $\operatorname{Im}(\varphi)\cong S_m \wr S_n$. By definition, we have
  \begin{align}
    \ker(\varphi) = \{g \in G \ \vert \ \varphi(g) = \mathrm{id}\}
    =
    \bigl\{ g \in G \ \vert \  \text{$g$ acts trivially on all $\bar{P}_{1\ldots m}^{(i)}$} \bigr\}.
  \end{align}
Hence, every $g \in \ker(\varphi)$ must act only on the main system $P_{1\ldots m}$. Now, let $g|_{A} \in S_m$ be the permutation induced by $g$ on the $m$ slots. It is straightforward to produce elements of the form (just consider usual elements in the $S_m$ action $(x,x,1,\ldots,1)$ and act with the $S_n$ on it)
\begin{align}
  (x,x,1,\ldots,1)
  \quad\text{and}\quad
  (y,1,y,1,\ldots,1)
\end{align}
in $G$, where $x,y \in S_m$.  Their commutator,
\begin{align}
  (x,x,1,\ldots,1)\bigl(y,1,y,\ldots,1\bigr)(x,x,\ldots)^{-1}\bigl(y,1,y,\ldots,1\bigr)^{-1},
\end{align}
simplifies to
\begin{align}
  \bigl([x,y],1,\ldots,1\bigr),
\end{align}
which clearly lies in the kernel, as it acts trivially on $B_1^n$.  Since the commutators of $S_m$ generate the alternating group $A_m$, we deduce
\begin{align}
  A_m \subseteq \ker\bigl(G \to S_m \wr S_{n-1}\bigr).
\end{align}

Conversely, suppose $g=(x,1,\ldots,1)$ lies in the kernel, meaning that $g$ acts trivially on the product $B_1^n$.  Consider the standard sign homomorphism
\begin{align}
  \mathrm{sgn}\colon S_m \longrightarrow \{\pm1\}
\end{align}
restricted to $A$.  All permutations of $G$ that interchange only the remaining copies lie in the kernel of $\mathrm{sgn}$ on the first copy, so we can write 
\begin{align}
  \mathrm{sgn}(g) = \mathrm{sgn}(h)
\end{align}
for some $h \in S_m$.  However, because $g$ has sign $+1$ on the first copy of $\bar{P}_{1\ldots m}$ (it acts trivially there), it follows that $\mathrm{sgn}(g)=1$.  Hence $h$ must be an even permutation, implying $x \in A_m$.  Thus, $g \in A_m$.  Therefore,
\begin{align}
  \ker\bigl(G \to S_m \wr S_{n-1}\bigr) \subseteq A_m.
\end{align}

Combining both inclusions, we conclude
\begin{align}
  \ker\bigl(G \to S_m \wr S_{n-1}\bigr) = A_m,
\end{align}
as claimed.

Putting the above together, we have a short exact sequence:
\begin{align}
   1 \longrightarrow A_m \longrightarrow G \xrightarrow{\varphi} S_m \wr S_n \longrightarrow 1.
\end{align}
The map $\varphi$ is surjective (its image is the wreath product), 
and the kernel is exactly $A_m$. Thus $G$ is a group extension of $S_m \wr S_n$ by $A_m$.

We conclude that the global symmetry group $G$ factors as 
\begin{align}
   A_m \longrightarrow G \longrightarrow S_m \wr S_n,
\end{align}
establishing the desired extension. 

For the argument that the extension splits, we recall that we can easily generate the subgroup 
\begin{align}
    \{1\}\times A_m^{(1)} \times \cdots \times A_m^{(n)} \leq G    
\end{align}
exactly similar to the commutator argument given at the beginning of the proof. Observe that those are $n$ independent groups of the isomorphism type $A_m$. Adding the $S_n$ action, we find 
\begin{align}
    (\{1\}\times A_m^{(1)} \times \cdots \times A_m^{(n)}) \rtimes S_n \leq G,
\end{align}
whereby the action of $S_n$ can be seen as permuting the $A_m$'s. 

Now, adding on each block a transposition $\sigma_i$ which is the parallel permutation in the $P_{1\ldots m}$ block and the $\bar{P}_{1\ldots m}^{(i)}$ block, generates a set of $n$ groups isomorphic to $S_m$ groups as a direct product (because the alternating group and one transposition generates the symmetric group. A simple argument would be that the alternating group has index $2$ and the group generated by $\langle A_m,\sigma\rangle$ is strictly larger than the alternating group, because a transposition $\sigma \notin A_m$. Thus it has to be the full symmetric group). On those $S_m$ the usual $S_n$ acts as interchanging the columns in \autoref{fig:symmetry}. Thus the corresponding subgroup is isomorphic to a wreath product $S_m \wr S_n$ and we can construct a homomorphism 
\begin{align}
    \gamma: S_m \wr S_n \to G,
\end{align}
which maps onto the  $S_n$ and $A_m$ in each factor and just adds on the generating transposition in $S_m$ the corresponding parallel execution on the $P_{1\ldots m}$ block. Then, $\beta\circ \gamma = \operatorname{id}_{S_m \wr S_n}$ such that the extension splits. 
\end{proof}

From this \autoref{thm:global_symmetry_group} we conclude the following corollary, which does not fit \autoref{def:strong_joint_symmetry}.

\begin{corollary}\label{cor:no_go}
    The representation of the group $G \cong A_m \rtimes (S_m \wr S_n)$ on the Hilbert space $\mathcal{H}_{P_{1\ldots m}} \bigotimes_{i=1}^n \mathcal{H}_{P_{1\ldots m}^{(i)}}$ is not suitable for joint symmetry reduction in the sense of \autoref{def:strong_joint_symmetry}.
\end{corollary}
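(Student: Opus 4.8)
The plan is to invoke \autoref{thm:symmetry_reduction_combined}~(a), which states that $\mathcal{W} = \mathcal{U}\mathcal{V}$ is a subgroup of $\mathcal{U}(\mathcal{H})$ if and only if $\mathcal{U}\mathcal{V} = \mathcal{V}\mathcal{U}$ as sets, and to show that this fails here. First I would fix the two constituent groups in the sense of \autoref{quest:symmetries}: $\mathcal{V}$ is the extendibility symmetry, realised as the permutation group $S_n$ acting only on the copies $\bar{P}_{1\ldots m}^{(1)}, \ldots, \bar{P}_{1\ldots m}^{(n)}$ (permuting the columns of \autoref{fig:symmetry}), and $\mathcal{U}$ is the iid-symmetry in the variant of \autoref{example}, i.e. the $S_m$ acting \emph{diagonally} on $P_{1\ldots m}$ and $\bar{P}_{1\ldots m}^{(1)}$ simultaneously (hence \emph{not} acting on the copies $\bar{P}_{1\ldots m}^{(2)},\ldots,\bar{P}_{1\ldots m}^{(n)}$). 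The central observation is that, as groups of permutations of the subsystems in \autoref{fig:symmetry}, $\mathcal{U}$ and $\mathcal{V}$ do not commute: conjugating a diagonal transposition $\sigma\in\mathcal{U}$ on the pair $(P_{1\ldots m},\bar{P}_{1\ldots m}^{(1)})$ by a transposition $\tau\in\mathcal{V}$ swapping columns $1$ and $2$ produces the diagonal transposition on the pair $(P_{1\ldots m},\bar{P}_{1\ldots m}^{(2)})$, which is a new element not contained in $\mathcal{U}$.

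Given this, the argument proceeds exactly along the lines of \autoref{example}. The plan is to exhibit an element of $\mathcal{U}\mathcal{V}$ that is not in $\mathcal{V}\mathcal{U}$ — equivalently, to use \autoref{thm:global_symmetry_group}, which identifies the group \emph{generated} by $\mathcal{U}$ and $\mathcal{V}$ as $G \cong A_m \rtimes (S_m \wr S_n)$, and to compare cardinalities. By \autoref{thm:symmetry_reduction_combined}~(c), if $\mathcal{W}=\mathcal{U}\mathcal{V}$ were a group it would satisfy $|\mathcal{W}| = |\mathcal{U}|\cdot|\mathcal{V}|/|\mathcal{U}\cap\mathcal{V}|$. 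Here $|\mathcal{U}| = m!$, $|\mathcal{V}| = n!$, and $\mathcal{U}\cap\mathcal{V} = \{\operatorname{id}\}$ (the diagonal $S_m$ acts nontrivially on $P_{1\ldots m}$, which $\mathcal{V}$ fixes pointwise), so the product formula would force $|\mathcal{W}| = m!\,n!$. On the other hand, if $\mathcal{W}$ is a group containing both $\mathcal{U}$ and $\mathcal{V}$ it must contain the group they generate, which by \autoref{thm:global_symmetry_group} has order $\tfrac{m!}{2}\cdot(m!)^n\cdot n!$. For $m\geq 3$ and $n\geq 2$ this strictly exceeds $m!\,n!$, a contradiction; hence $\mathcal{W}$ is not a group, and in particular the SDP does not exhibit a joint symmetry in the sense of \autoref{def:strong_joint_symmetry}.

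For the regime $m=2$ one should either exclude it explicitly (the iid-symmetry is then an $S_2$ and the phenomenon degenerates, matching the discussion in \autoref{sec:symmetry_reduction} that $m=2$ is already trivial for AQEC) or run the direct computation of \autoref{example}, where $\mathcal{U}\mathcal{V}$ is exhibited to have eight elements while failing the subgroup test. I would present the cardinality argument as the main line for $m\geq 3$, $n\geq 2$, and simply cite \autoref{example} for the base case $m=n=2$, noting it is the smallest instance.

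The main obstacle is a bookkeeping one rather than a conceptual one: one must be careful that the two group actions under consideration are precisely the ones described — the diagonal $S_m$ of \autoref{example} (coupling $P$ with the \emph{first} copy of $\bar P$ only), as opposed to the commuting action of \autoref{prop:enlarged_invariance} (which acts on \emph{all} rows uniformly and therefore \emph{does} commute with $S_n$). The whole point of \autoref{cor:no_go}, and the reason it is stated as a negative companion to the positive results of \autoref{sec:commutative_relaxation}, is that this choice of realisation is exactly what breaks the hypothesis $\mathcal{U}\mathcal{V}=\mathcal{V}\mathcal{U}$ of \autoref{thm:symmetry_reduction_combined}~(a); so the proof must make clear which representation is meant and why \autoref{thm:global_symmetry_group} applies to it verbatim.
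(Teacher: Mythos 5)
Your proposal is correct and follows essentially the same route as the paper: both invoke \autoref{thm:global_symmetry_group} for the order $\tfrac{m!}{2}(m!)^n\,n!$ of the generated group and contrast it with the product formula of \autoref{thm:symmetry_reduction_combined}~(c), which would cap a group of the form $\mathcal{U}\mathcal{V}$ at $m!\,n!$ elements. Your write-up is in fact slightly more careful than the paper's terse argument, since you distinguish the product set $\mathcal{U}\mathcal{V}$ from the group it generates (containment of the latter in any group $\mathcal{W}\supseteq\mathcal{U},\mathcal{V}$) and you flag the degenerate small-$m$ case, which the paper handles only implicitly via \autoref{example}.
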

\begin{proof}
    The \autoref{thm:symmetry_reduction_combined}~(c) states that if a finite group $\mathcal{W}$ is of the form $\mathcal{U} \mathcal{V}$, then the product formula would hold. In the situation of \autoref{thm:symmetry_reduction_combined} we have $\mathcal{U} \cong S_n$ and $\mathcal{V} \cong S_m$, but $\vert \mathcal{W}\vert =\frac{m!}{2} (m!)^{n} n!$, what is strictly larger then $\vert \mathcal{U}\vert \vert \mathcal{V}\vert = n! m!$.
\end{proof}

To conclude the theoretical part of this section, we recall that we defined $\mathcal{W}$ in \autoref{def:strong_joint_symmetry} to be a joint symmetry of the SDP when $\mathcal{W}$ is a group. Part \autoref{thm:symmetry_reduction_combined}~(d) precisely justifies this definition, as it states that whenever $\mathcal{W}$ is a group, we can use it as the group for symmetry reduction, i.e., optimizing in its commutant. Importantly, the strong decomposition theorem (see e.g. \cite[Chap.~2]{Bump2013-ny}) requires this group structure, which confirms that the group case is indeed the relevant one. As already mentioned, $\mathcal{U}$ and $\mathcal{V}$ may be group-theoretic complements of each other. However, this general case is difficult to classify. Therefore, in \autoref{thm:symmetry_reduction_combined}~(a), we provided a general criterion for when $\mathcal{W}$ is a group, and in parts (b) and (c) we presented relevant special cases. To conclude, we recall two interesting examples:\\
 
\begin{enumerate}
    \item In the context of error correction, \autoref{example} highlights the importance of the consideration of the action of the iid-symmetry in AQEC. In the case of \autoref{example}, we are dealing with a dihedral group of order $ 8 $. According to \autoref{thm:symmetry_reduction_combined}, we need to investigate whether a normal subgroup of order $ 2 $ exists or alternatively to disprove the statement, we could use the fact that the cardinality of the generated group is too large. Both questions can be answered affirmatively, as both subgroups $ \mathcal{U} $ and $ \mathcal{V} $ represent reflections within this dihedral group. However, in the dihedral group, only the subgroups of the cyclic group of rotations are normal subgroups. Moreover, the cardinality of $\mathcal{U}\mathcal{V}$ should be $\vert \mathcal{U}\vert \vert \mathcal{V}\vert = 2\cdot 2$, but it is $8$ in fact, contradicting \autoref{thm:symmetry_reduction_combined}~(d) and thus making joint symmetry reduction in the sense of \autoref{def:strong_joint_symmetry} impossible. However, the action discussed in \autoref{sec:commutative_relaxation} can be identified with the case of \autoref{lem:strong_symmetry} (b) and thus a bona-fide joint symmetry reduction in our extended framework is possible as we have done in \autoref{sec:commutative_relaxation}.
    
    \item Assume the following SDP for $\rho_A = \operatorname{tr}_{A_2^n}[\rho_{A_1^n}]$:
\begin{equation}
\begin{aligned}
    \sup & \ \operatorname{tr}[H\rho_A] \\
    \text{s.t.} & \ \operatorname{tr}[\rho_{A} C_j] \leq b_j, \quad 1\leq j \leq m, \\
    &\ H, C_j \ \text{are invariant under a group} \ G, \\
    &\ \rho_{A_1^n} \ \text{is invariant under the natural permutation symmetry} \ S_n.
\end{aligned}
\end{equation}
We can interpret $\operatorname{tr}[H\rho_{A}]$, and similarly $\operatorname{tr}[C_j \rho_A]$, as $\operatorname{tr}[H \otimes \mathds{1}_{A_2^n} \rho_{A_1^n}]$, and thus $H \otimes \mathds{1}_{A_2^n}$ carries a symmetry of the form (with $G_i \cong G$):
\begin{equation}
\begin{aligned}
    G_1 \times \cdots \times G_n &\to \mathcal{U}(\mathcal{H}_{A_1^n}) \\
    (g_1,\ldots,g_n) &\mapsto U_{g_1} \otimes \cdots \otimes U_{g_n}.
\end{aligned}
\end{equation}
Furthermore, $\rho_{A_1^n}$ carries a permutation symmetry, permuting all the $\mathcal{H}_{A_i}$ systems. Since the wreath product is defined as $G \wr S_n \coloneqq G^{n} \rtimes S_n$, this problem admits exactly a semidirect product structure as considered in \autoref{thm:symmetry_reduction_combined}~(b) and \autoref{thm:symmetry_reduction_combined}~(d) implies that an optimal solution already lies in the commutant of $\mathcal{M}_{G \wr S_n}^\prime(\mathcal{H})$.

From this perspective, the difficulty for AQEC is precisely the side information given by Alice’s system, as this structure somehow destroys the usual wreath product structure of the symmetry group generated by the iid-symmetry and the extendibility symmetry as we can observe from this example here where side-information is missing. Actually, we can identify in \autoref{thm:global_symmetry_group} a "hidden" subgroup $A_m$, which arises from the interplay between the group $G$ acting on the side-information and the extendibility symmetry.

\end{enumerate}


\section*{Acknowledgments}

We thank Hoang Ta for discussions about his work \cite{chee2024efficient}. GK sincerely thanks Benjamin Sambale for fruitful discussions. GK thanks Aram Harrow for a fruitful discussion at Beyond IID 2024 in Illinois, USA. GK, JZ, and MB acknowledge support from the Excellence Cluster - Matter and Light for Quantum Computing (ML4Q), and funding by the European Research Council (ERC Grant Agreement No.~948139). OF acknowledges funding by the European Research Council (ERC Grant Agreement No.~851716).


\bibliography{main}
\bibliographystyle{ultimate}

\newpage
\appendix


\section{Symmetries of the Choi matrix}
\label{subsec:symmetries_of_choi}
We dedicate this section to symmetries of $H = C_{\operatorname{id}_L \otimes \mathcal{N}}$. To formalize, we introduce covariant channels as quantum channels preserving symmetry under a unitary transformation group.

\begin{definition}[Covariant Channel]\label{def:covariant_channels}
    A quantum channel $ \mathcal{M}: \mathcal{S}(\mathcal{H}) \to \mathcal{S}(\mathcal{H}) $ is said to be \emph{covariant} with respect to a subgroup $ \mathcal{U} $ of the unitary group $ \mathcal{U}(\mathcal{H}) $ if 
    \begin{align}
    \mathcal{M}(U \cdot U^\dagger) = U \mathcal{M}(\cdot) U^\dagger, \quad \text{for all} \ U \in \mathcal{U}. 
    \end{align}
\end{definition}

Covariant channels are the natural generalization of the invariance of the identity channel with respect to all symmetries. As can easily be checked, the Choi-matrix will then become invariant to the product representation of the group $\mathcal{U}$ from \autoref{def:covariant_channels}.

\begin{proposition}\label{prop:covariant-choi}
Let $\mathcal{H}$ be Hilbert space and $\mathcal{V}$ be a subgroup of the unitary group $\mathcal{U}(\mathcal{H})$ acting on $\mathcal{H}$. 
Suppose $\mathcal{M}: \mathcal{S}(\mathcal{H}) \to \mathcal{S}(\mathcal{H})$ is a channel such that for every $V \in \mathcal{V}$,
\begin{align}
   V \mathcal{M}(\rho) V^\dagger = \mathcal{M}\!(V \rho V^\dagger), 
   \quad \text{for all} \ \rho \in \mathcal{S}(\mathcal{H}).
\end{align}
Then its Choi matrix $C_\mathcal{M}$ is invariant under the unitary action $(V \otimes V)[\cdot](V^\dagger \otimes V^\dagger)$, that is,
\begin{align}
   (V \otimes V)C_\mathcal{M} (V^\dagger \otimes V^\dagger) = C_\mathcal{M}
   \quad\text{for all } V \in \mathcal{V}.
\end{align}
\end{proposition}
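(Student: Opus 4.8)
The plan is to unwind the Choi correspondence and to transport both copies of $V$ through the channel $\mathcal{M}$ using the covariance hypothesis, reducing the claim to a known symmetry of the maximally entangled state. Fix the convention $C_{\mathcal{M}} = (\mathcal{M}\otimes\operatorname{id})(\tau)$, where $\operatorname{id}$ acts on a reference copy of $\mathcal{H}$ and $\tau = C_{\operatorname{id}}$ is the (normalized) maximally entangled state, i.e.\ the Choi matrix of the identity channel. The proof then rests on two ingredients: (i) the covariance identity $V\,\mathcal{M}(\sigma)\,V^{\dagger} = \mathcal{M}(V\sigma V^{\dagger})$ for every $\sigma \in \mathcal{S}(\mathcal{H})$, which is the hypothesis; and (ii) the symmetry of the seed state $\tau$, which is precisely the isotropic--state symmetry recalled in \autoref{subsec:types_of_symmetries}~(b) and can also be read off directly from the ricochet identity $(A\otimes\mathds{1})\ket{\Omega}=(\mathds{1}\otimes A^{T})\ket{\Omega}$ for the unnormalized maximally entangled vector $\ket{\Omega}$.

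Concretely, I would first conjugate, $(V\otimes V)\,C_{\mathcal{M}}\,(V^{\dagger}\otimes V^{\dagger}) = (V\otimes V)\,(\mathcal{M}\otimes\operatorname{id})(\tau)\,(V^{\dagger}\otimes V^{\dagger})$, and then split the conjugation factorwise. The copy of $V$ on the reference system commutes with $\operatorname{id}$ and is pushed straight onto $\tau$; the copy of $V$ on the output system is absorbed into $\mathcal{M}$ by covariance, $V\,\mathcal{M}(\cdot)\,V^{\dagger} = \mathcal{M}(V\cdot V^{\dagger})$, and is likewise pushed onto $\tau$. This gives $(V\otimes V)\,C_{\mathcal{M}}\,(V^{\dagger}\otimes V^{\dagger}) = (\mathcal{M}\otimes\operatorname{id})\!\bigl((V\otimes V)\,\tau\,(V^{\dagger}\otimes V^{\dagger})\bigr)$; invoking the invariance of the maximally entangled seed state under the relevant joint local unitary action, the argument of $\mathcal{M}\otimes\operatorname{id}$ is again $\tau$, and the expression collapses to $(\mathcal{M}\otimes\operatorname{id})(\tau)=C_{\mathcal{M}}$, as claimed. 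The same argument runs through a Kraus decomposition $\mathcal{M}(\cdot)=\sum_{k}K_{k}(\cdot)K_{k}^{\dagger}$, writing $C_{\mathcal{M}}=\sum_{k}(K_{k}\otimes\mathds{1})\tau(K_{k}\otimes\mathds{1})^{\dagger}$ and using that covariance forces $\{VK_{k}V^{\dagger}\}_{k}$ to be a Kraus family of the same channel.

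The only delicate point\,---\,and where I would be most careful\,---\,is the bookkeeping for the reference copy: by the ricochet identity the copy of $V$ on the output side reappears on the reference side as $V^{T}$ (equivalently $\overline{V}^{\dagger}$) rather than literally as $V$, so one has to confirm that this is consistent with the precise form of the invariance asserted for $\tau$ in the Choi convention in force here (it is, and this is exactly the content of \autoref{subsec:types_of_symmetries}~(b)). Once the conventions\,---\,which tensor factor carries the channel and which carries the reference, and hence which joint unitary action fixes $\tau$\,---\,are pinned down, each step above is a one-line manipulation, so I expect no essential difficulty beyond this alignment.
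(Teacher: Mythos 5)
Your overall route\,---\,unwinding the Choi correspondence, pushing the output-side unitary through $\mathcal{M}$ via covariance, and reducing to a symmetry of the maximally entangled seed state\,---\,is exactly the computation the paper compresses into its one-line remark that the claim follows from the Choi isomorphism, so in spirit you are doing the same thing. However, the "delicate point" you flag at the end is not actually resolved, and as written your chain of equalities does not close. With the convention you fix, $C_{\mathcal{M}}=(\mathcal{M}\otimes\operatorname{id})(\tau)$ with $\tau$ the normalized maximally entangled state, the seed is invariant under $V\otimes\overline{V}$, not under $V\otimes V$: the ricochet identity gives $(V\otimes V)|\Omega\rangle=(VV^{T}\otimes\mathds{1})|\Omega\rangle$, and e.g.\ for a qubit with $V=\operatorname{diag}(1,i)$ one has $(V\otimes V)(|00\rangle+|11\rangle)=|00\rangle-|11\rangle$. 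Hence the penultimate step, where you replace $(V\otimes V)\,\tau\,(V^{\dagger}\otimes V^{\dagger})$ by $\tau$, fails; indeed, under your convention the statement in the $V\otimes V$ form is false already for $\mathcal{M}=\operatorname{id}$ and $\mathcal{V}=\mathcal{U}(\mathcal{H})$. Your appeal to \autoref{subsec:types_of_symmetries}~(b) does not rescue this\,---\,that subsection asserts precisely the $U\otimes\overline{U}$ invariance, i.e.\ it confirms the mismatch rather than the consistency you claim. The same problem reappears in your Kraus-decomposition variant: conjugating $(K_{k}\otimes\mathds{1})\tau(K_{k}\otimes\mathds{1})^{\dagger}$ by $V\otimes V$ and ricocheting produces $VK_{k}V^{T}$ on the output leg, which is not the covariance-transformed Kraus family $VK_{k}V^{\dagger}$.

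What a correct write-up needs is to commit to one of two consistent options. Either keep your convention $C_{\mathcal{M}}=(\mathcal{M}\otimes\operatorname{id})(\tau)$ and prove the covariance statement in the form $(V\otimes\overline{V})\,C_{\mathcal{M}}\,(V\otimes\overline{V})^{\dagger}=C_{\mathcal{M}}$ (which is the version consistent with the isotropic-state symmetry in \autoref{subsec:types_of_symmetries}~(b)); then every step of your second paragraph goes through verbatim with $\overline{V}$ on the reference leg. Or, if the literal $V\otimes V$ form of the proposition is to be proved, adopt the transposed-reference Choi convention, $C_{\mathcal{M}}\propto\sum_{ij}\mathcal{M}(|i\rangle\langle j|)\otimes|j\rangle\langle i|$, whose seed is proportional to the swap operator and genuinely is $V\otimes V$-invariant; your argument then closes with the same two ingredients. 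As it stands, the proof asserts rather than verifies the one step on which the whole statement hinges, so this is a genuine gap, albeit one rooted in a convention ambiguity that the paper itself leaves implicit.
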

\begin{proof}
    This is an easy consequence of the Choi isomorphism.
\end{proof}

\begin{proposition}\label{cor:better_bound_mutual_info}
Let $\rho_{ABC} \in \mathcal{S}(\mathcal{H}_A \otimes \mathcal{H}_B\otimes \mathcal{H}_C)$ be a quantum state such that
    \begin{align}\label{eqn:cond_cor_mutual_info_bound}
        \Tr_A\left[\rho_{ABC}\right] = \rho_B\otimes\rho_C.\,
    \end{align}
    Then, we have
    \begin{align}
        I(AB:C)_{\rho_{ABC}} \leq 2 \log \vert A \vert. 
    \end{align}
\end{proposition}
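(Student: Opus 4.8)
The plan is to bound the mutual information $I(AB:C)_{\rho}$ by expanding it into entropies and using the structural constraint \eqref{eqn:cond_cor_mutual_info_bound} together with subadditivity. First I would write
\begin{align}
I(AB:C)_\rho = H(AB)_\rho + H(C)_\rho - H(ABC)_\rho,
\end{align}
where $H$ denotes the von Neumann entropy. The key observation is that the hypothesis $\Tr_A[\rho_{ABC}] = \rho_B \otimes \rho_C$ forces a product structure on the $BC$ marginal, which is exactly what is needed to control $H(ABC)_\rho$ from below.

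The main step is to lower bound $H(ABC)_\rho$. Using the chain rule, $H(ABC)_\rho = H(BC)_\rho + H(A|BC)_\rho$, and since the conditional entropy satisfies $H(A|BC)_\rho \geq -\log|A|$ (the Araki–Lieb / negativity-of-conditional-entropy bound, valid for any tripartite state), we get
\begin{align}
H(ABC)_\rho \geq H(BC)_\rho - \log|A| = H(B)_\rho + H(C)_\rho - \log|A|,
\end{align}
where the last equality uses $\rho_{BC} = \rho_B \otimes \rho_C$ from \eqref{eqn:cond_cor_mutual_info_bound}. Plugging this into the expansion of the mutual information yields
\begin{align}
I(AB:C)_\rho \leq H(AB)_\rho + H(C)_\rho - H(B)_\rho - H(C)_\rho + \log|A| = H(AB)_\rho - H(B)_\rho + \log|A|.
\end{align}
Finally, $H(AB)_\rho - H(B)_\rho = H(A|B)_\rho \leq \log|A|$, which gives the claimed bound $I(AB:C)_\rho \leq 2\log|A|$.

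The only mild subtlety\,---\,and the step I would be most careful about\,---\,is making sure the direction of the conditional-entropy bounds is applied consistently: one needs $H(A|BC)_\rho \geq -\log|A|$ on the lower-bound side for $H(ABC)$, and $H(A|B)_\rho \leq \log|A|$ on the upper-bound side. Both are standard consequences of the fact that $|H(A|X)_\rho| \leq \log|A|$ for any system $X$, so there is no real obstacle; the argument is essentially a two-line entropy manipulation once \eqref{eqn:cond_cor_mutual_info_bound} is used to collapse $H(BC) = H(B) + H(C)$. One could alternatively phrase the whole thing via $I(AB:C) = I(B:C) + I(A:C|B)$ and note $I(B:C)_\rho = 0$ while $I(A:C|B)_\rho \leq 2\log|A|$, but the entropy-counting route above is the most transparent.
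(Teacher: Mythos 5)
Your argument is correct and is essentially the paper's own proof: both expand $I(AB:C)$ into entropies, use the hypothesis to collapse $H(BC)=H(B)+H(C)$, lower bound $H(ABC)$ via Araki--Lieb (your $H(A|BC)\geq -\log|A|$ is exactly that bound in conditional-entropy form), and upper bound the remaining terms via subadditivity (your $H(A|B)\leq\log|A|$). The only cosmetic difference is that you phrase the two dimension bounds through conditional entropies, whereas the paper keeps $S(A)$ explicit and only applies $S(A)\leq\log|A|$ at the end.
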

\begin{proof}
    Write $I(AB:C) = S(AB) + S(C) - S(ABC) \leq S(A) + S(B) + S(C)-S(ABC)$ and apply Araki-Lieb for $S(ABC)$ such that 
    \begin{align}
        S(ABC) \geq \vert S(BC) - S(A)\vert. 
    \end{align}
    Using $S(BC) = S(B) + S(C)$ by assumption yields the desired result. 
\end{proof}

\end{document}